\newif\ifFull\Fulltrue
\title{Static Race Detection and Mutex Safety and Liveness for Go Programs\full{(extended version)}{}} 
\author{Julia Gabet}{Imperial College London, United Kingdom}{j.gabet18@imperial.ac.uk}{https://orcid.org/0000-0001-9944-9497}{}
\author{Nobuko Yoshida}{Imperial College London, United Kingdom}{n.yoshida@imperial.ac.uk}{https://orcid.org/0000-0002-3925-8557}{}
\authorrunning{J. Gabet and N. Yoshida}
\keywords{Go language, behavioural types, race detection,
  happens-before relation, safety, liveness}
\full{\hideLIPIcs}{} 
\newcommand{\lstCodeSize}{\scriptsize}
\newcommand{\lstPrimitiveStyle}{\color{blue}\bfseries}
\newcommand{\lstNumberStyle}{\tiny\sffamily\color{gray}}
\lstdefinestyle{nonumber}{%
  numbers=none,
  xleftmargin=1em,
  framexleftmargin=1em,
}
\newcommand{\ifempty}[3]{
  \ifthenelse{\isempty{#1}}{#2}{#3}
}
\theoremstyle{definition}
\newcommand{\godeltwo}{Godel2}
\newcommand{\subs}[2]{\left\{\nicefrac{#1}{#2}\right\}}
\newcommand{\domain}[1]{\mathsf{dom}(#1)}
\newcommand{\ov}[1]{\overline{#1}}
\newcommand{\tra}[1]{\xrightarrow{#1}}
\newcommand{\hb}[3][]{\ifthenelse{\isempty{#1}}{}{#1\triangleright}#2\mapsto #3}
\newcommand{\nothb}[3][]{\neg(\hb[#1]{#2}{#3})}
\newcommand{\with}{\mathbin{\binampersand}}
\newcommand{\G}{\Gamma}
\newcommand{\Gi}{\G '}
\newcommand{\Gii}{\G ''}
\newcommand{\vval}[1][]{\ifthenelse{\isempty{#1}}{v}{v_{#1}}}
\newcommand{\vvali}[1][]{\ifthenelse{\isempty{#1}}{v'}{v'_{#1}}}
\newcommand{\expr}[1][]{\ifthenelse{\isempty{#1}}{e}{e_{#1}}}
\newcommand{\expri}[1][]{\ifthenelse{\isempty{#1}}{e'}{e'_{#1}}}
\newcommand{\elli}{\ell '}
\newcommand{\psend}[2]{#1 ! \langle #2 \rangle }
\newcommand{\ENCan}[1]{\langle #1 \rangle}
\newcommand{\precv}[2]{#1 ? (#2) }
\newcommand{\sel}[2]{\m{select}\{#1\}_{#2} }
\newcommand{\newch}[1]{\m{newchan}(#1)}
\newcommand{\close}[1]{\m{close}\,#1}
\newcommand{\defcall}[2]{#1\langle #2 \rangle}
\newcommand{\buff}[3]{{#2}\ENCan{{#3}}{::}#1}
\newcommand{\closedbuff}[3]{{#2}^\star\ENCan{#3}{::}{#1}}
\newcommand{\buflen}[1]{|#1|}
\newcommand{\dataAct}[2]{#1,#2}
\newcommand{\ruleChanSendReq}{snd}
\newcommand{\ruleChanReceiveReq}{rvc}
\newcommand{\ruleChanCloseReq}{end}
\newcommand{\ruleChanCloseAck}{buf}
\newcommand{\ruleChanSendAck}{push}
\newcommand{\ruleChanReceiveAck}{pop}
\newcommand{\ruleChanClosedReceiveAck}{cpop}
\newcommand{\ruleChanClosedDefaultValue}{cld}
\newcommand{\ruleChanClose}{close}
\newcommand{\ruleChanSyncCom}{scom}
\newcommand{\ruleChanAsyncSend}{out}
\newcommand{\ruleChanAsyncReceive}{in}
\newcommand{\ruleMutLockReq}{lck}
\newcommand{\ruleMutUnlockReq}{ulck}
\newcommand{\ruleRMutRLockReq}{rlck}
\newcommand{\ruleRMutRUnlockReq}{rulck}
\newcommand{\ruleMutLockAck}{m-lck}
\newcommand{\ruleMutUnlockAck}{m-ulck}
\newcommand{\ruleRMutLockAck}{rw-lck}
\newcommand{\ruleRMutUnlockAck}{rw-ulck}
\newcommand{\ruleRMutRLockAck}{rw-rlck}
\newcommand{\ruleRMutRUnlockAck}{rw-rulck}
\newcommand{\ruleRMutLockStageAck}{rw-wait}
\newcommand{\ruleRMutStagedRUnlockAck}{rw-wulck}
\newcommand{\ruleMutLock}{c-lck}
\newcommand{\ruleMutUnlock}{c-ulck}
\newcommand{\ruleRMutLock}{c-lck}
\newcommand{\ruleRMutUnlock}{c-ulck}
\newcommand{\ruleRMutRLock}{c-rlck}
\newcommand{\ruleRMutRUnlock}{c-rulck}
\newcommand{\ruleRMutLockStage}{c-wait}
\newcommand{\ruleNewChan}{newc}
\newcommand{\ruleNewMut}{newm}
\newcommand{\ruleNewRMut}{newrwm}
\newcommand{\ruleNewMem}{newv}
\newcommand{\ruleSilentAction}{tau}
\newcommand{\ruleParLeft}{par-l}
\newcommand{\ruleParRight}{par-r}
\newcommand{\rulePar}{par}
\newcommand{\ruleAlphaEquiv}{alpha}
\newcommand{\ruleRestrict}{res}
\newcommand{\ruleRestrictFree}{res1}
\newcommand{\ruleRestrictBind}{res2}
\newcommand{\ruleSelectConstruct}{bra}
\newcommand{\ruleITEConstruct}{sel}
\newcommand{\ruleITETrue}{ift}
\newcommand{\ruleITEFalse}{iff}
\newcommand{\ruleDefinition}{def}
\newcommand{\ruleContinue}{con}
\newcommand{\ruleTransitivity}{tra}
\newcommand{\ruleUnlockLock}{u-l}
\newcommand{\ruleUnlockRLock}{u-rl}
\newcommand{\ruleRUnlockLock}{ru-l}
\newcommand{\ruleLockRLock}{l-rl}
\newcommand{\ruleSendReceiveAsync}{buf}
\newcommand{\ruleSendReceiveSync}{scom}
\newcommand{\ruleCloseBeforeDefault}{end}
\newcommand{\ruleReceiveFromClosed}{cl-rcv}
\newcommand{\ruleSelectSendReceiveAsync}{buf-snd}
\newcommand{\ruleSendSelectReceiveAsync}{buf-rcv}
\newcommand{\ruleSelectSendReceiveSync}{scom-snd}
\newcommand{\ruleSendSelectReceiveSync}{scom-rcv}
\newcommand{\ruleReduction}{red}
\newcommand{\ruleRestrictChan}{resc}
\newcommand{\ruleRestrictMem}{resv}
\newcommand{\ruleRestrictMut}{resm}
\newcommand{\ruleTypingPar}{parr}
\newcommand{\ruleTypingMut}{mut}
\newcommand{\ruleTypingMutLocked}{l-m}
\newcommand{\ruleTypingRMut}{rmut}
\newcommand{\ruleTypingRMutLocked}{l-rw}
\newcommand{\ruleTypingRMutStaged}{w-rw}
\newcommand{\ruleTypingChan}{buf}
\newcommand{\ruleTypingChanClosed}{c-buf}
\newcommand{\ruleTypingMem}{heap}
\newcommand{\ruleTypingCall}{var}
\newcommand{\ruleTypingZero}{zero}
\newcommand{\migoP}[1][]{\ifthenelse{\isempty{#1}}{P}{P_{#1}}}
\newcommand{\migoPi}[1][]{\ifthenelse{\isempty{#1}}{P'}{P'_{#1}}}
\newcommand{\migoPii}[1][]{\ifthenelse{\isempty{#1}}{P''}{P''_{#1}}}
\newcommand{\migoQ}[1][]{\ifthenelse{\isempty{#1}}{Q}{Q_{#1}}}
\newcommand{\migoQi}[1][]{\ifthenelse{\isempty{#1}}{Q'}{Q'_{#1}}}
\newcommand{\migoR}[1][]{\ifthenelse{\isempty{#1}}{R}{R_{#1}}}
\newcommand{\migoRi}[1][]{\ifthenelse{\isempty{#1}}{R'}{R'_{#1}}}
\newcommand{\migoX}[1][]{\ifthenelse{\isempty{#1}}{X}{X_{#1}}}
\newcommand{\migoXi}[1][]{\ifthenelse{\isempty{#1}}{X'}{X'_{#1}}}
\newcommand{\migoDef}[1][]{\ifthenelse{\isempty{#1}}{D}{D_{#1}}}
\newcommand{\migoDefi}[1][]{\ifthenelse{\isempty{#1}}{D'}{D'_{#1}}}
\newcommand{\migoProg}[1][]{\ifthenelse{\isempty{#1}}{\prog}{\prog_{#1}}}
\newcommand{\migoProgi}[1][]{\ifthenelse{\isempty{#1}}{\prog'}{\prog'_{#1}}}
\newcommand{\migoCont}{;}
\newcommand{\migoRes}[2]{(\nub #1)#2}
\newcommand{\migoPref}[1][]{\ifthenelse{\isempty{#1}}{\mu}{\mu_{#1}}}
\newcommand{\migoChanPref}[1][]{\ifthenelse{\isempty{#1}}{\pi}{\pi_{#1}}}
\newcommand{\migoMutPref}[1][]{\ifthenelse{\isempty{#1}}{\ell}{\ell_{#1}}}
\newcommand{\migoSel}[3]{\sel{#1\migoCont #2}{#3}}
\newcommand{\migoNewch}[3]{\newch{#1{:}#2,#3}}
\newcommand{\name}{u}
\newcommand{\varname}{x}
\newcommand{\varnamey}{y}
\newcommand{\nameVec}{\tilde{\name}}
\newcommand{\varyVec}[1][]{\ifthenelse{\isempty{#1}}{\tilde{\varnamey}}{\tilde{\varnamey}_{#1}}}
\newcommand{\channame}[1][]{\ifthenelse{\isempty{#1}}{c}{c_{#1}}}
\newcommand{\channamei}[1][]{\ifthenelse{\isempty{#1}}{c'}{c'_{#1}}}
\newcommand{\channamea}[1][]{\ifthenelse{\isempty{#1}}{a}{a_{#1}}}
\newcommand{\channameb}[1][]{\ifthenelse{\isempty{#1}}{b}{b_{#1}}}
\newcommand{\chanVec}{\tilde{\channame}}
\newcommand{\migotmp}[1][]{\ifthenelse{\isempty{#1}}{t}{t_{#1}}}
\newcommand{\actname}[1][]{\ifthenelse{\isempty{#1}}{o}{o_{#1}}}
\newcommand{\actnamei}[1][]{\ifthenelse{\isempty{#1}}{o'}{o'_{#1}}}
\newcommand{\actnameii}[1][]{\ifthenelse{\isempty{#1}}{o''}{o''_{#1}}}
\newcommand{\actnameVec}[1][]{\tilde{\actname[#1]}}
\newcommand{\actgen}[1][]{\ifthenelse{\isempty{#1}}{\alpha}{\alpha_{#1}}}
\newcommand{\actgeni}[1][]{\ifthenelse{\isempty{#1}}{\alpha'}{\alpha'_{#1}}}
\newcommand{\actgenb}[1][]{\ifthenelse{\isempty{#1}}{\beta}{\beta_{#1}}}
\newcommand{\acttau}{\tau}
\newcommand{\occurnone}{\ast}
\newcommand{\occurgen}[1][]{\ifthenelse{\isempty{#1}}{\iota}{\iota_{#1}}}
\newcommand{\occurgeni}[1][]{\ifthenelse{\isempty{#1}}{\iota'}{\iota'_{#1}}}
\newcommand{\occurleft}[1]{\ifthenelse{\isempty{#1}}{1.\occurgen}{1.{#1}}}
\newcommand{\occurright}[1]{\ifthenelse{\isempty{#1}}{2.\occurgen}{2.{#1}}}
\newcommand{\actleft}[1]{\m{left}\left(#1\right)}
\newcommand{\actright}[1]{\m{right}\left(#1\right)}
\newcommand{\tyact}[1][]{\ifthenelse{\isempty{#1}}{\ell}{\ell_{#1}}}
\newcommand{\tyacti}[1][]{\ifthenelse{\isempty{#1}}{\elli}{\elli_{#1}}}
\newcommand{\tysend}[1]{\overline{#1}}
\newcommand{\tyrecv}[1]{#1}
\newcommand{\tsend}[2]{\tysend{#1}} 
\newcommand{\trecv}[2]{\tyrecv{#1}} 
\newcommand{\ctxComp}{,}
\newcommand{\tch}[3]{#1 \oplus \{#2\}_{#3}}
\newcommand{\tbr}[3]{#1 \with \{#2\}_{#3}}
\newcommand{\tdefcall}[2]{\typevarfont{#1}\langle #2 \rangle}
\newcommand{\End}{\mathsf{end}}
\newcommand{\ts}{\blacktriangleright}
\newcommand{\fgo}{{\sf MiGo}\xspace}
\newcommand{\asyncfgo}{{\sf AMiGo}\xspace}
\newcommand{\fgom}{{\sf MiGo}${}^+$\xspace}
\newcommand{\llab}{\m{L}}
\newcommand{\rlab}{\m{R}}
\newcommand{\iflab}[2]{#1 \!\cdot\! #2}
\newcommand{\traceset}{\mathbb{T}}
\newcommand{\clbarb}[1]{\labclose{#1}}
\newcommand{\cclbarb}[1]{\labclsnd{#1}}
\newcommand{\stbarb}[2]{(\labstore{#1},#2)}
\newcommand{\ldbarb}[2]{\labload{#1}}
\newcommand{\heapbarb}[1]{\labheap{#1}}
\newcommand{\lckbarb}[1]{\lablock{#1}}
\newcommand{\ulckbarb}[1]{\labunlock{#1}}
\newcommand{\rlckbarb}[1]{\labrlock{#1}}
\newcommand{\rulckbarb}[1]{\labrunlock{#1}}
\newcommand{\mutbarb}[1]{\labmut{#1}}
\newcommand{\lckmutbarb}[1]{\lablckmut{#1}}
\newcommand{\rmutbarb}[2]{\labrmut{#1}{}}
\newcommand{\waitrmutbarb}[2]{\labwaitrmut{#1}{}}
\newcommand{\waitmutbarb}[1]{\labwaitmut{#1}}
\newcommand{\ac}{\m{AC}}
\newcommand{\ic}{\m{Inf}}
\newcommand{\convm}{\m{May}{\mbox{\scriptsize $\Downarrow$}}}
\newcommand{\infcond}{\m{InfCond}}
\newcommand{\term}{\m{Terminate}}
\newcommand{\ltsT}{\mathcal{T}}
\newcommand{\ltsStates}{\mathcal{S}}
\newcommand{\ltsTransitions}{\mathcal{A}}
\newcommand{\ltsFlags}[1]{\mathcal{F}(#1)}
\newcommand{\muCol}[1]{{\color{blue}#1}}
\newcommand{\muFmt}[1]{\muCol{\mathsf{#1}}}
\newcommand{\muVar}[1][]{\muCol{\ifempty{#1}{\muFmt{Z}}{\muFmt{Z}_{#1}}}}
\newcommand{\muAct}[1][]{\muCol{\ifempty{#1}{\alpha}{\alpha_{#1}}}}
\newcommand{\muSync}{\muCol{\mathbb{S}}}
\newcommand{\muPred}[1][]{\muCol{\ifempty{#1}{\phi}{\phi_{#1}}}}
\newcommand{\muNeg}[1]{\muCol{\neg{#1}}}
\newcommand{\muAnd}{\mathbin{\muCol{\land}}}
\newcommand{\muConj}[2]{\muCol{\bigwedge_{#1}{#2}}}
\newcommand{\muSome}[2]{\muCol{\left\langle {#1} \right\rangle {#2}}}
\newcommand{\muAll}[2]{\muCol{\left[{#1}\right]{#2}}}
\newcommand{\muTrue}{\muCol{\top}}
\newcommand{\muGFP}[2]{\muCol{\nu{#1}\mathbin{\!.\!}{#2}}}
\newcommand{\muLFP}[2]{\muCol{\mu{#1}\mathbin{\!.\!}{#2}}}
\newcommand{\muOr}{\mathbin{\muCol{\lor}}}
\newcommand{\muFalse}{\muCol{\bot}}
\newcommand{\muImplies}{\mathbin{\muCol{\Rightarrow}}}
\newcommand{\muSomeNeg}[2]{\muCol{\langle -{#1} \rangle {#2}}}
\newcommand{\muAllNeg}[2]{\muCol{[-{#1}]{#2}}}
\newcommand{\muWordEmpty}[1][]{\muCol{\epsilon}}
\newcommand{\EqTypes}[1][]{\ifthenelse{\isempty{#1}}{\mathbfsf{T}}{\mathbfsf{T}_{#1}}}
\newcommand{\typrefix}[1][]{\ifthenelse{\isempty{#1}}{\kappa}{\kappa_{#1}}}
\newcommand{\tymemprefix}[1][]{\ifthenelse{\isempty{#1}}{\vartheta}{\vartheta_{#1}}}
\newcommand{\tymutprefix}[1][]{\ifthenelse{\isempty{#1}}{\xi}{\xi_{#1}}}
\newcommand{\labtysnd}[1]{\overline{#1}}
\newcommand{\labtyrcv}[1]{{#1}}
\newcommand{\labtystore}[2]{(\labstore{#1},#2)}
\newcommand{\labtyload}[1]{\labload{#1}}
\newcommand{\labtyheap}[1]{\labheap{#1}}
\newcommand{\labtylock}[1]{\lablock{#1}}
\newcommand{\labtyunlock}[1]{\labunlock{#1}}
\newcommand{\labtyrlock}[1]{\labrlock{#1}}
\newcommand{\labtyrunlock}[1]{\labrunlock{#1}}
\newcommand{\tyvary}{y}
\newcommand{\tyvarz}{z}
\newcommand{\tyvaryVec}[1][]{\ifthenelse{\isempty{#1}}{\tilde{\tyvary}}{\tilde{\tyvary}_{#1}}}
\newcommand{\typevarfont}[1]{\mathbfsf{#1}}
\newcommand{\typevara}{\typevarfont{t}}
\newcommand{\tyclosedbuffer}[1]{#1^\star}
\newcommand{\tyheap}[1]{\labheap{#1}}
\newcommand{\labsync}[1]{\tau_{#1}}
\newcommand{\labclose}[1]{\End[#1]}
\newcommand{\labclosedual}[1]{\overline{\End}[#1]}
\newcommand{\labclsnd}[1]{\tyclosedbuffer{#1}}
\newcommand{\labload}[1]{\mathsf{r}\langle #1\rangle}
\newcommand{\labstore}[1]{\mathsf{w}\langle #1\rangle}
\newcommand{\labloaddual}[1]{\overline{\labload{#1}}}
\newcommand{\labstoredual}[1]{\overline{\labstore{#1}}}
\newcommand{\labheap}[1]{#1^{\mbox{\tiny${\blacksquare}$}}}
\newcommand{\lablock}[1]{\mathsf{l}\langle #1 \rangle}
\newcommand{\labunlock}[1]{\mathsf{ul}\langle #1 \rangle}
\newcommand{\labrlock}[1]{\mathsf{rl}\langle #1 \rangle}
\newcommand{\labrunlock}[1]{\mathsf{rul}\langle #1 \rangle}
\newcommand{\labmut}[1]{\ulcorner #1 \urcorner}
\newcommand{\lablckmut}[1]{\ulcorner #1 \urcorner^{\star}}
\newcommand{\labrmut}[2]{\llcorner #1 \lrcorner_{#2}}
\newcommand{\labwaitrmut}[2]{\llcorner #1 \lrcorner^{\blacktriangledown}_{#2}}
\newcommand{\labwaitmut}[1]{\labrmut{#1}{}^{\blacktriangle}}
\newcommand{\labspawn}[1]{\mathit{sp}}
\newcommand{\semty}[1]{\xrightarrow{#1}}
\newcommand{\closedchans}{{N}} 
\newcommand{\typesem}[2]{#2}
\newcommand{\typesemF}[1]{\typesem{\closedchans}{#1}}
\newcommand{\typesemFdash}[1]{\typesem{\closedchans'}{#1}}
\newcommand{\tycontxt}{\mathbb{C}}
\newcommand{\conv}{\downarrow}
\newcommand{\MAPAST}[1]{{#1}^\ast}
\newcommand{\astyOpenbuf}[3]{\lfloor #1 \rfloor_{#2}^{#3}}
\newcommand{\astyBufRcv}[1]{#1^{\bullet}} 
\newcommand{\astyBufSend}[1]{^{\bullet} #1} 
\newcommand{\astynew}[2]{(\nu\,#1^{#2})}
\newcommand{\trulename}[1]{\text{\scriptsize$\langle$\sc #1$\rangle$}\xspace}
\newcommand{\rulename}[1]{\text{\scriptsize[\sc #1]}\xspace}
\newcommand{\ltsrulename}[1]{\text{\scriptsize$|$\sc #1$|$}\xspace}
\newcommand{\hbrulename}[1]{\text{\scriptsize(\sc #1)}\xspace}
\newcommand{\yes}{{\ensuremath{\checkmark}}}
\newcommand{\noo}{{\color{red} \ensuremath{\times}}}
\begin{document}

\maketitle

\begin{abstract}
Go is a popular concurrent programming language thanks to its
ability to efficiently combine concurrency and systems programming.
In Go programs, a number of concurrency bugs can be caused by a 
mixture of data races and communication problems. In this
paper, we develop a theory based on behavioural types to
statically detect data races and deadlocks in Go programs. We first
specify lock safety/liveness and data race properties over a Go
program model, using the happens-before relation defined in the
Go memory model. 
We represent these properties
of programs in a $\mu$-calculus model of types,
and validate them using type-level model-checking.  
We then extend the framework to account for Go's 
channels, and implement a static verification tool
which can detect concurrency errors. 
This is, to the best of our knowledge, the first static
verification framework of this kind for the Go language, uniformly
analysing concurrency errors caused by a mix of shared memory
accesses and asynchronous message-passing communications.

\end{abstract}

\section{Introduction}
\label{sec:introduction}
Go is a concurrent programming language designed by Google for
\textit{programming at scale} \cite{pike:go}.
Over the last few years, it
has seen rapid growth and adoption: for instance in 2018,
major developer surveys~\cite{web:gosurvey18} 
show that StackOverflow placed Go in the
top 5 most loved and the top 5 most wanted languages; and
Github has reported in~\cite{web:octoverse}
that Go was the
7th fastest growing language.

One of the core pillars of Go is concurrent programming 
features, including the locking of shared memory for thread
synchronisation, and 
the use of explicit message
passing through channels,
inspired by 
process calculi 
concurrency models \cite{book:csp,book:CCS}. 
In practice, shared
accesses to memory using locking mechanisms are unavoidable, 
and could be accidental. 
It is also of note that both shared memory and 
message passing operations 
provide a substantial part of the 
concurrency features of Go, and are the ones that are more 
prone to misuse-induced bugs.
These unsafe memory accesses may lead to \emph{data races},
where programs silently enter an inconsistent execution state leading
to hard-to-debug failures.
\full{\vspace{8mm}}{}
\begin{wrapfigure}{l}{0.6\linewidth}
\vspace{-4mm}
{
\lstinputlisting[language=Go,style=golang,firstline=9]{./code/example-rwmut-race.go}
}
\vspace{-3mm}
\caption{Go program with RWMutex (unsafe)}
\label{fig:running-ex-rwmut-race}
\vspace{-5mm}
\end{wrapfigure}

\indent\figurename~\ref{fig:running-ex-rwmut-race} illustrates a Go program,
which makes use of lock \lstgo{m} to synchronise the \lstgo{main} and \lstgo{f}
functions updating the content of variable \lstgo{x}. 
On line~\ref{line:example-newrwm-r}, the statement 
\lstgo{m := new(sync.RWMutex)} creates
a new read-write lock \lstgo{m}, called \lstgo{RWMutex} in Go, used to 
guard memory accesses based on their status as readers or writers.
The \lstgo{RWMutex} 
object can then be passed around directly as on line~\ref{line:ex-rwm-spawn-r}, 
circumventing the issue that could arise 
if we copied the mutex structure instead. 
It can be locked for writing by calling its \lstgo{Lock()} 
method, unlocked from writing handle with its \lstgo{Unlock()} 
method, and locked and unlocked for reading with the \lstgo{RLock()} 
and \lstgo{RUnlock()} methods. Readers and writers are mutually exclusive, and 
writers are mutually exclusive to each other too (hence the name \lstgo{Mutex}, 
for {\em mutual exclusion} lock), but an arbitrary number of 
readers can hold the lock at the same time.
The \lstgo{go} keyword in front of a function call on
line~\ref{line:ex-rwm-spawn-r} spawns a lightweight thread
(called a \textit{goroutine}) to execute the body of function \lstgo{f}.
The two parameters of function \lstgo{f} -- a rwmutex \lstgo{m}, and an
\lstgo{int} pointer \lstgo{ptr} -- are shared between the caller and callee
goroutines, \lstgo{main} and \lstgo{f}. Since concurrent access to the shared
pointer \lstgo{ptr} may introduce a data race, the developer tries to ensure 
serialised, mutually exclusive access to \lstgo{ptr} in
\lstgo{f} and \lstgo{x} in \lstgo{main} by using read-locks. Using read-locks 
is unsafe in this case, allowing
simultaneous write requests to \lstgo{x} on lines~\ref{line:ex-rwm-main-lock} and
\ref{line:ex-rwm-f-lock}, the program could then output 
``\texttt{x is 20}'' with a bad scheduling, dropping the increase of 10
in the same thread as the print statement.

\begin{wrapfigure}{r}{0.57\linewidth}
\vspace{-7mm}
{
\lstinputlisting[language=Go,style=golang,firstline=9]{./code/example-rwmut-safe.go}
}
\vspace{-3mm}
\caption{Go program with RWMutex (safe)}
\label{fig:running-ex-rwmut-safe}
\vspace{-5mm}
\end{wrapfigure}

\figurename~\ref{fig:running-ex-rwmut-safe} illustrates the same Go program, 
using the \lstgo{RWMutex} feature correctly by putting writer sections of the 
code under writer locks. This alone prevents the data race seen in the first version 
of the program.

Go provides an optional \emph{runtime} data race
detector~\cite{web:go-race-detector-blog,web:go-race-detector} as a part of
the Go compiler toolchain. The race detector is based on LLVM's
ThreadSanitizer~\cite{tsan,web:tsan,SPIV11:tsan} library, which detects
races that manifest during execution.
It can be enabled by building a program using the ``\texttt{-race}'' flag.
During the program execution, the race detector creates up to four
shadow words for 
every memory object to store historical accesses of the object. It
compares every new access with the stored shadow word values to detect
possible races. 
These runtime operations cause high overheads of 
the runtime detector (5--10 times overhead in memory usage and 2--20 times in execution
time on average~\cite{web:go-race-detector}), hence 
it is unrealistic to run it with race detection turned on in production
code; and because of that, race detection
relies on extensive testing or fuzzing techniques~\cite{gh:gofuzz,gh:syzkaller}. 
Moreover,  
as reported in \cite{Tu2019},
the detector fails to find many non-blocking bugs as 
it cannot keep a sufficiently long enough history; and 
its semantics does not capture Go specific non-blocking bugs. 

The Go memory model~\cite{web:go-mem-model} defines the behaviour of memory
access in Go as a \emph{happens-before} relation 
by a \emph{combination} of shared memory and
channel communications. 
It is also reported in~\cite{Tu2019} that
the most 
difficult bugs to detect are caused when 
synchronisation mechanisms are used together with 
message passing operations. 
For instance, 
Go can use message passing for sharing memory (channel-as-lock) or
passing pointers 
through channels (pointer-through-channel), which might lead to
a serious non-blocking bug, i.e. the program may continue to 
execute in unwanted and incorrect states or 
corrupt data in its computations~\cite{Tu2019}, due to subtle interplays 
with buffered asynchronous communications.

These motivate us to \emph{uniformly} model, statically
analyse and detect concurrent
non-blocking/blocking shared memory/channel-communications bugs in Go,
using a formal model based on a process calculus \cite{book:csp,book:CCS}.
\begin{figure}[h]
\vspace*{-3mm}
\begin{tikzpicture}[
    mtool/.style={draw,text width=7em,text centered,fill=gray!15},
    tool/.style={draw,text width=7em,text centered},
    arrowlabel/.style={inner sep=.5,fill=pagecolor},
    arrowlabelabove/.style={above=.5,inner sep=0},
    arrowlabelunder/.style={below=.5,inner sep=0},
    arrowlabelunderp/.style={below right=.3 and -.8,inner sep=0},
    arrowlabelunderpp/.style={below right=.6 and -.8,inner sep=0},
    arrowlabelunderr/.style={below=.8,inner sep=0},
    arrowlabelon/.style={above right=.1,inner sep=0},
    arrowlabeloff/.style={below right=.1,inner sep=0},
    comment/.style={text width=\linewidth-12em,anchor=north west,
    fill=gray!10 
    }
  ]
  \scriptsize\sffamily
  \node (migop) [rectangle split,rectangle split parts=2,tool] {\fgom \nodepart{two} abstracts Go};
  \node (migo) [rectangle split,rectangle split parts=2,below left=.1 and 1 of migop,mtool] 
        {\fgo (\S~\ref{sect:chan-extension})\nodepart{two} channel concurrency};
  \node (gol) [rectangle split,rectangle split parts=2,above left=.1 and 1 of migop,tool]
        {\gol (\S~\ref{sect:core})\nodepart{two} shared memory and locks};
  \node (behavmigo) [left=1 of migo,mtool] {Behavioural Types (\S~\ref{subsect:chan-live-safe})};
  \node (behavgol) [left=1 of gol,tool] {Behavioural Types (\S~\ref{sect:core_typ})};
  \node (go) [right=2 of migop,tool] {Go};
  \path [draw,->,>=latex] (migo) -- node [arrowlabeloff] {Integrated in} (migop);
  \path [draw,->,>=latex] (gol)  -- node [arrowlabelon] {Integrated in} (migop);
  \path [draw,->,>=latex] (migo.175) -- node [arrowlabelabove] {Abstracts to} (behavmigo.005);
  \path [draw,->,>=latex,blue] (behavmigo.355) -- node [arrowlabelunder] {Properties project} node [arrowlabelunderr] {(liveness: under conditions)} (migo.185);
  \path [->,>=latex] (behavmigo.170) edge[bend right=120,looseness=5,min distance=8mm] 
        node [arrowlabelunder] {Model checked using} node [arrowlabelunderr] {modal $\mu$-calculus (\S~\ref{subsect:mod-mu-chan})} (behavmigo.190);
  \path [draw,->,>=latex] (gol.175) -- node [arrowlabelabove] {Abstracts to} (behavgol.005);
  \path [draw,->,>=latex,blue] (behavgol.355) -- node [arrowlabelunder] {Properties project} node [arrowlabelunderr] {(liveness: under conditions)} (gol.185);
  \path [->,>=latex] (behavgol.170) edge[bend right=120,looseness=5,min distance=8mm] 
        node [arrowlabelunder] {Model checked using} node [arrowlabelunderr] {modal $\mu$-calculus (\S~\ref{sect:form-hb-modmu})} (behavgol.190);
  \path [draw,->,>=latex] (go) -- node [arrowlabel] {Extracts to} node [arrowlabelunderp] {behaviour guaranteed by the} 
        node [arrowlabelunderpp] {happens-before relation (\S~\ref{subsec:happenbefore})} (migop);
\end{tikzpicture}
  \caption{Overview of this paper.\label{fig:results}}
\vspace{-6mm}
\end{figure}
\\[1.5mm]
\noindent{\bf Contributions and Outline.}
\figurename~\ref{fig:results} outlines the relationship between the results presented in this paper. 
This work proposes a {\em uniform} model which handles first 
shared memory concurrency (\S~\ref{sect:core}), and then
message-passing concurrency (\S~\ref{sect:chan-extension}) 
based on 
\emph{concurrent behavioural types}, and  
presents the theory, design and 
implementation of a concurrent bug detector for Go. 
We formalise a happens-before relation and several key safety and liveness 
properties in the process calculus following the Go memory 
model~\cite{web:go-mem-model} (\S~\ref{sub:live-safe-process}).
More specifically, in this work, we present the \goldy language (\gol for short), 
used as a subset of processes of the Go language, 
and the behavioural types used to model 
mutual-exclusion locks and shared memory primitives.
We then use this calculus and its types to tackle 
lock liveness and safety, as well as another form of safety: 
\emph{data race detection}. Our further extension to channels 
(\S~\ref{sect:chan-extension}) enables us to detect the errors caused 
by a mixture of shared memory and message passing concurrency.  
The formulation of a happens-before relation and
classification of a data race with respect to the Go memory model 
along with static analysis of this kind is, to the best of our knowledge, 
the first of its kind, at least for Go and its mixed memory management features. 

Through type soundness and progress theorems of our behavioural 
typing system (\S~\ref{sect:core_typ},\S~\ref{sec:types}), 
we are able to represent properties of processes by those of types 
in the modal $\mu$-calculus (\S~\ref{sect:form-hb-modmu}). 
In this paper, we explore in particular the formal relationship 
between type-level properties given by 
the modal $\mu$-calculus and process properties: 
we prove which subsets of \gol satisfy 
the properties of the types characterised by the modal $\mu$-calculus
(Theorem \ref{the:mod-mu-proc-check}). 

We also present a static analysis tool based on the theory. 
The tool infers from Go programs~\cite{web:migoinfer}
the memory accesses, locks 
and message-passing primitives
as behavioural types, 
and generates 
a $\mu$-calculus model from these types~\cite{web:godel2}. 
We then apply the mCRL2 model checker~\cite{Cranen2013} 
to detect blocking and non-blocking concurrency
errors (\S~\ref{sec:implementation}).  
We conclude the paper with an overview of related works (\S~\ref{sec:related}).

Detailed proofs and additional material can be found in%
\full{{\bf the Appendix.}}{the full version of the paper \cite{JY20:full}.}%
The tool and benchmark are available from~\cite{web:godel2,web:benchmark,web:migoinfer}.

\section{ \gol: a Memory-Aware Core Language for Go}\label{sect:core}
This section introduces a core language that models
shared memory concurrency, dubbed \goldy (simple subset of Go 
with shared memory primitives and locks only), shortened as \gol. 
\gol supports two key features for shared memory concurrency: 
(1) \emph{shared variables}, created by 
a shared variable creation primitive, whose values 
can be read from and written to  
by multiple threads; and (2) \emph{locks} and \emph{read-write locks} (rwlocks) are modelled 
by creating a lock store, and recording how it is accessed 
by (read-)lock and (read-)unlock calls. 

\subsection{Syntax of \gol}\label{sect:core_syn}
\begin{figure}
\vspace*{-8mm}
\begin{center}
\framebox{
\begin{tabular}{l}
\begin{tabular}{c|c}
$\!\!
\begin{array}{rcl}
\procP,\procQ,\procR & \coloneqq & \prefMem\pCont\procP \ \mid \ \pPPar{\procP}{\procQ}
    \ \mid \ \zero \ \mid \ \pRes{\nGeneric}{\procP}\\
    &  |  & \pITECont{\nExpr}{\procP}{\procQ}{\procR}\\
    &  |  & \pCall{\procX}{\vExpr}{\vGeneric} \ \mid \ \pNewMem{\nVar}{\sigma}\pCont\procP\\
    &  |  & \pNewLock{\nLock}\pCont\procP \ \mid \ \pNewRWLock{\nLock}\pCont\procP\\
    &  |  & \stMem{\nVar}{\sigma}{\nVal} \ \mid \ \stLock{\nLock} \ \mid \ \stLockL{\nLock}\\
    &  |  & \stRWLock{\nLock}{i} \ \mid \ \stRWLockL{\nLock}{i} \ \mid \ \stRWLockW{\nLock}{i}
\end{array}
$
& 
$\begin{array}{lcl}
  \procD   & \coloneqq & \pDef{\procX}{\vVar}{\procP}\\
\procProg & \coloneqq & \pIn{\{ \procD[i] \}_{i\in I}}{\procP}\\
\prefMem & \coloneqq & \pSilent \ \mid \ \pLoad{\nVar}{\nVary}\\
    & \mid & \pStore{\nVar}{\nExpr} \ \mid \ \prefLock \\
\prefLock & \coloneqq & \pLock{\nLock} \ \ \mid \ \ \pUnlock{\nLock}\\
      & \mid & \pRLock{\nLock} \ \mid \ \pRUnlock{\nLock}\\
\nVal & \coloneqq  & \num \mid \true \mid \false \mid \nVar \\
\nExpr & \coloneqq & \nVal \mid \opNot{\nExpr} \mid \opSucc{\nExpr}
\end{array}\!\!
$
\end{tabular}
\\
\hline
\\[-0.7em]
$
\begin{array}{c}
\pPar{\procP}{\procQ} \equiv \pPar{\procQ}{\procP}
\qquad
\pPar{\procP}{\pPPar{\procQ}{\procR}} \equiv \pPar{\pPPar{\procP}{\procQ}}{\procR}
\qquad 
\pPar{\procP}{\zero} \equiv \procP
\qquad
\pRes{\nVar}{\stMem{\nVar}{\sigma}{\nVal}} \equiv \zero
\\[1mm]
\pRes{\nLock}{\stLock{\nLock}} \equiv \zero
\qquad
\pRes{\nLock}{\stLockL{\nLock}} \equiv \zero
\qquad
\pRes{\nLock}{\stRWLock{\nLock}{i}} \equiv \zero
\qquad
\pRes{\nLock}{\stRWLockL{\nLock}{i}} \equiv \zero
\qquad
\pRes{\nLock}{\stRWLockW{\nLock}{i}} \equiv \zero
\\[1mm]
\pRes{\nGeneric}{\pRes{\nGenerici}{\procP}} \equiv 
  \pRes{\nGenerici}{\pRes{\nGeneric}{\procP}}
\qquad
\pPar{\procP}{\pRes{\nGeneric}{\procQ}} \equiv 
  \pRes{\nGeneric}{\pPPar{\procP}{\procQ}}
\ \mbox{\scriptsize $(\nGeneric\not\in \fn{\procP})$}
\end{array}
$
\end{tabular}
}
\end{center}
\vspace{-3mm}
\caption{Syntax of the Process language (top) and Structural Congruence for Stores (bottom).}\label{fig:syntax}
\vspace{-0.5cm}
\end{figure}

The syntax of the calculus, together with the
standard structural congruence $\procP\equiv \procPi$ (which includes
$\equiv_\alpha$), is given in \figurename~\ref{fig:syntax}, where   
$\nExpr,\nExpri$ range over {\em expressions}, 
$\nVar,\nVary$ over {\em variables}, $\nLock,\nLocki$ over {\em locks},
  $\nGeneric,\nGenerici$ over {\em identifiers} (either shared variables
  or locks) and $\nVal$ over {\em values} 
  (either local variables, natural numbers or booleans). 
We write $\vExpr$, $\vVal$, $\vVar$ and $\vGeneric$ for a
list of expressions, values, variables and names respectively, 
and use $\cdot$ as the concatenation operator.

Process syntax ($\procP,\procQ,\procR,...$) is given as follows.  
The \emph{prefix} $\prefMem\pCont\procP$ contains 
either (1) a \emph{silent action} $\pSilent$; (2) a 
{\em store action} of $\nExpr$ in $\vVar$, $\pStore{\nVar}{\nExpr}$; 
(3) a {\em load action} of $\nVar$, bound to $\nVary$ in the continuation, 
$\pLoad{\nVar}{\nVary}$; and (4) 
actions ($\prefLock$) for lock/unlock and read-lock/unlock on program locks 
(denoted by $\nLock$). 
  
There are three constructs for ``$\m{new}$'':
a {\em new variable} process
$\pNewMem{\nVar}{\sigma}\pCont\procP$ creates a new shared variable in the heap
with payload type $\sigma$, binding it to $\nVar$ in the continuation $\procP$; 
a {\em new lock} process
$\pNewLock{\nLock}\pCont\procP$ creates a new program lock and $\pNewRWLock{\nLock}\pCont\procP$ creates 
a new program read-write lock, 
binding them to $\nLock$ in the continuation. 
The syntax includes the {\em conditional} $\pITE{\nExpr}{\procP}{\procQ}$, 
{\em parallel} process $\pPar{\procP}{\procQ}$, and 
the {\em inactive} process $\zero$ (often omitted). 

A Go program is modelled as a program $\procProg$ in \gol, written
$\pIn{\{ \procD[i] \}_{i\in I}}{\procP}$, which consists of a set of mutually recursive
process definitions which encode the goroutines and functions
used in the program, together with a process $\procP$ that encodes the
program entry point (\verb=main=).
The entry point is usually modelled as 
$\pEntryPt$, a call to a defined process $\pEntryProc$. 
The entry point is the main process in a collection of
mutually recursive process definitions (ranged over by $\procD$),
parametrised by a list of (expressions and locks) variables.

Process variable $\procX$ is bound by {\em definition} $\procD$ of the form of 
$\pDef{\procX}{\vVar}{\procP}$ where $\fn{\procD} = \emptyset$. 
This is used by {\em process call} 
$\pCall{\procX}{\vExpr}{\vGeneric}$ which denotes an instance of the process definition
bound to $\procX$, with formal parameters instantiated to $\vExpr$
and $\vGeneric$. Note that the entry point could take parameters, if the programmer 
wants the program to depend on user input data for example, but our examples 
never make use of that capability.

The part of the syntax denoted by the stores is 
\emph{runtime} constructs which are generated during the execution 
(i.e.~not written by the programmer and appearing as 
standalone parallel terms): 
a shared variable store 
$\stMem{\nVar}{\sigma}{\nVal}$ contains message $\nVal$ of type $\sigma$; and  
we represent five internal states of lock stores, situated on the 
last line of the left column, where the index $i$ 
is used for rwlocks and the superscripts $\star$ and
$\blacktriangledown$ respectively denote locked and waiting locks.
{\em Restriction} $\pRes{\nGeneric}{\procP}$
denotes the runtime handle $\nGeneric$ for a lock or shared variable bound in
$\procP$, and thus hidden from external processes.

\full{Finally, the notation $\fn{\procP}$ denotes the sets of free names 
(locks, shared variables, local variables), ie. ones that have not been bound 
by a restriction operator $\pRes{\nGeneric}{}$, a definition $\procD$, 
a ``$\m{new}$'' construct, or a load action,
cf. \figurename~\ref{fig:names} in Appendix~\ref{app:freename}.}
{Finally, the notation $\fn{\procP}$ denotes the sets of free names 
(locks, shared variables, local variables), ie. ones that have not been bound 
by a restriction operator $\pRes{\nGeneric}{}$, a definition $\procD$, 
a ``$\m{new}$'' construct, or a load action.}

\begin{example}[Processes from \figurename~\ref{fig:running-ex-rwmut-race} 
and \figurename~\ref{fig:running-ex-rwmut-safe}]
\label{ex:syntax}
The following process represents the code in \figurename~\ref{fig:running-ex-rwmut-race}. We first separate the \lstgo{main} function in two parts: the part that 
instantiates the variable and lock, and spawns the side process in parallel to the 
continuation, that we call $\pEntryProc$; and the rest that processes in 
parallel to the second goroutine that we put in a separate process $\procP$. 
Process $\procQ$ is the representation of function \lstgo{f}, that is run in 
the second goroutine.
\[
\procProg[\m{race}] \coloneqq
    \pIn
	{\left\{\begin{array}{@{}l@{\ }l@{\ }l}
	  \pEntryProc & = & \pNewMem{\nVar}{\m{int}}\pCont
	   \pNewRWLock{\nLock}\pCont
	   \pPPar{\pCall{\procP}{\nVar,\nLock}{}}
	         {\pCall{\procQ}{\nVar,\nLock}{}}\\
	  \procP(\nVary,\nVarz) & = & \pRLock{\nVarz}\pCont
	   \pLoad{\nVary}{\nTmp[1]}\pCont
	   \pStore{\nVary}{\nTmp[1]+10}\pCont
	   \pRUnlock{\nVarz}\pCont\\
	  & & \pRLock{\nVarz}\pCont
	   \pLoad{\nVary}{\nTmp[2]}\pCont
	   \pSilent\pCont
	   \pRUnlock{\nVarz}\pCont
	   \zero\\
          \procQ(\nVary,\nVarz) & = & \pRLock{\nVarz}\pCont
	   \pLoad{\nVary}{\nTmp[0]}\pCont
	   \pStore{\nVary}{\nTmp[0]+20}\pCont
	   \pRUnlock{\nVarz}\pCont
	\zero
	\end{array}\right\}}
     {\pEntryPt}
\]

The next process represents the code in \figurename~\ref{fig:running-ex-rwmut-safe} 
in the same fashion as above. 
\[
\procProg[\m{safe}] \coloneqq
    \pIn
	{\left\{\begin{array}{@{}l@{\ }l@{\ }l}
	  \pEntryProc & = & \pNewMem{\nVar}{\m{int}}\pCont
	   \pNewRWLock{\nLock}\pCont
	   \pPPar{\pCall{\procP}{\nVar,\nLock}{}}
	         {\pCall{\procQ}{\nVar,\nLock}{}}\\
	  \procP(\nVary,\nVarz) & = & \pLock{\nVarz}\pCont
	   \pLoad{\nVary}{\nTmp[1]}\pCont
	   \pStore{\nVary}{\nTmp[1]+10}\pCont
	   \pUnlock{\nVarz}\pCont\\
	  & & \pRLock{\nVarz}\pCont
	   \pLoad{\nVary}{\nTmp[2]}\pCont
	   \pSilent\pCont
	   \pRUnlock{\nVarz}\pCont
	   \zero\\
          \procQ(\nVary,\nVarz) & = & \pLock{\nVarz}\pCont
	   \pLoad{\nVary}{\nTmp[0]}\pCont
	   \pStore{\nVary}{\nTmp[0]+20}\pCont
	   \pUnlock{\nVarz}\pCont
	\zero
	\end{array}\right\}}
     {\pEntryPt}
\]
\end{example}

\subsection{Operational Semantics}\label{sect:core_sem}
The semantics of \gol is given by the labelled transition system (LTS)
shown in \figurename~\ref{fig:redsem}. The LTS
system enables us to give a simple and uniform 
definition of barbs in Definition~\ref{def:barb}
and a formal correspondence with the modal
$\mu$-calculus described in \S~\ref{sect:form-hb-modmu}. 
The LTS rules are 
written $\procP \tra{\actgen} \procPi$,
where $\actgen$ is a label of the form:
\begin{center}
\framebox{$
\begin{array}{@{}r@{\ }c@{\ }l@{}}
\actgen & \coloneqq & \actname[\m{l}] \, \mid \,
\dataAct{\actname[\m{m}]}{\nExpr}
\hfill
\occurgen \coloneqq  \occurnone
\, \mid \, \occurleft{\occurgen}
\, \mid \, \occurright{\occurgen}
\hfill
\actname[\m{m}] \ \coloneqq \labload{\nVar}
\, \mid \, \labloaddual{\nVar}
\, \mid \, \labtystore{\nVar}{\occurgen}
\, \mid \, \labstoredual{\nVar}
\\[1mm]
\actname[\m{l}] & \coloneqq & \lablock{\nLock}
\, \mid \, \labunlock{\nLock}
\, \mid \, \labrlock{\nLock}
\, \mid \, \labrunlock{\nLock}
\, \mid \, \labmut{\nLock}
\, \mid \, \lablckmut{\nLock}
\, \mid \, \labrmut{\nLock}{}
\, \mid \, \labwaitrmut{\nLock}{}
\, \mid \, \labwaitmut{\nLock}{}
\, \mid \, \labsync{\nGeneric}
\, \mid \, \acttau
\qquad
\actname\ \coloneqq \actname[\m{m}] \, \mid \, \actname[\m{l}] 
\end{array}
$}
\end{center}
They can be either a \emph{data-dependent action} $\actname[\m{m}]$ 
along with its data $\nExpr$, used for synchronisation purposes on actions
that transmit data, 
or a \emph{data-independent action} $\actname[\m{l}]$
alone, used for synchronisation on actions that do not transmit meaningful data, 
and for the synchronisations $\labsync{\nGeneric}$ and silent action $\acttau$. 

The actions in $\actname[\m{m}]$ define $\ldbarb{\nVar}{}$ (read), 
$\stbarb{\nVar}{\occurgen}$ (write), 
$\labloaddual{\nVar}$ and $\labstoredual{\nVar}$ (dual actions)
of a shared variable $\nVar$, where $\occurgen$ 
denotes an \emph{occurrence} (a position in the 
parallel composition) that is a string of 1s, 2s and $\occurnone$. 
The actions in $\actname[\m{l}]$ define (1) 
$\lckbarb{\nLock}$ (lock), 
$\ulckbarb{\nLock}$ (unlock), 
$\rlckbarb{\nLock}$ (read-lock) and 
$\rulckbarb{\nLock}$ (read-unlock); 
(2) lock store actions, $\mutbarb{\nLock}$, 
$\lckmutbarb{\nLock}$, 
$\rmutbarb{\nLock}{i}$, $\waitrmutbarb{\nLock}{i}$ and $\labwaitmut{\nLock}$ 
(whose purpose is to interact with each action in (1) to produce the lock 
synchronisation $\labsync{\nLock}$); as well as 
(3) synchronisations $\labsync{\nGeneric}$ and silent actions. 

\begin{remark}\rm
{\bf (1)} The write action $\stbarb{\nVar}{\occurgen}$ 
uses occurrence $\occurgen$ 
to denote the position of the thread 
which contains that action. 
By using occurrences, we can differentiate two writes 
on the same variable happening at the same time, and 
thereby formally define the notion 
of data race (see~Definition~\ref{def:hb-drace}); 
and {\bf (2)} one lock store can produce several different 
actions which then produce lock synchronisation $\labsync{\nLock}$
with different lock primitives. 
This allows us to implement the properties 
with mCRL2 straightforwardly, cf.~\S~\ref{sec:implementation}. 
\end{remark}

We also define the general label $\actname$ for actions,  
which only contains action markers and no data, 
and will be of use for data-independent marking later on, such as barbs.
\emph{Occurences} are ranged over by $\occurgen$, $\occurgen'$,..., 
where $\occurnone$ denotes the empty occurrence, 
while $\occurleft{\occurgen}$ (resp. $\occurright{\occurgen}$) 
denotes the left (resp. right) shift of of $\occurgen$. 
The left and right shifting operators on action $\actgen$, 
$\actleft{\actgen}$ and $\actright{\actgen}$, are defined as:  
\[ 
\actleft{\dataAct{\labtystore{\nVar}{\occurgen}}{\nExpr}} = 
\dataAct{\labtystore{\nVar}{\occurleft{\occurgen}}}{\nExpr}\quad 
\mbox{and} \quad 
\actright{\dataAct{\labtystore{\nVar}{\occurgen}}{\nExpr}} = 
\dataAct{\labtystore{\nVar}{\occurright{\occurgen}}}{\nExpr}
\]
with $\actleft{\actgen} = \actright{\actgen} = \actgen$  
if $\actgen\neq\dataAct{\labtystore{\nVar}{\occurgen}}{\nExpr}$. 
Example \ref{ex:occurrence} will explain the use of these operators with
the LTS rules. 

\begin{figure}[h]
\vspace{-3mm}
\begin{center}
\framebox{
\begin{tabular}{l}
\begin{tabular}{l|l}
$
\!\!\!\!\!\!\!\!\!\!\!
\begin{array}{l}
  \framebox{\text{Lock and Memory actions}}
  \\
  \begin{array}{@{}l@{\ }l@{}}
  \rulename{\ruleLockLockReq}&
   {\pLock{\nLock}\pCont\procP\tra{\lablock{\nLock}}\procP}
  \\
  \rulename{\ruleLockUnlockReq}&
   {\pUnlock{\nLock}\pCont\procP\tra{\labunlock{\nLock}}\procP}
  \\
   \rulename{\ruleRWLockRLockReq}&
    {\pRLock{\nLock}\pCont\procP\tra{\labrlock{\nLock}}\procP}
  \\
   \rulename{\ruleRWLockRUnlockReq}&
    {\pRUnlock{\nLock}\pCont\procP\tra{\labrunlock{\nLock}}\procP}
  \\
  \rulename{\ruleMemLoadReq}&
   {\pLoad{\nVar}{\nVary}\pCont\procP \tra{\dataAct{\labload{\nVar}}{\nVal}} 
       \procP\subs{\nVal}{\nVary}}
  \\
  \rulename{\ruleMemStoreReq}&
   {\pStore{\nVar}{\nExpr}\pCont\procP \tra{\dataAct{\labtystore{\nVar}{\occurnone}}{\nExpr}} \procP}
  \\
  \hline
   \rulename{\ruleLockLockAck}&
    {\stLock{\nLock}\tra{\labmut{\nLock}}\stLockL{\nLock}}
  \\
   \rulename{\ruleLockUnlockAck}&
    {\stLockL{\nLock}\tra{\lablckmut{\nLock}}\stLock{\nLock}}
  \\
   \rulename{\ruleRWLockLockAck}&
    {\stRWLockW{\nLock}{0}\tra{\labmut{\nLock}}\stRWLockL{\nLock}{0}}
  \\
   \rulename{\ruleRWLockUnlockAck}&
    {\stRWLockL{\nLock}{0}\tra{\lablckmut{\nLock}}\stRWLock{\nLock}{0}}
  \\
   \rulename{\ruleRWLockRLockAck}&
    {\stRWLock{\nLock}{i}\tra{\labrmut{\nLock}{}}\stRWLock{\nLock}{i+1}}
  \\
  \rulename{\ruleRWLockRUnlockAck}& 
    {\stRWLock{\nLock}{i+1}\tra{\labwaitrmut{\nLock}{}}\stRWLock{\nLock}{i}}
  \\
   \rulename{\ruleRWLockLockStageAck}&
    {\stRWLock{\nLock}{i}\tra{\labwaitmut{\nLock}}\stRWLockW{\nLock}{i}}
  \\
  \rulename{\ruleRWLockStagedRUnlockAck}& 
           {\stRWLockW{\nLock}{i+1}\tra{\labwaitrmut{\nLock}{}}\stRWLockW{\nLock}{i}}
  \\
  \rulename{\ruleMemLoadAck}&
   {\stMem{\nVar}{\sigma}{\nVal}\tra{\dataAct{\labloaddual{\nVar}}{\nVal}}\stMem{\nVar}{\sigma}{\nVal}}
  \\
  \rulename{\ruleMemStoreAck}&
   {\stMem{\nVar}{\sigma}{\nVal}\tra{\dataAct{\labstoredual{\nVar}}{\nVali}}\stMem{\nVar}{\sigma}{\nVali}}
  \\[.2em]
  \end{array}
  \end{array}
$
&\
$
\begin{array}{@{}l@{}}
  \framebox{\text{Synchronisation rules}}
  \\[2mm]
  \begin{array}{c}
  \rulename{\ruleMemLoad}
   \inferrule
    {\procP \tra{\dataAct{\labload{\nVar}}{\vVal}} \procPi \quad 
     \procQ \tra{\dataAct{\labloaddual{\nVar}}{\nVal}} \procQi}
    {\pPar{\procP}{\procQ} \tra{\labsync{\nVar}} \pPar{\procPi}{\procQ}}
  \\
  \rulename{\ruleMemStore}
   \inferrule
    {\procP \tra{\dataAct{\labtystore{\nVar}{\occurgen}}{\nExpr}} \procPi \quad
     \procQ \tra{\dataAct{\labstoredual{\nVar}}{\nVal} }\procQi \quad
     \nExpr\downarrow\nVal}
    {\pPar{\procP}{\procQ} \tra{\labsync{\nVar}} \pPar{\procPi}{\procQi}}
  \\
  \rulename{\ruleLockLock}
   \inferrule
    {\procP \tra{\lablock{\nLock}} \procPi \quad 
     \procQ \tra{\labmut{\nLock}} \procQi}
    {\pPar{\procP}{\procQ} \tra{\labsync{\nLock}} \pPar{\procPi}{\procQi}}
  \\
  \rulename{\ruleLockUnlock}
   \inferrule
    {\procP \tra{\labunlock{\nLock}} \procPi \quad 
     \procQ \tra{\lablckmut{\nLock}} \procQi}
    {\pPar{\procP}{\procQ} \tra{\labsync{\nLock}} \pPar{\procPi}{\procQi}}
  \\
  \rulename{\ruleRWLockRLock}
   \inferrule
    {\procP \tra{\labrlock{\nLock}} \procPi \quad 
     \procQ \tra{\labrmut{\nLock}{}} \procQi}
    {\pPar{\procP}{\procQ} \tra{\labsync{\nLock}} \pPar{\procPi}{\procQi}}
  \\
  \rulename{\ruleRWLockRUnlock}
   \inferrule
    {\procP \tra{\labrunlock{\nLock}} \procPi \quad 
     \procQ \tra{\labwaitrmut{\nLock}{}} \procQi}
    {\pPar{\procP}{\procQ} \tra{\labsync{\nLock}} \pPar{\procPi}{\procQi}}
  \\
  \rulename{\ruleRWLockLockStage}
   \inferrule
    {\procP \tra{\lablock{\nLock}} \procPi \quad 
     \procQ \tra{\labwaitmut{\nLock}} \procQi}
    {\pPar{\procP}{\procQ} \tra{\acttau} \pPar{\procP}{\procQi}}
  \\[1em]
  \rulename{\ruleSilentAction}\
   \pSilent\pCont\procP \tra{\acttau} \procP
  \end{array}
  \end{array}\!\!\!
$
\end{tabular}\hspace{-1em}
 \\
 \hline
 \\[-0.9em]
 $\!\!\!\!\!
 \begin{array}{l}
 \!\!\!\begin{array}{@{}lr@{\ }l@{}}
 \framebox{Runtime structures creation}
 &
 \rulename{\ruleNewMem}
 &
    {\pNewMem{\nVary}{\sigma}\pCont\procP \tra{\acttau}
      \pRes{\nVary}{\pPPar{\procP}{\stMem{\nVary}{\sigma}{\bot}}}}
 \\[0.4em]
 \rulename{\ruleNewMut}\ 
    {\pNewLock{\nLock}\pCont\procP \tra{\acttau}
     \pRes{\nLock}{\pPPar{\procP}{\stLock{\nLock}}}}
 \hspace{4.5em}~
 &
 \rulename{\ruleNewRMut}
 &
    {\pNewRWLock{\nLock}\pCont\procP \tra{\acttau}
     \pRes{\nLock}{\pPPar{\procP}{\stRWLock{\nLock}{0}}}}
 \\[0.4em]
 \end{array}
 \\[0.6em]
 \hline
 \\[-0.7em]
 \rulename{\ruleParLeft}
  \inferrule
   {\procP \tra{\actgen} \procPi}
   {\pPar{\procP}{\procQ} 
     \tra{\actleft{\actgen}} \pPar{\procPi}{\procQ}}
 \hfill
 \rulename{\ruleParRight}
  \inferrule
   {\procQ \tra{\actgen} \procQi}
   {\pPar{\procP}{\procQ} 
     \tra{\actright{\actgen}} \pPar{\procP}{\procQi}}
 \hfill
 \rulename{\ruleRestrictFree}
  \inferrule
   {\procP \tra{\actgen} \procPi \quad \nGeneric\notin \fn{\actgen}}
   {\pRes{\nGeneric}{\procP} \tra{\actgen}
    \pRes{\nGeneric}{\procPi}}
 \\
 \rulename{\ruleITETrue}
  \inferrule
   {\nExpr \conv \true}
   {\pITECont{\nExpr}{\procP}{\procQ}{\procR}\tra{\acttau}
    \procP}
 \hfill
 \rulename{\ruleITEFalse}
  \inferrule
   {\nExpr \conv \false}
   {\pITECont{\nExpr}{\procP}{\procQ}{\procR}\tra{\acttau}
    \procQ}
 \hfill
 \rulename{\ruleRestrictBind}
  \inferrule
   {\procP \tra{\labsync{\nGeneric}} \procPi}
   {\pRes{\nGeneric}{\procP} \tra{\acttau}
    \pRes{\nGeneric}{\procPi}}
 \\
 \rulename{\ruleDefinition}
  \inferrule
   {\pPar{\procP\subs{\vVal,\vGeneric}{\vVar}}{\procQ} \tra{\actgen}
    \procR \quad \nExpr[i] \conv \nVal[i]
    \quad \procX(\vVar) = \procP \in \{ \procD[i] \}_{i\in I}}
   {\pPar{\pCall{\procX}{\vExpr}{\vGeneric}}{\procQ}
    \tra{\actgen} \procR}
 \hfill
 \rulename{\ruleAlphaEquiv}
  \inferrule
   {\procP \equiv_\alpha \procPi \quad \procPi\tra{\actgenb}\procPii}
   {\procP \tra{\actgenb} \procPii}
 \end{array}\hspace*{-2.5mm}
$
\end{tabular}
}
\end{center}
\vspace{-3mm}
\caption{LTS Reduction Semantics for the Processes.}\label{fig:redsem}
\vspace{-3mm}
\end{figure}

This LTS defines the semantics of shared variables, 
locks, and read-write locks which closely follow the specifications 
in \cite{web:golang}. 
We first highlight the operational semantics of locks 
from \cite{web:golangmutex} and rwlocks from 
\cite{web:golangrwmutex}. 
A \textbf{\emph{lock}} is a mutual exclusion lock. 
It must not be copied after its first use: 
a lock $\nLock$ is created by \rulename{\ruleNewMut}, which is guaranteed 
fresh by the ``$\pRes{\nLock}{}$'' operation. It is then 
locked by \rulename{\ruleMutLock} and unlocked by \rulename{\ruleMutUnlock}. 
A \textbf{\emph{read-write lock}} (rwlock)
is a reader/writer mutual exclusion lock. 
The lock can be held by an arbitrary number of readers or a single writer. 
The zero value for a rwlock is an unlocked state. 
If a goroutine holds a rwlock for reading and another goroutine 
calls \lstgo{Lock}, no goroutine should expect to be able to acquire a read-lock 
until both the initial read-lock and the staged \lstgo{Lock} call are released. 
This is to ensure that the lock eventually becomes 
available to writers; a blocked \lstgo{Lock} call excludes 
new readers from acquiring the 
lock. To model this situation, we annotate a freshly created 
rwlock by the counter $i$ (instanciated at 0 by \rulename{\ruleNewRMut}); 
this counter is incremented by any fired read-lock 
(by \rulename{\ruleRMutRLock}), and 
blocked from increasing if a \lstgo{Lock} action gets staged (by 
\rulename{\ruleRMutLockStage}, 
{\bf note how the \lstgo{Lock} action is \emph{not} consumed by this rule}); 
then it is unlocked by read-unlock calls (by \rulename{\ruleRMutRUnlock}) until 
the pending number 
of read-locks becomes $0$, and finally write-locked (by \rulename{\ruleRMutLock}) 
and further 
unlocked by the corresponding unlock (by \rulename{\ruleRMutUnlock}), 
if a \lstgo{Lock} was previously staged by \rulename{\ruleRMutLockStage}. 

A \textbf{\emph{shared variable}} 
is implemented at runtime by a 
named area in the store, which stores a value of its 
payload data type, and that can be written to or read by any 
process within its scope. 
It is created by \rulename{\ruleNewMem} with 
an initial value for declared type $\sigma$ (0 for $\m{int}$, 
$\false$ for $\m{bool}$, etc.), 
accessed for reading by \rulename{\ruleMemLoad} and 
for writing by \rulename{\ruleMemStore}. 

The \rulename{\rulePar-$\ast$} rules are explained in Example~\ref{ex:occurrence} below.

\begin{example}[Occurrences]\label{ex:occurrence}
Let $\procP = \pStore{\nVar}{\nExpr}\pCont\procPi$, 
$\procQ = \pStore{\nVar}{\nExpri}\pCont\procQi$ and 
$\procR = \pLoad{\nVar}{\nVarz}\pCont\procRi$.
It follows $\procP\tra{\dataAct{\labtystore{\nVar}{\occurnone}}{\nExpr}}\procPi$, 
$\procQ\tra{\dataAct{\labtystore{\nVar}{\occurnone}}{\nExpri}}\procQi$ and 
$\procR\tra{\dataAct{\labtyload{\nVar}}{\nVal}}\procRi\subs{\nVal}{\nVarz}$.
\\
If we compose $\procP$ and $\procQ$, we use \rulename{\ruleParLeft} and \rulename{\ruleParRight} to determine the new reductions:
\full{
}{
  \vspace{-.5mm}
}
\[
\begin{array}{r@{\ }c@{\ }l@{\hspace{3cm}}l@{\ }l}
  \pPar{\procP}{\procQ} & 
    \tra{\dataAct{\labtystore{\nVar}{\occurleft{\occurnone}}}{\nExpr}} & 
    \pPar{\procPi}{\procQ} &
    \actleft{\dataAct{\labtystore{\nVar}{\occurnone}}{\nExpr}} & = 
      \dataAct{\labtystore{\nVar}{\occurleft{\occurnone}}}{\nExpr}\\
  \pPar{\procP}{\procQ} & 
    \tra{\dataAct{\labtystore{\nVar}{\occurright{\occurnone}}}{\nExpri}} & 
    \pPar{\procP}{\procQi} &
    \actright{\dataAct{\labtystore{\nVar}{\occurnone}}{\nExpri}} & = 
      \dataAct{\labtystore{\nVar}{\occurright{\occurnone}}}{\nExpri}
\end{array}
\]
Composing again, with $\procR$:
\[
\begin{array}{r@{\ }c@{\ }l@{\hspace{1.15cm}}l@{\ }l}
  \pPar{\pPPar{\procP}{\procQ}}{\procR} & 
    \tra{\dataAct{\labtystore{\nVar}{\occurleft{\occurleft{\occurnone}}}}{\nExpr}} & 
    \pPar{\pPPar{\procPi}{\procQ}}{\procR} &
    \actleft{\dataAct{\labtystore{\nVar}{\occurleft{\occurnone}}}{\nExpr}} & = 
      \dataAct{\labtystore{\nVar}{\occurleft{\occurleft{\occurnone}}}}{\nExpr}\\
  \pPar{\pPPar{\procP}{\procQ}}{\procR} & 
    \tra{\dataAct{\labtystore{\nVar}{\occurleft{\occurright{\occurnone}}}}{\nExpri}} & 
    \pPar{\pPPar{\procP}{\procQi}}{\procR} &
    \actleft{\dataAct{\labtystore{\nVar}{\occurright{\occurnone}}}{\nExpri}} & = 
      \dataAct{\labtystore{\nVar}{\occurleft{\occurright{\occurnone}}}}{\nExpri}\\
  \pPar{\pPPar{\procP}{\procQ}}{\procR} & 
    \tra{\dataAct{\labtyload{\nVar}}{\nVal}} & 
    \pPar{\pPPar{\procP}{\procQ}}{\procRi\subs{\nVal}{\nVarz}} &
    \actright{\dataAct{\labtyload{\nVar}}{\nVal}} & = 
      \dataAct{\labtyload{\nVar}}{\nVal}
\end{array}
\]
\end{example}

For process definitions, 
we implicitly assume the existence of an ambient set of definitions 
$\{\procD[i] \}_{i\in I}$.
Rule \rulename{\ruleDefinition} replaces $\procX$ by the corresponding process 
definition (according to the underlying definition environment), 
instantiating the parameters accordingly. The remaining rules are 
standard from process calculus literature \cite{SangiorgiD:picatomp}. 
We define $\tra{}$ as $\equiv\tra{\acttau}\equiv \cup
\equiv\tra{\labsync{\name}}\equiv$. 

We define a 
{\em normal form} for terms, which is used later in \S~\ref{sect:form-hb-modmu}: 
\begin{definition}[Normal Form]\label{def:migo-normal-form}\rm
A term $\procP$ is \emph{in normal form} if 
$\procP = \pRes{\nameVec}{\procPi}$ and 
$\procPi\not\equiv\pRes{\nGeneric}{\procPii}$.
\end{definition}
We note that, with structural congruence, every well-formed term can be 
transformed to normal form, and we can then study reduction up to normal form, in order 
to witness synchronisation actions on channels, memory and mutex.

\section{Defining Safety and Liveness: Data Race and Happens-Before}\label{sub:live-safe-process}
We define the properties of data race freedom 
and lock safety/liveness through \emph{barbs} 
(\S~\ref{subsec:barb}). 
A {\bf\emph{data race}} happens 
when two writers (or a reader and a writer) can concurrently access the same 
shared variable at the same time. 
{\bf\emph{Unsafe lock access}} happens if (1) unlock 
happens before lock happens or before waiting read-unlocks release 
the lock; or (2) read-unlock happens before read-lock happens or
after a lock call accesses the process lock. 
{\bf\emph{Lock liveness}} identifies the
ability of (read-)lock requests to always eventually fire.     
Our first main result is a formalisation of the \emph{happens-before}
relation and other properties specified in the Go memory model 
\cite{web:go-mem-model}
and a correspondence between a data race characterisation 
through the happens before relation and 
another characterisation of a data race through barbs. 

\subsection{Safety and Liveness Properties through Barbs}
\label{subsec:barb}
We first define barbed process predicates \cite{MilnerR:barbis}
introducing predicates for locks and shared variable accesses. 
The predicate $\procP\barb{\actname}$ means that $\procP$ 
immediately offers a visible action $\actname$.
\begin{definition}[Process barbs]\label{def:barb}\rm
The barbs are defined as follows:
\begin{description}
\item[Prefix Actions:]

  $
\begin{array}{lll}
  \pStore{\nVar}{\nExpr}\barb{\stbarb{\nVar}{\occurnone}};
  &
  \pLoad{\nVar}{\nVary}\barb{\ldbarb{\nVar}{\occurnone}};
&
\pLock{\nLock}\barb{\lckbarb{\nLock}};
\\ [1mm]
\pUnlock{\nLock}\barb{\ulckbarb{\nLock}};
&
\pRLock{\nLock}\barb{\rlckbarb{\nLock}};
&
\pRUnlock{\nLock}\barb{\rulckbarb{\nLock}}
\end{array}
$
 \item[Programs:]
 if $\procP\tra{\actname,\nExpr}\procPi$ where $\actname=\actname[\m{m}]$ is an action over
  a shared variable, or $\procP\tra{\actname}\procPi$ where $\actname=\actname[\m{l}]$ is
  $\labsync{\nGeneric}$ or a lock action, then $\procP\barb{\actname}$.
\end{description}
\end{definition}

Actions in this case are the same ones as defined before in the operational 
semantics of \gol, expect for silent action $\acttau$. We write 
${\procP}\wbarb{\actname}$ if ${\procP}\tra{}^* {\procPi}$ and 
${\procPi}\barb{\actname}$.  

We first define a safety property for locks in Definition~\ref{def:safe}.

\begin{definition}[Safety]~\label{def:safe}\rm
Program $\migoProg$ 
is {\em safe} if for all $\procP$ such that 
$\migoProg\tra{}^*\pRes{\vGeneric}{\procP}$, 
(a) if $\procP\barb{\ulckbarb{\nLock}}$
then $\procP\barb{\lckmutbarb{\nLock}}$; and 
(b) if $\procP\barb{\rulckbarb{\nLock}}$
then $\procP\barb{\waitrmutbarb{\nLock}{i}}$.
\end{definition}

Safety states that 
in all reachable program
states, 
the unlock action will happen only if the process lock 
is already locked by the lock action; 
and the read-unlock will happen only if the process lock is locked 
by the read-lock action. 

Next we define the liveness property:
all (read-)lock requests will always eventually fire 
(i.e.~perform a synchronisation).  

\begin{definition}[Liveness]~\label{def:live}\rm
Program $\procProg$ is \emph{live} if for all $\procP$ such that
$\procProg \tra{}^* \pRes{\vGeneric}{\procP}$,
if ${\procP}\barb{\lckbarb{\nLock}}$ or 
${\procP}\barb{\rlckbarb{\nLock}}$ 
then 
$\procP\wbarb{\labsync{\nLock}}$.
\end{definition}

\subsection{Happens Before and Data Race}
\label{subsec:happenbefore}
We now define the happens-before relation,
closely following  
\cite{web:go-mem-model}, and investigate its relationship with 
data races. The {\em happens-before} relation between actions
$\actname$ and $\actnamei$, denoted by  
$\hb[\procP]{\actname}{\actnamei}$, 
is defined in \figurename~\ref{fig:hbrel}. 
\begin{figure}
\full{
  \vspace{-5mm}
}{
  \vspace{-7mm}
}
\begin{center}
\framebox{$
\hspace*{0.4em}
\begin{array}{c}
\hbrulename{\ruleContinue}
\inferrule
{\prefMem\barb{\actname}\quad \procP\barb{\actnamei}}
{\hb[\prefMem\pCont\procP]{\actname}{\actnamei}}
\quad\ 
\hbrulename{\ruleTransitivity}
\inferrule
{\hb[\migoP]{\actname}{\actnamei}\quad \hb[\migoP]{\actnamei}{\actnameii}}
{\hb[\migoP]{\actname}{\actnameii}}
\quad\ 
\hbrulename{\ruleReduction}
\inferrule
{\procP \tra{}^{\ast} \procPi \quad \hb[\procPi]{\actname}{\actnamei}}
{\hb[\procP]{\actname}{\actnamei}}
\\[1em]
\!\begin{array}{cc}
\hbrulename{\ruleParLeft}
\inferrule
{\hb[\migoP]{\actname}{\actnamei}}
{\hb[\migoP\parr \migoQ]{\actleft{\actname}}{\actleft{\actnamei}}}
&
\hbrulename{\ruleParRight}
\inferrule
{\hb[\migoQ]{\actname}{\actnamei}}
{\hb[\migoP\parr \migoQ]{\actright{\actname}}{\actright{\actnamei}}}
\\[1.2em]
\hbrulename{\ruleUnlockLock}
\inferrule
{\migoP\barb{\lckbarb{\nLock}} \quad \migoP\barb{\ulckbarb{\nLock}}}
{\hb[\migoP]{\ulckbarb{\nLock}}{\lckbarb{\nLock}}}
&
\hbrulename{\ruleRUnlockLock}
\inferrule
{\migoP\barb{\lckbarb{\nLock}} \quad \migoP\barb{\rulckbarb{\nLock}}}
{\hb[\migoP]{\rulckbarb{\nLock}}{\lckbarb{\nLock}}}
\\[1.2em]
\hbrulename{\ruleUnlockRLock}
\inferrule
{\migoP\barb{\rlckbarb{\nLock}} \quad \migoP\barb{\ulckbarb{\nLock}}}
{\hb[\migoP]{\ulckbarb{\nLock}}{\rlckbarb{\nLock}}}
&
\hbrulename{\ruleLockRLock}
\inferrule
{\migoP\barb{\rlckbarb{\nLock}} \quad \migoP\barb{\lckbarb{\nLock}}}
{\hb[\migoP]{\lckbarb{\nLock}}{\rlckbarb{\nLock}}}
\\[1.2em]
\hbrulename{\ruleRestrict}
\inferrule
{\hb[\migoP]{\actname}{\actnamei}\quad \name \not\in \fn{\actname}\cup\fn{\actnamei}}
{\hb[\migoRes{\name}{\migoP}]{\actname}{\actnamei}}
&
\hbrulename{\ruleAlphaEquiv}
\inferrule
{\hb[\migoP]{\actname}{\actnamei}\quad \migoP\equiv_{\alpha} \migoQ}
{\hb[\migoQ]{\actname}{\actnamei}}
\end{array}
\end{array}
$}
\\[1mm]
We omit the symmetric rules for most rules ending in a parallel process
$\migoP\parr \migoQ$.
\end{center}
\full{
  \vspace{-3mm}
}{
  \vspace{-4.5mm}
}
\caption{Happens-Before Relation}\label{fig:hbrel}
\full{
  \vspace{-5mm}
}{
  \vspace{-5.56mm}
}
\end{figure}

It is a binary relation which is transitive, non-reflexive 
and non-symmetric, where  
$\actname,\actnamei \in \{
\stbarb{\nVar}{\occurgen},
\labload{\nVar},
\lablock{\nLock},
\labunlock{\nLock},
\labrlock{\nLock},
\labrunlock{\nLock}\}$. The operation $\actleft{\actname}$ denotes that  
occurrence $\occurgen$ in $\actname$ changes to $\occurleft{\occurgen}$, defined 
as before by 
$\actleft{\stbarb{\nVar}{\occurgen}}=\stbarb{\nVar}{\occurleft{}}$; 
otherwise $\actleft{\actname}=\actname$. The rules 
follow the specification in \cite{web:go-mem-model}.  

Rule \hbrulename{\ruleContinue} specifies that  
within a single goroutine, the happens-before order is 
the order expressed by the program. Rule \hbrulename{\ruleReduction} 
gives a form of inheritance: if $\procP$ reduces to $\procPi$ and $\procPi$ 
has an order between two actions, then $\procP$ accepts this order as valid as well, 
as it is a possible future. However, if $\hb[\procP]{\actname}{\actnamei}$, it does 
not necessarily hold for all of $\procP$'s reductions.

Rule \hbrulename{\ruleParLeft} replaces $\stbarb{\nVar}{\occurgen}$ with 
$\stbarb{\nVar}{\occurleft{}}$ if $\actname$ or $\actnamei$ is a write action. 
Rule \hbrulename{\ruleParRight} is symmetric. 
Rules \hbrulename{\ruleUnlockLock}, \hbrulename{\ruleRUnlockLock}, 
\hbrulename{\ruleUnlockRLock} and \hbrulename{\ruleLockRLock} specify the ordering 
between (read)locks and (read)unlocks, following the reduction 
semantics. 

The following definition states that 
if a write action happens concurrently with another 
write action or a read action to the same variable, 
the program has a data-race. 

\begin{definition}[Data Race]~\label{def:hb-drace}\rm
Program $\procProg$ has a {\em data race} if there exist 
two distinct actions $\actname[1]\neq \actname[2]$, two distinct occurrences
$\occurgen\neq \occurgeni$, 
and $\procProg\tra{}^{\ast}\pRes{\vGeneric}{\procP}$,
with $\actname[1] = \stbarb{\nVar}{\occurgen}$
and $\actname[2] \in \{\stbarb{\nVar}{\occurgeni},\ldbarb{\nVar}{\occurgeni}\}$, such that
$\procP\wbarb{\actname[1]}$, $\procP\wbarb{\actname[2]}$, 
$\nothb[\procP]{\actname[1]}{\actname[2]}$ and $\nothb[\procP]{\actname[2]}{\actname[1]}$. 
Program $\procProg$ is {\em data race free} if it has no data race.
\end{definition}

\noindent The following theorem states that the data race defined with 
the happens-before relation coincides with the characterisation 
given by barbs.%
\full{See Appendix~\ref{app:proofs} for the proof.}{The proof is by induction, see~\cite{JY20:full}.}

\begin{theorem}[Characterisation of Data Race]~\label{the:drace}\rm
$\procProg$ has a data race if and only if 
there exists $\procP$ such that $\procProg\tra{}^{\ast}\pRes{\vGeneric}{\procP}$ with 
$\procP\barb{\actname[1]}$, $\procP\barb{\actname[2]}$,  
$\actname[1] = \stbarb{\nVar}{\occurgen}$, 
$\actname[2] \in \{\stbarb{\nVar}{\occurgeni},\ldbarb{\nVar}{\occurgeni}\}$ and 
$\occurgen\neq \occurgeni$.
\end{theorem}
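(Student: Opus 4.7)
The plan is to prove the biconditional in two separate directions. For the $(\Leftarrow)$ direction, I assume a reachable state $\pRes{\vGeneric}{\procP}$ with $\procP\barb{\actname[1]}$, $\procP\barb{\actname[2]}$ as specified and distinct occurrences $\occurgen \neq \occurgeni$. Since $\barb{}$ implies $\wbarb{}$, it remains to verify $\nothb[\procP]{\actname[1]}{\actname[2]}$ and $\nothb[\procP]{\actname[2]}{\actname[1]}$. The distinct occurrences force the two prefixes to sit in distinct parallel subterms of $\procP$. I would then show by induction on the structure of a hypothetical HB derivation that no witness exists, examining each rule of \figurename~\ref{fig:hbrel}: \hbrulename{\ruleContinue} requires one action to head the continuation of the other in the same thread, contradicting the occurrence split; \hbrulename{\ruleParLeft} and \hbrulename{\ruleParRight} uniformly shift occurrences of both related actions and so cannot relate actions whose occurrence prefixes start with different parallel directions; the lock-specific rules only produce HB between lock barbs, and a transitivity chain through such rules would require intermediate lock barbs at $\procP$ which, given that $\actname[1]$ and $\actname[2]$ already occupy the heads of the two relevant threads, are unavailable; and \hbrulename{\ruleReduction} is handled by induction on the reduct.

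For the $(\Rightarrow)$ direction, I assume a data race per Definition~\ref{def:hb-drace}: a reachable $\pRes{\vGeneric}{\procP}$ with $\procP \wbarb{\actname[1]}$, $\procP \wbarb{\actname[2]}$ and no HB either way. I must exhibit a reduct $\procPi$ of $\procP$ with $\procPi \barb{\actname[1]}$ and $\procPi \barb{\actname[2]}$ directly. From the weak barbs I extract reduction sequences $\procP \tra{}^{\ast} \procP[1]$ and $\procP \tra{}^{\ast} \procP[2]$ with $\procP[i] \barb{\actname[i]}$. The plan is a diamond-style argument: since neither HB holds at $\procP$, neither the prefix carrying $\actname[1]$ nor the one carrying $\actname[2]$ is required as a synchronisation partner in the reduction sequence exposing the other. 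Hence the two sequences can be interleaved into a common reduction $\procP \tra{}^{\ast} \procPi$ in which both prefixes remain unfired and are therefore offered as immediate barbs at $\procPi$ with the original occurrences.

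The principal obstacle is the forward direction, specifically the diamond-style commutation. Reductions in \gol involve store synchronisations, lock acquisitions, and conditional branching, so a naive commutation of steps does not suffice. The key technical lemma I would need is that a prefix $\prefMem$ offering a memory action $\actname$ can be held stable across the reductions exposing any other action $\actname'$ precisely when $\actname'$ is not forced by HB at the current state to precede $\actname$. Proving this relies on the structural correspondence between the \hbrulename{\ruleContinue}, \hbrulename{\ruleParLeft}, \hbrulename{\ruleParRight}, and lock-synchronisation rules of \figurename~\ref{fig:hbrel} on one hand, and the LTS steps that fire the corresponding prefixes on the other. Careful bookkeeping of occurrences under the shifting operators $\actleft{\cdot}$ and $\actright{\cdot}$ is further required to ensure the original occurrences $\occurgen$ and $\occurgeni$ are preserved in the combined reduction sequence.
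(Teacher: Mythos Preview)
Your $(\Leftarrow)$ direction is essentially the paper's: both assume an HB derivation between the two simultaneously-offered memory actions and derive a contradiction by induction on that derivation, with \hbrulename{\ruleContinue} as the base case (forcing both actions into one sequential thread, contradicting the occurrence split) and the structural rules \hbrulename{\ruleParLeft}, \hbrulename{\ruleParRight}, \hbrulename{\ruleRestrict}, \hbrulename{\ruleTransitivity} handled inductively.

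For $(\Rightarrow)$ the paper takes a different route from your diamond argument. Rather than merging the two reduction sequences that witness the weak barbs, it argues by contradiction: assume no reduct of $\procP$ offers both barbs simultaneously, reduce until one barb (say $\stbarb{\nVar}{\occurgen}$) is exposed at the head of a parallel component $\procP[1]$, and then reduce the other component $\procP[2]$ as far as possible without touching $\procP[1]$. If $\actname[2]$ never surfaces, $\procP[2]$ must be stuck on a blocking prefix (a channel send/receive or a lock acquisition) whose only matching partner lies in $\procP[1]$ \emph{behind} the write. That synchronisation dependency is exactly what one of the HB rules \hbrulename{\ruleUnlockLock}, \hbrulename{\ruleRUnlockLock}, \hbrulename{\ruleSendReceiveSync}, \hbrulename{\ruleSendReceiveAsync}, etc.\ records; chaining it with two uses of \hbrulename{\ruleContinue} via \hbrulename{\ruleTransitivity} and lifting through \hbrulename{\ruleReduction} yields $\hb[\procP]{\stbarb{\nVar}{\occurgen}}{\actname[2]}$, contradicting the no-HB hypothesis.

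The paper's contradiction framing is tighter because the HB rules are designed to mirror the blocking primitives: a stuck state directly manufactures an HB edge, so the argument closes in one step. Your interleaving plan is not wrong in spirit, but the commutation lemma you flag as the ``principal obstacle'' is genuinely delicate here---\gol reductions are not confluent (loads observe prior stores, conditionals branch on those values, locks are stateful), so you cannot freely permute steps from the two witnessing sequences. Showing that \emph{some} interleaving keeps both prefixes unfired would, in the hard case where one thread blocks, force you into the same stuck-state analysis the paper performs, just packaged inside your lemma rather than at the top level.
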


\begin{example}[Processes from \figurename~\ref{fig:running-ex-rwmut-race}]
\label{ex:reduction}
We show a possible reduction of $\migoProg[\mbox{\scriptsize race}]$ in 
Example~\ref{ex:syntax} that causes the (bad) race. 
\[
\begin{array}{@{}r@{\ }c@{\ }l@{}}
\small
\procProg[\m{race}] & = &
    \pNewMem{\nVar}{\m{int}}\pCont
      \pNewRWLock{\nLock}\pCont
      \left(\begin{array}{@{}l@{\ }l}
        \pPar{
	  & \pRLock{\nLock}\pCont
	   \pLoad{\nVar}{\nTmp[1]}\pCont
	   \pStore{\nVar}{\nTmp[1]+10}\pCont
	   \pRUnlock{\nLock}\pCont\\
	  & \ \pRLock{\nLock}\pCont
	   \pLoad{\nVar}{\nTmp[2]}\pCont
	   \pSilent\pCont
	   \pRUnlock{\nLock}\pCont
	   \zero\\}{
          & \pRLock{\nLock}\pCont
	   \pLoad{\nVar}{\nTmp[0]}\pCont
	   \pStore{\nVar}{\nTmp[0]+20}\pCont
	   \pRUnlock{\nLock}\pCont
	\zero}
	\end{array}
      \right)\\
  & \tra{}^{2} & \pRes{\nVar\nLock}{
      \left(\begin{array}{@{}l@{\ }l}
        \pPar{\pPar{\pPar{
	  & \pRLock{\nLock}\pCont
	   \pLoad{\nVar}{\nTmp[1]}\pCont
	   \pStore{\nVar}{\nTmp[1]+10}\pCont
	   \pRUnlock{\nLock}\pCont
\\
	  & \ 
\pRLock{\nLock}\pCont
	   \pLoad{\nVar}{\nTmp[2]}\pCont
	   \pSilent\pCont
	   \pRUnlock{\nLock}\pCont
	   \zero\\}{
          & \pRLock{\nLock}\pCont
	   \pLoad{\nVar}{\nTmp[0]}\pCont
	   \pStore{\nVar}{\nTmp[0]+20}\pCont
	   \pRUnlock{\nLock}\pCont
	\zero}}{\stMem{\nVar}{\m{int}}{0}}}{\stRWLock{\nLock}{0}}
	\end{array}
      \right)}\\
  & \tra{}^{6} & \pRes{\nVar\nLock}{
      \left(\begin{array}{@{}l@{\ }l}
        \pPar{\pPar{\pPar{
	  & \pStore{\nVar}{10}\pCont
	   \pRUnlock{\nLock}\pCont
	   \pRLock{\nLock}\pCont
	   \pLoad{\nVar}{\nTmp[2]}\pCont
	   \pSilent\pCont
	   \pRUnlock{\nLock}\pCont
	   \zero\\}{
          & \pStore{\nVar}{20}\pCont
	   \pRUnlock{\nLock}\pCont
	\zero}}
	  {\stMem{\nVar}{\m{int}}{0}}}
	 {\stRWLock{\nLock}{2}}
	\end{array}
      \right)} = \pRes{\nVar\nLock}{\procPi}
\end{array}
\]
\label{runnig-ex-red}%
Note that the first line is obtained by rewriting using the process definition 
structure and the \rulename{\ruleDefinition} rule, that tells us the 
rewritten program and the program with calls share the same reductions.
Then we have $\procPi\barb{\stbarb{\varname}{\occurleft{\occurleft{\occurleft{\occurnone}}}}}$ and
$\procPi\barb{\stbarb{\varname}{\occurleft{\occurleft{\occurright{\occurnone}}}}}$, 
hence $\procProg[\m{race}]$ has a data race.

On the other hand, $\procProg[\m{safe}]$ is data race free, which is ensured by 
checking every reduction chain of the process for the absence of data race.
\end{example}

\section{A Behavioural Typing System for \gol}\label{sect:core_typ}
Our typing system introduces
types for locks and shared memory, 
representing the status of runtime processes 
accessing to shared variables. 
It serves as a behavioural abstraction of a valid \gol program,
where types take the form of CCS processes with name creation.

\subsection{Behavioural Types with Shared Variables and Mutexes}
\label{subsec:typesyntax}
The syntax of types ($\tyT,\tyS,...$) and the structural congruence for the types are 
given in \figurename~\ref{fig:types-syntax}.
\begin{figure}
\begin{center}
\vspace{-5mm}
\framebox{
\begin{tabular}{l}
\begin{tabular}{c|c}
$\!\!
\begin{array}{rcl}
\tyT,\tyS & \coloneqq & \prefTyMem\tyCont\tyT \ \mid \ \tyPPar{\tyT}{\tyS} 
	\ \mid \ \zero \ \mid \ \tyRes{\nGeneric}{\tyT}\\
    &  |  & \tyChoice{\tyT[i]}{i\in I} \ \mid \ \tyCall{\procX}{\vGeneric} \ \mid \ \tyNewMem{\nVar}\tyT\\
    &  |  & \tyNewLock{\nLock}{\tyT} \ \mid \ \tyNewRWLock{\nLock}\tyT\\
    &  |  & \tyMem{\nVar} \ \mid \ \tyLockU{\nLock} \ \mid \ \tyLockL{\nLock}\\
    &  |  & \tyRWLock{\nLock}{i} \ \mid \ \tyRWLockL{\nLock}{i} \ \mid \ \tyRWLockW{\nLock}{i}
\end{array}
$
& 
$\begin{array}{lcl}
\tyEqn & \coloneqq & \tyIn{\{ \tyCallee[i](\vVary[i]) = \tyT[i] \}_{i\in I}}{\tyT}\\
\prefTyMem & \coloneqq & \tySilent \ \mid \ \tyLoad{\nVar}\\
    & \mid & \tyStore{\nVar} \ \mid \ \prefTyLock \\
\prefTyLock & \coloneqq & \tyLock{\nLock} \ \ \mid \ \ \tyUnlock{\nLock}\\
    & \mid & \tyRLock{\nLock} \ \mid \ \tyRUnlock{\nLock}
\end{array}\!\!
$
\end{tabular}
\\
\hline
\\[-0.7em]
$
\begin{array}{c}
\tyPar{\tyT}{\tyS} \equiv \tyPar{\tyS}{\tyT}
\qquad
\tyPar{\tyT}{\tyPPar{\tyS}{\tySi}} \equiv \tyPar{\tyPPar{\tyT}{\tyS}}{\tySi}
\qquad 
\tyPar{\tyT}{\zero} \equiv \tyT
\qquad
\tyRes{\nVar}{\tyMem{\nVar}} \equiv \zero
\\[1mm]
\tyRes{\nLock}{\tyLockU{\nLock}} \equiv \zero
\qquad
\tyRes{\nLock}{\tyLockL{\nLock}} \equiv \zero
\qquad
\tyRes{\nLock}{\tyRWLock{\nLock}{i}} \equiv \zero
\qquad
\tyRes{\nLock}{\tyRWLockL{\nLock}{i}} \equiv \zero
\qquad
\tyRes{\nLock}{\tyRWLockW{\nLock}{i}} \equiv \zero
\\[1mm]
\tyRes{\nGeneric}{\tyRes{\nGenerici}{\tyT}} \equiv 
  \tyRes{\nGenerici}{\tyRes{\nGeneric}{\tyT}}
\qquad
\tyPar{\tyT}{\tyRes{\nGeneric}{\tyS}} \equiv 
  \tyRes{\nGeneric}{\tyPPar{\tyT}{\tyS}}
\ \mbox{\scriptsize $(\nGeneric\not\in \fn{\tyT})$}
\end{array}
$
\end{tabular}
}
\end{center}
\vspace{-3mm}
\caption{Syntax of the types.}\label{fig:types-syntax}
\vspace{-5mm}
\end{figure}

The type $\prefTyMem\tyCont\tyT$ denotes a store 
$\tyStore{\nGeneric}$, load $\tyLoad{\nGeneric}$ of shared
variable $\nGeneric$, lock $\tyLock{\nLock}$, unlock $\tyUnlock{\nLock}$,
rlock $\tyRLock{\nLock}$, runlock $\tyRUnlock{\nLock}$ of a (rw)lock $\nLock$, 
followed by the behaviour denoted by type $\tyT$.
It also includes 
an explicit silent action $\tySilent$ followed by the behaviour $\tyProc$.

The type constructs $\tyMem{\nVar}$, $\tyLockU{\nLock}$, $\tyLockL{\nLock}$,
$\tyRWLock{\nLock}{i}$, $\tyRWLockL{\nLock}{i}$ and $\tyRWLockW{\nLock}{i}$
denote the type representations of runtime 
shared variable, unlocked and locked locks,
unlocked (or read-locked), locked and lock-waiting rwlocks,
respectively.
Types for variables and locks include 
shared variable and (rw)lock
creation $\tyNewMem{\nVar}{\tyT}$, $\tyNewLock{\nLock}{\tyT}$ and 
$\tyNewRWLock{\nLock}{\tyT}$ which respectively 
bind $\nVar$ and $\nLock$ in $\tyT$. 
$\fn{\tyT}$ denotes the set of free names of type $\tyT$.

\subsection{Typing System with Shared Variables and Mutexes}

Our typing system is defined in \figurename~\ref{fig:goldy-typing}.
\begin{figure}[ht]
\vspace{-5mm}
\begin{center}
\framebox{
\begin{tabular}{c}
$
\!\!\!\begin{array}{c}
\begin{array}{ll}
\framebox{$\G \vdash \procP \ts \tyT$}
\hfill
\trulename{\ruleTypingZero}
\inferrule
{ }
{\G \vdash \zero \ts \zero}
&
\trulename{\ruleNewMem}
\inferrule
{\G \ctxComp \typing{\nVar}{\typeMem{\sigma}} \vdash \procP \ts \tyT}
{\G \vdash {\pNewMem{\nVar}{\sigma}}\pCont\procP \ts \tyNewMem{\nVar}\tyT}
\\[1.5em]
\trulename{\ruleNewMut}
\inferrule
{\G \ctxComp \typing{\nLock}{\typeLock} \vdash \procP \ts \tyT}
{\G \vdash \pNewLock{\nLock}\pCont\procP \ts \tyNewLock{\nLock}\tyT}
&
\trulename{\ruleNewRMut}
\inferrule
{\G \ctxComp \typing{\nLock}{\typeLock} \vdash \procP \ts \tyT}
{\G \vdash \pNewRWLock{\nLock}\pCont\procP \ts \tyNewRWLock{\nLock}\tyT}
\\[1.5em]
\trulename{\ruleMutLockReq}
\inferrule
{\G \vdash \typing{\nLock}{\typeLock} \quad \G \vdash \procP \ts \tyT}
{\G \vdash \pLock{\nLock}\pCont\procP \ts \tyLock{\nLock}\tyCont\tyT}
&
\trulename{\ruleMemStoreReq}
\inferrule
{\G \vdash \typing{\nVar}{\typeMem{\sigma}} \quad \G \vdash \typing{\nExpr}{\sigma} \quad \G \vdash \procP \ts \tyT} 
{\G \vdash \pStore{\nVar}{\nExpr}\pCont\procP \ts \tyStore{\nVar}\tyCont\tyT}
\\[1.5em]
\trulename{\ruleMutUnlockReq}
\inferrule
{\G \vdash \typing{\nLock}{\typeLock} \quad \G \vdash \procP \ts \tyT}
{\G \vdash \pUnlock{\nLock}\pCont\procP \ts \tyUnlock{\nLock}\tyCont\tyT}
&
\trulename{\ruleMemLoadReq}
\inferrule
{\G \vdash \typing{\nVar}{\typeMem{\sigma}} \quad \G \ctxComp \typing{\nVary}{\typeMem{\sigma}} \vdash \procP \ts \tyT}
{\G \vdash \pLoad{\nVar}{\nVary}\pCont\procP \ts \tyLoad{\nVar}\tyCont\tyT}
\\[1.5em]
\trulename{\ruleRMutRLockReq}
\inferrule
{\G \vdash \typing{\nLock}{\typeLock} \quad \G \vdash \procP \ts \tyT}
{\G \vdash \pRLock{\nLock}\pCont\procP \ts \tyRLock{\nLock}\tyCont\tyT}
&
\trulename{\ruleRMutRUnlockReq}
\inferrule
{\G \vdash \typing{\nLock}{\typeLock} \quad \G \vdash \procP \ts \tyT}
{\G \vdash \pRUnlock{\nLock}\pCont\procP \ts \tyRUnlock{\nLock}\tyCont\tyT}
\\[1.5em]
\trulename{\ruleSilentAction}
\inferrule
{\G \vdash \procP \ts \tyT}
{\G \vdash \pSilent\pCont\procP \ts \tySilent\tyCont\tyT}
&
\trulename{\ruleTypingCall}
\inferrule
{\G \vdash \typing{\vExpr}{\tilde{\sigma}} \quad \G \vdash \typing{\vGeneric}{\vTypeGen}}
{\G, \procX(\tilde{\sigma},\vTypeGen) \vdash \pCall{\procX}{\vExpr}{\vGeneric} \ts \tyCall{\procX}{\vGeneric}}
\\[1.5em]
\trulename{\rulePar}
\inferrule
{\G \vdash \procP \ts \tyT \quad \G \vdash \procQ \ts \tyS}
{\G \vdash \pPar{\procP}{\procQ} \ts \tyPPar{\tyT}{\tyS}}
&
\trulename{\ruleITEConstruct}
\inferrule
{\G \vdash \typing{\nExpr}{\typeBool} \quad \G \vdash \procP \ts \tyT \quad \G \vdash \procQ \ts \tyS}
{\G \vdash \pITE{\nExpr}{\procP}{\procQ} \ts \tyITE{\tyT}{\tyS}}
\\[1.5em]
\end{array}
\\[1.5em]
\hline
\\[-0.3em]
\framebox{$\G \vdash_B \procP \ts \tyT$}
\hspace{1em}
\trulename{\ruleTypingMut}
\inferrule
{\G \vdash \typing{\nLock}{\typeLock}}
{\G \vdash_{\{\nLock\}} \stLock{\nLock} \ts \tyLockU{\nLock}}
\hspace{1em}
\trulename{\ruleTypingMutLocked}
\inferrule
{\G \vdash \typing{\nLock}{\typeLock}}
{\G \vdash_{\{\nLock\}} \stLockL{\nLock} \ts \tyLockL{\nLock}}
\hspace{1em}
\trulename{\ruleTypingRMut}
\inferrule
{\G \vdash \typing{\nLock}{\typeLock}}
{\G \vdash_{\{\nLock\}} \stRWLock{\nLock}{i} \ts \tyRWLock{\nLock}{i}}
\\[1.5em]
\trulename{\ruleTypingRMutLocked}
\inferrule
{\G \vdash \typing{\nLock}{\typeLock}}
{\G \vdash_{\{\nLock\}} \stRWLockL{\nLock}{i} \ts \tyRWLockL{\nLock}{i}}
\hfill
\trulename{\ruleTypingRMutStaged}
\inferrule
{\G \vdash \typing{\nLock}{\typeLock}}
{\G \vdash_{\{\nLock\}} \stRWLockW{\nLock}{i} \ts \tyRWLockW{\nLock}{i}}
\hfill
\trulename{\ruleTypingMem}
\inferrule
{\G \vdash \typing{\nVar}{\typeMem{\sigma}}}
{\G \vdash_{\{\nVar\}}\stMem{\nVar}{\sigma}{\nVal} \ts \tyMem{\nVar}}
\\[1.5em]
\trulename{\ruleRestrict}
\inferrule
{\G \ctxComp \typing{\nGeneric}{\nTypeGen} \vdash_{B} \procP \ts \tyT}
{\G \vdash_{B\backslash\nGeneric} \pRes{\nGeneric}{\procP} \ts \tyRes{\nGeneric}{\tyT}}
\hfill
\trulename{\ruleTypingPar}
\inferrule
{\G \vdash_{B_1} \procP \ts \tyT \quad \G \vdash_{B_2} \procQ \ts \tyS \quad \mbox{\tiny $B_1 \cap B_2 = \emptyset$}}
{\G \vdash_{B_1 \cup B_2} \tyPar{\procP}{\procQ} \ts \tyPPar{\tyT}{\tyS}}
\\[1.5em]
\hline
\\[-0.3em]
\framebox{$\G \vdash \procProg \ts \tyEqn$}\hfill~\\[0.7em]
\hfill
\trulename{\ruleDefinition}
\inferrule
{\forall i\in I : \G \ctxComp {\procX[i](\tilde{\sigma}_i,\vTypeGen[i])}
    \ctxComp \typing{\vVar[i]}{\tilde{\sigma}_i}
    \ctxComp \typing{\vVary[i]}{\vTypeGen[i]}
 \vdash \procP[i] \ts \tyT[i]
 \\
 \G \ctxComp \procX[1](\tilde{\sigma}_1,\vTypeGen[1])
    \ctxComp \ldots
    \ctxComp \procX[n](\tilde{\sigma}_n,\vTypeGen[n])
 \vdash \procQ \ts \tyS
}
{\G \vdash \pIn{\{\pDef{\procX[i]}{\vVar[i],\vVary[i]}{\procP[i]} \}_{i\in I}}{\procQ}
    \ts \tyIn{\{\tyDef{\tyCallee[{\procX[i]}]}{\vVary[i]}{\tyT[i]}\}_{i \in I}}{\tyS}
}
\end{array}\!\!\!
$
\end{tabular}
}
\end{center}
\vspace{-3mm}
\caption{Typing Rules for Shared Variables and Mutexes.\label{fig:goldy-typing}}
\vspace{-5mm}
\end{figure}

The judgement ($\G \vdash \procP \ts \tyT$),
where $\G$ is a typing environment that maintains information about locks and 
shared variables, and types the part of a term explicitly written by the developer. 
We write $\G \vdash \mathcal{J}$ for $\mathcal{J} \in \G$ and
$\G \vdash \typing{\nExpr}{\sigma}$ to state that the expression $\nExpr$ is
well-typed according to the types of variables in $\G$. We write 
$\typing{\nGeneric}{\nTypeGen}$ for the typing of a name in generality, 
which can be (1)
$\typing{\nVar}{\typeMem{\sigma}}$ to denote a shared variable $\nVar$ 
with stored value type $\sigma$ and (2) 
$\typing{\nLock}{\typeLock}$ to state that $\nLock$ is a (rw)lock.
We omit the rules of expressions $\nExpr$. 
We write $\domain{\G}$ to denote 
the set of locks and shared variable bindings in $\G$.

The rules are as follows. 
Rules \trulename{\ruleMemLoadReq} and \trulename{\ruleMemStoreReq} type 
load and store types for shared variable $\nVar$ where 
the type of the stored value 
matches the payload type $\sigma$ of value $\nVar$, and the
continuation $\procP$ has type $\tyT$.
Rules \trulename{\ruleMutLockReq} and \trulename{\ruleMutUnlockReq} (and \trulename{\ruleRMutRLockReq} and
\trulename{\ruleRMutRUnlockReq}) type the lock actions in processes 
by corresponding types. There
is no payload type to check, only that the lock name is associated to
a lock or read-write lock.
Rules {\trulename{\ruleNewMem}} and {\trulename{\ruleNewMut} (resp. \trulename{\ruleNewRMut}}) allocate a
fresh shared variable name with payload type
$\sigma$ or a lock (resp. rwlock). Other context rules 
are standard. 

The judgement ($\G \vdash_B P \ts \tyT$) 
types process created during execution of
a program and provides the invariants to prove the type safety. 
$B$ is a set of shared variables and locks with associated
runtime buffers to ensure their uniqueness. 
A shared variable heap is typed with rule {\trulename{\ruleTypingMem}},
and all five states of locks are typed by corresponding lock types. 
Restriction is typed here, as it takes the relevant type out of the typing context 
and removes the corresponding name from $B$.

The judgement ($\G \vdash \procProg \ts \tyEqn$) types a program, 
that consists of a process and a set of runtime stores, 
accordingly to their respective types. 

We use the structural congruence on types 
to define {\em normal forms} of types in the same way as done for 
\gol terms in Definition~\ref{def:migo-normal-form}, and study further 
properties on types up to normal form.
Examples of typing of processes can be found in Example~\ref{ex:type}.

\begin{example}\label{ex:type}
The unsafe program of \figurename~\ref{fig:running-ex-rwmut-race}, modelled by process $\procProg[\m{race}]$ in
Example~\ref{ex:syntax}, has the following type:
\[
\tyEqn[\m{race}] \coloneqq \tyIn
  {
    \left\{\begin{array}{@{}l@{\ }l@{\ }l}
      \tyEntryProc & = & \tyNewMem{\nVar}
        \tyNewRWLock{\nLock}
        \tyPPar{\tyCall{\procP}{\nVar,\nLock}}
               {\tyCall{\procQ}{\nVar,\nLock}}
      \\
      \tyCallee[\procP](\nVary,\nVarz) & = & \tyRLock{\nVarz}\tyCont
        \tyLoad{\nVary}\tyCont
        \tyStore{\nVary}\tyCont
        \tyRUnlock{\nVarz}\tyCont
        \tyRLock{\nVarz}\tyCont
        \tyLoad{\nVary}\tyCont
        \tySilent\tyCont
        \tyRUnlock{\nVarz}\tyCont
        \zero
      \\
      \tyCallee[\procQ](\nVary,\nVarz) & = & \tyRLock{\nVarz}\tyCont
        \tyLoad{\nVary}\tyCont
        \tyStore{\nVary}\tyCont
        \tyRUnlock{\nVarz}\tyCont
        \zero
    \end{array}\right\}
  }
  {\tyEntryPt}
\]

The safe version in \figurename~\ref{fig:running-ex-rwmut-safe}, modelled
by 
process $\procProg[\m{safe}]$ 
in Example~\ref{ex:syntax}, has type:
\[
\tyEqn[\m{safe}] \coloneqq \tyIn
  {
    \left\{\begin{array}{@{}l@{\ }l@{\ }l}
      \tyEntryProc & = & \tyNewMem{\nVar}
        \tyNewRWLock{\nLock}
        \tyPPar{\tyCall{\procP}{\nVar,\nLock}}
               {\tyCall{\procQ}{\nVar,\nLock}}
      \\
      \tyCallee[\procP](\nVary,\nVarz) & = & \tyLock{\nVarz}\tyCont
        \tyLoad{\nVary}\tyCont
        \tyStore{\nVary}\tyCont
        \tyUnlock{\nVarz}\tyCont
        \tyRLock{\nVarz}\tyCont
        \tyLoad{\nVary}\tyCont
        \tySilent\tyCont
        \tyRUnlock{\nVarz}\tyCont
        \zero
      \\
      \tyCallee[\procQ](\nVary,\nVarz) & = & \tyLock{\nVarz}\tyCont
        \tyLoad{\nVary}\tyCont
        \tyStore{\nVary}\tyCont
        \tyUnlock{\nVarz}\tyCont
        \zero
    \end{array}\right\}
  }
  {\tyEntryPt}
\]
\end{example}

\subsection{Operational Semantics of the Behavioural Types}
\label{sub:sync-ty-sem}

This section defines the semantics of our types.
The labels, ranged over by $\actname, \actnamei$, have the form:
\begin{center}\framebox{$
\begin{array}{@{\,}r@{\,}l@{\,}}
\actname \coloneqq & 
\labtyload{\nVar} \ | \ \labtystore{\nVar}{\occurgen} \ | \ 
\labtylock{\nLock} \ | \ \labtyunlock{\nLock} \ | \
\labtyrlock{\nLock} \ | \ \labtyrunlock{\nLock} 
\ | \ \labheap{\nVar}
\ | \ \mutbarb{\nLock} \ | \ \lckmutbarb{\nLock}
\
| \ 
\rmutbarb{\nLock}{i}\ | \ \waitrmutbarb{\nLock}{i}\ | \ \waitmutbarb{\nLock} \ | \ 
\acttau \ | \ \labsync{\nGeneric} 
\end{array}
$}\end{center}
The labels denote the actions introduced in
this paper: 
load and store actions, lock, unlock, rlock and runlock actions, 
shared heap manipulation, and the five kinds of 
(rw)lock state transitions. The end of the line is for 
silent transition and
synchronisation over a name. 

The semantics of our types is given by the
labelled transition system (LTS) (modulo $\alpha$-conversion), 
extending that of CCS, which is shown 
in \figurename~\ref{fig:types-sem}. 
\begin{figure}[h]
\vspace{-3mm}
\begin{center}
\framebox{
\begin{tabular}{l}
\begin{tabular}{l|l}
$
\!\!\!\!\!\!\!\!\!\!\!
\begin{array}{l}
  \framebox{\text{Lock and Memory actions}}
  \\
  \begin{array}{ll}
   \ltsrulename{\ruleMutLockReq}&
    {\tyLock{\nLock}\tyCont\tyT}\tra{\labtylock{\nLock}}{\tyT}
  \\
   \ltsrulename{\ruleMutUnlockReq}&
    {\tyUnlock{\nLock}\tyCont\tyT}\tra{\labtyunlock{\nLock}}{\tyT}
  \\
   \ltsrulename{\ruleRMutRLockReq}&
    {\tyRLock{\nLock}\tyCont\tyT}\tra{\labtyrlock{\nLock}}{\tyT}
  \\
   \ltsrulename{\ruleRMutRUnlockReq}&
    {\tyRUnlock{\nLock}\tyCont\tyT}\tra{\labtyrunlock{\nLock}}{\tyT}
  \\
   \ltsrulename{\ruleMemLoadReq}&
    {\tyLoad{\nVar}\tyCont\tyT}\tra{\labtyload{\nVar}}{\tyT}
  \\
   \ltsrulename{\ruleMemStoreReq}&
    {\tyStore{\nVar}\tyCont\tyT}\tra{\labtystore{\nVar}{\occurnone}}{\tyT}
  \\
  \hline
   \ltsrulename{\ruleMutLockAck}& 
    {\tyLockU{\nLock}}\tra{\mutbarb{\nLock}}
    {\tyLockL{\nLock}}
  \\
   \ltsrulename{\ruleMutUnlockAck}& 
    {\tyLockL{\nLock}}\tra{\lckmutbarb{\nLock}}
    {\tyLockU{\nLock}}
  \\
    \ltsrulename{\ruleRMutLockAck}&
    {\tyRWLockW{\nLock}{0}}\tra{\mutbarb{\nLock}}
    {\tyRWLockL{\nLock}{0}}
  \\
   \ltsrulename{\ruleRMutUnlockAck}&
    {\tyRWLockL{\nLock}{0}}\tra{\lckmutbarb{\nLock}}
    {\tyRWLock{\nLock}{0}}
  \\
   \ltsrulename{\ruleRMutRLockAck}&
    {\tyRWLock{\nLock}{i}}\tra{\rmutbarb{\nLock}{i}}
    {\tyRWLock{\nLock}{i+1}}
  \\
   \ltsrulename{\ruleRMutRUnlockAck}&
    {\tyRWLock{\nLock}{i+1}\tra{\waitrmutbarb{\nLock}{i}}\tyRWLock{\nLock}{i}}
  \\
   \ltsrulename{\ruleRMutLockStageAck}&
    {\tyRWLock{\nLock}{i}}\tra{\waitmutbarb{\nLock}}
    {\tyRWLockW{\nLock}{i}}
  \\
   \ltsrulename{\ruleRMutStagedRUnlockAck}&
    {\tyRWLockW{\nLock}{i+1}\tra{\waitrmutbarb{\nLock}{i}}\tyRWLockW{\nLock}{i}}
  \\[0.2em]
   \ltsrulename{\ruleMemActionAck}& 
    {\tyMem{\nVar}}\tra{\labheap{\nVar}}{\tyMem{\nVar}}
  \\[.2em]
  \end{array}
  \end{array}
$
&\
$
\begin{array}{l}
  \framebox{\text{Synchronisation rules}}
  \\[1em]
  \begin{array}{c}
  \ltsrulename{\ruleMemAction}\
    \inferrule
    {\tyT \semty{\actname} \tyTi
      \quad
      \tyS \semty{\labtyheap{\nVar}} \tySi
      \quad
      \mbox{\scriptsize $\actname = {\labtystore{\nVar}{\occurgen}}, {\labtyload{\nVar}}$}
    }
    {\tyPar{\tyT}{\tyS}\tra{\labsync{\nVar}}
     \tyPar{\tyTi}{\tySi}}
  \\[1em]
  \ltsrulename{\ruleMutLock}\
    \inferrule
    {\tyT\tra{\lckbarb{\nLock}}\tyTi \quad \tyS\tra{\mutbarb{\nLock}}\tySi}
    {\tyPar{\tyT}{\tyS} \tra{\labsync{\nLock}} \tyPar{\tyTi}{\tySi}}
  \\[1em]
  \ltsrulename{\ruleMutUnlock}\
    \inferrule
    {\tyT\tra{\ulckbarb{\nLock}}\tyTi \quad \tyS\tra{\lckmutbarb{\nLock}}\tySi}
    {\tyPar{\tyT}{\tyS}\tra{\labsync{\nLock}} \tyPar{\tyTi}{\tySi}}
  \\[1em]
    \ltsrulename{\ruleRMutRLock}
    \inferrule
    {\tyT\tra{\rlckbarb{\nLock}}\tyTi \quad \tyS\tra{\rmutbarb{\nLock}{i}}\tySi}
    {\tyPar{\tyT}{\tyS} \tra{\labsync{\nLock}} \tyPar{\tyTi}{\tySi}}
  \\[1em]
    \ltsrulename{\ruleRMutRUnlock}
    \inferrule
    {\tyT\tra{\rulckbarb{\nLock}}\tyTi \quad \tyS\tra{\waitrmutbarb{\nLock}{i}}\tySi}
    {\tyPar{\tyT}{\tyS} \tra{\labsync{\nLock}} \tyPar{\tyTi}{\tySi}}
  \\[1em]
    \ltsrulename{\ruleRMutLockStage}
    \inferrule
    {\tyT\tra{\lckbarb{\nLock}}\tyTi \quad \tyS\tra{\waitmutbarb{\nLock}}\tySi}
    {\tyPar{\tyT}{\tyS} \tra{\labsync{\nLock}} \tyPar{\tyT}{\tySi}}
  \\[1.2em]
  \ltsrulename{\ruleSilentAction}\
   \tySilent\tyCont\tyT\tra{\acttau}\tyT
  \end{array}\hspace*{-5mm}
  \end{array}\hspace*{-5mm}
$
\end{tabular}\hspace*{-5mm}
 \\
 \hline
 \\[-0.9em]
 $\!\!\!\!\!
 \begin{array}{l}
 \!\!\!\begin{array}{lr@{\ }l}
 \framebox{Runtime structures creation}
 &
 \ltsrulename{\ruleNewMem}
 &
    {\tyNewMem{\nVar}{\tyT}\tra{\acttau}
      \tyRes{\nVar}{\tyPPar{\tyT}{\tyMem{\nVar}}}}
  \\[.4em]
 \ltsrulename{\ruleNewMut}\ 
    {\tyNewLock{\nLock}{\tyT}\tra{\acttau}
     \tyRes{\nLock}{\tyPPar{\tyT}{\tyLockU{\nLock}}}}
  \hspace{8em}~
 &
 \ltsrulename{\ruleNewRMut}
 &
    {\tyNewRWLock{\nLock}{\tyT}\tra{\acttau}
     \tyRes{\nLock}{\tyPPar{\tyT}{\tyRWLock{\nLock}{0}}}}
 \\[.4em]
 \end{array}
 \\[1em]
 \hline
 \\[-0.7em]
 \framebox{Context rules}
 \hfill
 \ltsrulename{\ruleAlphaEquiv}
  \inferrule
   {\tyT \equiv \tyTi \quad \tyTi\tra{\actname}\tyTii}
   {\tyT \tra{\actname} \tyTii}
 \hfill
 \ltsrulename{\ruleITEConstruct}
  \inferrule
   {j \in I}
   {\tyChoice{\tyT[i]}{i\in I}\tra{\acttau}\tyT[j]}
 \\[1em]
 \ltsrulename{\ruleRestrictFree}
  \inferrule
   {\tyT \tra{\actname} \tyTi \quad \nGeneric\notin \fn{\actname}}
   {\tyRes{\nGeneric}{\tyT}\tra{\actname}
    \tyRes{\nGeneric}{\tyTi}}
 \hfill
 \ltsrulename{\ruleRestrictBind}
  \inferrule
   {\tyT \tra{\labsync{\nGeneric}} \tyTi}
   {\tyRes{\nGeneric}{\tyT}\tra{\acttau}
    \tyRes{\nGeneric}{\tyTi}}
 \hfill
 \ltsrulename{\ruleParLeft}
  \inferrule
   {\tyT \tra{\actname} \tyTi}
   {\tyPar{\tyT}{\tyS} 
     \tra{\actleft{\actname}} \tyPar{\tyTi}{\tyS}}
 \\[1em]
 \ltsrulename{\ruleParRight}
  \inferrule
   {\tyS \tra{\actname} \tySi}
   {\tyPar{\tyT}{\tyS} 
     \tra{\actright{\actname}} \tyPar{\tyT}{\tySi}}
 \hfill
 \ltsrulename{\ruleDefinition}
  \inferrule
   {\tyPar{\tyT\subs{\vGeneric}{\vVar}}{\tyS} \tra{\actname}
    \tyTi \quad \tyCallee[\procX](\vVar) = \tyT}
   {\tyPar{\tyCall{\procX}{\vGeneric}}{\tyS}
    \tra{\actname} \tyTi}
 \end{array}\hspace*{-2.5mm}
$
\end{tabular}
}
\end{center}
\vspace{-3mm}
\caption{LTS Reduction Semantics for the Types.}\label{fig:types-sem}
\vspace{-3mm}
\end{figure}

\\
Rules \ltsrulename{\ruleMemStoreReq} and \ltsrulename{\ruleMemLoadReq} 
allow a type to
emit a store and load action on a shared variable $\nVar$.
Rule \ltsrulename{\ruleMutLockReq} (resp.\ \ltsrulename{\ruleMutUnlockReq}) emits a lock (resp.\ unlock) 
action on a shared lock $\nLock$.
Rules \ltsrulename{\ruleNewMem} and \ltsrulename{\ruleNewMut} 
(resp. \ltsrulename{\ruleNewRMut}) create a 
a new shared heap $\nVar$ or unlocked lock (resp. rwlock) store $\nLock$.
Rule \ltsrulename{\ruleMemActionAck} models the ability of a shared heap to be read or
updated at any time, and rule \ltsrulename{\ruleMemAction} allows a load or
store action to synchronise with its associated heap.

Rule \ltsrulename{\ruleMutLockAck} 
makes a lock to be closed, and rule \ltsrulename{\ruleMutUnlockAck} 
unlocks a claimed lock. Rules \ltsrulename{\ruleMutLock} 
and \ltsrulename{\ruleMutUnlock}
make the corresponding actions to synchronise with their associated lock store.
Equivalent rules for rwlocks act the same as in the processes. Pay attention 
to the same quirk as in processes: \ltsrulename{\ruleRMutLockStage} does not 
consume the lock action in $\tyT$, as this rules serves to forbid further 
read-lock calls from being executed if a lock call is staged.

Rule  \ltsrulename{\ruleITEConstruct} represents 
the internal choice behaviour of the conditional processes. 

In \figurename~\ref{fig:types-sem}, we omit the symmetric rules for
parallel composed processes (such as \ltsrulename{\ruleMemAction}). 
We write $\tra{}$ for $\equiv\tra{\acttau}\equiv \cup 
\equiv\tra{\labsync{\nGeneric}}\equiv$ and 
$\tyT \tra{}^\ast \tra{\actname}$ if there exist $\tyTi$ and $\tyTii$
such that $\tyT\tra{}^\ast \tyTi\tra{\actname}\tyTii$.

\begin{example}\label{ex:typered}
The unsafe version of \figurename~\ref{fig:running-ex-rwmut-race}, modelled by process $\procProg[\m{race}]$ in
Example~\ref{ex:syntax} and typed by $\tyEqn[\m{race}]$ in Example~\ref{ex:type}, 
has the following possible reduction (following the same reduction order as Example~\ref{ex:reduction}):
\[
\begin{array}{r@{\ }c@{\ }r@{}l}
\tyEqn[\m{race}] & = & \tyNewMem{\nVar}
      \tyNewRWLock{\nLock}
      &\left(\begin{array}{@{}l@{\ }l}
        \tyPar{
	  & \tyRLock{\nLock}\tyCont
	   \tyLoad{\nVar}\tyCont
	   \tyStore{\nVar}\tyCont
	   \tyRUnlock{\nLock}\tyCont
	   \tyRLock{\nLock}\tyCont
	   \tyLoad{\nVar}\tyCont
	   \tySilent\tyCont
	   \tyRUnlock{\nLock}\tyCont
	   \zero\\}{
          & \tyRLock{\nLock}\tyCont
	   \tyLoad{\nVar}\tyCont
	   \tyStore{\nVar}\tyCont
	   \tyRUnlock{\nLock}\tyCont
	\zero}
	\end{array}
      \right)\\[1mm] 
  & \ \tra{}^{2} & \tyRes{\nVar\nLock}{
      &\left(\begin{array}{@{}l@{\ }l}
        \tyPar{\tyPar{\tyPar{
	  & \tyRLock{\nLock}\tyCont
	   \tyLoad{\nVar}\tyCont
	   \tyStore{\nVar}\tyCont
	   \tyRUnlock{\nLock}\tyCont
	  \tyRLock{\nLock}\tyCont
	   \tyLoad{\nVar}\tyCont
	   \tySilent\tyCont
	   \tyRUnlock{\nLock}\tyCont
	   \zero\\}{
          & \tyRLock{\nLock}\tyCont
	   \tyLoad{\nVar}\tyCont
	   \tyStore{\nVar}\tyCont
	   \tyRUnlock{\nLock}\tyCont
	\zero}}
	  {\tyMem{\nVar}}}
	 {\tyRWLock{\nLock}{0}}
	\end{array}
      \right)}
\\[1mm] & \tra{}^{6} & \tyRes{\nVar\nLock}{
      &\left(\begin{array}{@{}l@{\ }l}
        \tyPar{\tyPar{\tyPar{
	   \tyStore{\nVar}\tyCont
	   \tyRUnlock{\nLock}\tyCont
	  \tyRLock{\nLock}\tyCont
	   \tyLoad{\nVar}\tyCont
	   \tySilent\tyCont
	   \tyRUnlock{\nLock}\tyCont
	   \zero\\}{
	   \tyStore{\nVar}\tyCont
	   \tyRUnlock{\nLock}\tyCont
	\zero}}
	  {\tyMem{\nVar}}}
	 {\tyRWLock{\nLock}{2}}
	\end{array}
      \right)} = \tyRes{\nVar\nLock}{\tyTi}
\end{array}
\]
We note that $\tyTi$ is a type of $\procPi$ which has a data race in 
Example~\ref{runnig-ex-red}.
\end{example}

\section{Properties of \gol Processes and Types}\label{sec:types}
This section proves 
two main results, the subject reduction and progress properties 
with respect to behavioural types. 
Our goal is to 
classify subsets of \gol programs for which liveness, data race freedom 
and safety coincide with liveness, data race freedom and safety of 
their types.%
\full{}{Detailled proofs for this section are available in~\cite{JY20:full}.}

\subsection{Type soundness of \gol processes}\label{sect:gen-types}
A basic property for types is to be preserved under structural congruence
and to be able to reduce the same as the process. 

\begin{proposition}[Subject Congruence]\label{prop:cong}\rm
If $\G \vdash_B \procP \ts \tyT$ and
$\procP \equiv \procPi$, then $\exists \tyTi \equiv \tyT$ such that 
$\G \vdash_B \procPi \ts \tyTi$.
\end{proposition}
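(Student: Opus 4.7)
The plan is to proceed by induction on the derivation of $\procP \equiv \procPi$, treating each axiom of structural congruence (listed in \figurename~\ref{fig:syntax}) as a base case and closing under reflexivity, symmetry, transitivity and congruence. For the inductive/contextual cases, since typing is syntax directed on the structure of $\procP$, we invert the typing derivation, apply the induction hypothesis to the subterm being rewritten, and reapply the corresponding typing rule to obtain a type $\tyTi$; structural congruence on types (which mirrors that on processes, cf.~\figurename~\ref{fig:types-syntax}) then yields $\tyTi \equiv \tyT$.

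For the axioms involving parallel composition, namely commutativity, associativity and the $\zero$ unit law, I would invert the \trulename{\rulePar} and \trulename{\ruleTypingPar} rules. In the runtime judgement, \trulename{\ruleTypingPar} carries the disjointness side condition $B_1 \cap B_2 = \emptyset$; this is symmetric and associative in the obvious way, so the same partition (reassociated) can be used for $\procPi$, yielding types related by the analogous type-level congruences $\tyPar{\tyT}{\tyS}\equiv\tyPar{\tyS}{\tyT}$, etc. For the garbage-collection axioms such as $\pRes{\nVar}{\stMem{\nVar}{\sigma}{\nVal}}\equiv \zero$, I would observe that the only typing rule for $\stMem{\nVar}{\sigma}{\nVal}$ is \trulename{\ruleTypingMem}, which forces the type to be $\tyMem{\nVar}$ with $B=\{\nVar\}$; rule \trulename{\ruleRestrict} then produces $\tyRes{\nVar}{\tyMem{\nVar}}$ with empty $B$, which is congruent to $\zero$ (and $\zero$ is typable by \trulename{\ruleTypingZero} with empty $B$). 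The five lock-store cases are handled identically, appealing to \trulename{\ruleTypingMut}, \trulename{\ruleTypingMutLocked}, \trulename{\ruleTypingRMut}, \trulename{\ruleTypingRMutLocked} and \trulename{\ruleTypingRMutStaged} in turn.

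The restriction-swap axiom is straightforward by two nested applications of \trulename{\ruleRestrict} and the matching type swap. For scope extrusion $\pPar{\procP}{\pRes{\nGeneric}{\procQ}} \equiv \pRes{\nGeneric}{\pPPar{\procP}{\procQ}}$ with $\nGeneric\notin \fn{\procP}$, I invert \trulename{\ruleTypingPar}/\trulename{\ruleRestrict} to obtain $\G \vdash_{B_1} \procP \ts \tyT$ and $\G,\typing{\nGeneric}{\nTypeGen}\vdash_{B_2'}\procQ\ts \tyS$ with $B_2 = B_2'\setminus\nGeneric$; weakening $\procP$'s typing by $\typing{\nGeneric}{\nTypeGen}$ (straightforward, as $\nGeneric\notin\fn{\procP}$) and reassembling gives a type congruent to $\tyRes{\nGeneric}{\tyPPar{\tyT}{\tyS}}$, matching the type-level scope-extrusion axiom. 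Alpha-equivalence is handled by a standard renaming argument on both the process and its type derivation in parallel.

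The main delicate point will be the bookkeeping around the annotation $B$ in the runtime judgement: we must check that each structural axiom preserves both the disjointness condition of \trulename{\ruleTypingPar} and the removal-on-restriction of \trulename{\ruleRestrict}, and in particular that the garbage-collection axioms correctly shrink $B$ from the singleton to the empty set. Once these $B$-manipulations are verified, the remainder of the argument is a routine syntactic mirroring of the process congruence on the type side, and standard weakening/strengthening lemmas on $\G$ (which I would state and prove separately as auxiliary lemmas) cover the residual cases.
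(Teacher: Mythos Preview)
Your proposal is correct and follows essentially the same approach as the paper: a case analysis on the structural congruence axioms, inverting the relevant typing rule in each case and using that the type-level congruence mirrors the process-level one. The paper's proof is in fact terser than yours---it only spells out the three parallel-composition cases (unit, commutativity, associativity) and then remarks that the restriction-related cases go through ``in the same form'' via \trulename{\ruleTypingPar} and \trulename{\ruleRestrict}; your treatment of the garbage-collection axioms, scope extrusion, and the $B$-bookkeeping is more explicit than what appears in the paper, but entirely in the same spirit.
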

\full{See Appendix~\ref{app:proofs} for the proof.}{}%
The following 
type soundness theorem shows that behaviours of processes can be 
simulated by behaviours of types. 
\begin{theorem}[Subject Reduction]\label{the:red}\rm
If $\G \vdash_{B} \procP \ts \tyT$ and
$\procP \tra{} \procPi$, then $\exists \tyTi$ such that 
$\G \vdash_{B} \procPi \ts \tyTi$ and
$\tyT \tra{} \tyTi$.
\end{theorem}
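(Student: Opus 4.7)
The plan is to prove the statement by rule induction on the derivation of $\procP \tra{} \procPi$, where $\tra{}$ unfolds to $\equiv\tra{\acttau}\equiv \cup \equiv\tra{\labsync{\nGeneric}}\equiv$. The first move is to use Proposition~\ref{prop:cong} (Subject Congruence) to absorb the two $\equiv$ steps: it suffices to assume $\G \vdash_B \procP \ts \tyT$, $\procP \tra{\acttau} \procPi$ or $\procP \tra{\labsync{\nGeneric}} \procPi$ directly via the LTS rules of \figurename~\ref{fig:redsem}, and exhibit $\tyTi$ with $\tyT \tra{\acttau} \tyTi$ or $\tyT \tra{\labsync{\nGeneric}} \tyTi$ and $\G \vdash_B \procPi \ts \tyTi$. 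The conclusion then lifts back through $\equiv$ on both sides.

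Before the case analysis I would establish three auxiliary lemmas. First, an \emph{inversion lemma} for each syntactic former of $\procP$, extracting the shape of $\tyT$ (for example, if $\G \vdash_B \pPar{\procP[1]}{\procP[2]} \ts \tyT$, then $\tyT \equiv \tyPPar{\tyT[1]}{\tyT[2]}$ with $B = B_1 \uplus B_2$ and $\G \vdash_{B_i} \procP[i] \ts \tyT[i]$, via \trulename{\ruleTypingPar}). Second, a \emph{weakening lemma} allowing the extension of $\G$ with fresh variable or lock bindings, needed for the three \rulename{\ruleNew-\ast} rules that introduce a restriction and a runtime store typed by one of the stateful rules \trulename{\ruleTypingMem}, \trulename{\ruleTypingMut}, \trulename{\ruleTypingRMut}, etc. Third, a \emph{substitution lemma}: if $\G, \typing{\vVar}{\tilde\sigma}, \typing{\vVary}{\vTypeGen} \vdash \procP \ts \tyT$ and $\G \vdash \typing{\vExpr}{\tilde\sigma}$ with $\vExpr \conv \vVal$, then $\G \vdash \procP\subs{\vVal,\vGeneric}{\vVar,\vVary} \ts \tyT\subs{\vGeneric}{\vVary}$; since only shared variables and locks appear in types, local value substitutions leave $\tyT$ unchanged up to the renaming of formal lock/variable parameters. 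This is what aligns the process step \rulename{\ruleDefinition} with the type step \ltsrulename{\ruleDefinition}.

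With these in hand, the case analysis is routine but lengthy. Each basic prefix rule of \figurename~\ref{fig:redsem} has a direct counterpart in \figurename~\ref{fig:types-sem}: the request rules \rulename{\ruleMutLockReq}, \rulename{\ruleMutUnlockReq}, \rulename{\ruleRWLockRLockReq}, \rulename{\ruleRWLockRUnlockReq}, \rulename{\ruleMemLoadReq}, \rulename{\ruleMemStoreReq} mirror their \ltsrulename{$\cdot$} counterparts, and analogously for the \rulename{$\cdot$Ack} rules acting on the stateful typings (noting that \rulename{\ruleMemLoadAck}/\rulename{\ruleMemStoreAck} on $\stMem{\nVar}{\sigma}{\nVal}$ collapse to a single $\labheap{\nVar}$ action on $\tyMem{\nVar}$ via \ltsrulename{\ruleMemActionAck}, since types abstract away the stored value). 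The synchronisation rules then compose pairs via \ltsrulename{\ruleMemAction}, \ltsrulename{\ruleMutLock}, \ltsrulename{\ruleMutUnlock}, \ltsrulename{\ruleRMutRLock}, \ltsrulename{\ruleRMutRUnlock}, \ltsrulename{\ruleRMutLockStage}. The runtime creation cases use weakening plus an application of \trulename{\ruleRestrict}, \trulename{\ruleTypingPar}, and the relevant stateful rule. The context rules \rulename{\ruleParLeft}, \rulename{\ruleParRight}, \rulename{\ruleRestrictFree}, \rulename{\ruleRestrictBind}, \rulename{\ruleITETrue}, \rulename{\ruleITEFalse} follow by the induction hypothesis together with the matching type rule.

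The main obstacles I anticipate are: (i) the \rulename{\ruleDefinition} case, which requires the substitution lemma to be formulated carefully so that substituted expressions (yielding values by $\conv$) leave the induced type unchanged while formal lock/variable parameters propagate into $\tyT$; (ii) the bookkeeping on $B$ in the restriction/creation cases, to ensure $B_1 \cap B_2 = \emptyset$ is maintained when a new store appears in parallel with an already-typed process; and (iii) the careful alignment of occurrences in the \rulename{\rulePar-$\ast$} cases so that the label emitted by the type reduction matches $\actleft{\cdot}$ or $\actright{\cdot}$ of the process's label — however, since this theorem only tracks $\acttau$ and $\labsync{\nGeneric}$ steps, occurrences are irrelevant at the top level and only matter inside the premise of synchronisation rules, where they cancel symmetrically.
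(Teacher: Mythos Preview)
Your proposal is correct and follows essentially the same route as the paper: both proceed by induction on the LTS derivation and match each process-level rule to its type-level counterpart in \figurename~\ref{fig:types-sem}. Your write-up is in fact more careful than the paper's, which leaves the inversion, weakening, and substitution lemmas implicit and simply states that each rule ``corresponds to'' its namesake; your explicit treatment of Subject Congruence for the outer $\equiv$ steps and of the substitution lemma for \rulename{\ruleDefinition} fills in exactly the bookkeeping the paper elides.
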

\full{See Appendix~\ref{app:proofs} for the proof.}{}%
The following progress theorem says that 
the action availability on types infers that on processes. 

We first need to define 
\emph{barbs} to represent capabilities of a type at a given time in reduction, 
akin to how process barbs are defined in Definition~\ref{def:barb}. 

\begin{definition}[Type Barbs]\label{def:barbs:types}\rm
The barbs on types are defined as follows:
\begin{description}
\item[Prefix Actions:] 
  $
\begin{array}{lll}
  \tyStore{\nVar}\barb{\stbarb{\nVar}{\occurnone}};
&
\tyLoad{\nVar}\barb{\ldbarb{\nVar}{\occurnone}};
&
\tyLock{\nLock}\barb{\lckbarb{\nLock}};
\\[1mm]
\tyUnlock{\nLock}\barb{\ulckbarb{\nLock}};
&
\tyRLock{\nLock}\barb{\rlckbarb{\nLock}};
&
\tyRUnlock{\nLock}\barb{\rulckbarb{\nLock}}
\end{array}
$

 \item[Types:]
 if $\tyT\tra{\actname}\tyTi$ where $\actname$ is a communication action over
  a shared variable or
  $\labsync{\nGeneric}$ or a lock action, then $\tyT\barb{\actname}$.
\end{description}
\end{definition}

\begin{theorem}[Progress]\label{the:prog}\rm
Suppose $\G \vdash \procP \ts \tyT$. Then 
if $\tyT\tra{\actname}\tyT[0]$ for $\actname\in \{ \labsync{\nGeneric}, \acttau\}$ 
for some heap or lock $\nGeneric$, then there exists $\procPi,\tyTi$ such that 
$\procP\tra{}\procPi$, 
$\tyT\tra{\actname}\tyTi$, 
and $\G \vdash \procPi \ts \tyTi$. 
\end{theorem}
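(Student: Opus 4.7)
The plan is to proceed by induction on the derivation of the type transition $\tyT \tra{\tyact} \tyT[0]$, and for each rule of the type LTS in \figurename~\ref{fig:types-sem} that yields an action in $\{\labsync{\tychavar}, \tytau\}$, invert the typing judgment to determine the shape of $\procP$, then exhibit a matching process reduction from \figurename~\ref{fig:redsem}, and finally appeal to the typing rules (together with Subject Reduction, Theorem~\ref{the:red}) to obtain a $\tyTi$ typing $\procPi$ such that $\tyT \tra{\tyact} \tyTi$.

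The ``easy'' base cases are the prefix/creation rules that emit $\tytau$. For \ltsrulename{\ruleSilentAction}, inversion of \trulename{\ruleSilentAction} forces $\procP = \pSilent\pCont\procPi$, which reduces via \rulename{\ruleSilentAction}. For \ltsrulename{\ruleNewMem}, \ltsrulename{\ruleNewMut}, \ltsrulename{\ruleNewRMut}, inversion of the matching typing rule locates a $\m{new}$-prefix in $\procP$, and the corresponding process rule produces a residual of the form $\pRes{\nGeneric}{(\procPi \parr \text{store})}$ whose typing is provided by the runtime-store judgment $\vdash_B$. For \ltsrulename{\ruleITEConstruct}, $\tyT$ is an internal choice; here $\procP = \pITE{\nExpr}{\procP[1]}{\procP[2]}$, and since $\G \vdash \typing{\nExpr}{\typeBool}$, the expression converges (to $\true$ or $\false$), so \rulename{\ruleITETrue}/\rulename{\ruleITEFalse} fires the matching branch. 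The context rules (\ltsrulename{\ruleParLeft}, \ltsrulename{\ruleParRight}, \ltsrulename{\ruleRestrictFree}, \ltsrulename{\ruleRestrictBind}) are routine invocations of the induction hypothesis combined with Subject Congruence (Proposition~\ref{prop:cong}); \ltsrulename{\ruleDefinition} is handled by unfolding via the corresponding \rulename{\ruleDefinition} rule on processes.

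The substantive cases are the synchronisations \ltsrulename{\ruleMemAction}, \ltsrulename{\ruleMutLock}, \ltsrulename{\ruleMutUnlock}, \ltsrulename{\ruleRMutRLock}, \ltsrulename{\ruleRMutRUnlock} and \ltsrulename{\ruleRMutLockStage}, which emit $\labsync{\tychavar}$. In each, $\tyT$ decomposes as $\tyT[1] \parr \tyT[2]$ where $\tyT[1]$ contributes the prefix action and $\tyT[2]$ contributes the matching store transition. Inversion of \trulename{\rulePar} (using the runtime-store version \trulename{\ruleTypingPar} where needed) yields $\procP \equiv \procP[1] \parr \procP[2]$ with $\procP[1]$ emitting the corresponding prefix action and $\procP[2]$ being the runtime store whose rule in \figurename~\ref{fig:redsem} emits the dual marker. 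Combining the two via the paired process rule (\rulename{\ruleMutLock}, \rulename{\ruleMemStore}, etc.) gives the desired $\procP \tra{} \procPi$; the type of $\procPi$ is then obtained by re-assembling the residuals with \trulename{\ruleTypingPar} and applying Subject Reduction to check the resulting type $\tyTi$ is the expected one.

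The main obstacle will be the mismatch between the judgment in the statement ($\G \vdash \procP \ts \tyT$) and the runtime-store judgment $\G \vdash_B \procP \ts \tyT$ that is actually needed to witness type-level stores $\tyMem{\nVar}$, $\tyLockU{\nLock}$, etc., present in $\tyT$ whenever $\tyT$ can take a $\labsync{\tychavar}$ step. Concretely, the induction hypothesis has to be strengthened to the $\vdash_B$ form so that the correspondence between the runtime stores in $\procP$ and the type-level stores in $\tyT$ is preserved, with $B$ tracking exactly those names. Once this invariant is stated cleanly (essentially as a generalised version of the theorem for $\vdash_B$), each type synchronisation rule projects onto a process synchronisation rule, and Subject Reduction closes the typing side.
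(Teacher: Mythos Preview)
Your proposal is correct and follows essentially the same case-analysis strategy as the paper, but with one organisational difference worth noting. The paper factors out a separate \emph{barb correspondence lemma} (Lemma~\ref{lem:barbs} in the appendix): if $\G \vdash \procP \ts \tyT$ and $\tyT\barb{\actname}$ then $\procP\barb{\actname}$ (with a special clause mapping $\heapbarb{\nVar}$ to $\labloaddual{\nVar}$ or $\labstoredual{\nVar}$). With this lemma in hand, the paper dispatches the $\labsync{\nGeneric}$ case in one stroke rather than per synchronisation rule: the two type-level barbs transfer to process-level barbs, and structural congruence rearranges the interacting subprocesses into direct parallel position so that the matching process synchronisation rule fires. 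Your per-rule inversion does the same work inline; just be aware that when you write ``inversion of \trulename{\rulePar} yields $\procP[1]$ emitting the corresponding prefix action'', you are implicitly invoking exactly this lemma, since $\tyT[1]$ need not itself be a prefix type and a sub-induction on its structure is required. Your appeal to Subject Reduction to recover the typing of $\procPi$ is a valid shortcut, though the paper types the residual directly from the rules in each case. Your observation about $\vdash$ versus $\vdash_B$ is well taken; the paper's proof tacitly works under the runtime judgment without making the strengthening explicit.
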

To prove this theorem, we use a lemma which shows a correspondence
of barbs between processes and types (defined similarly with barbs of 
processes, cf Definition~\ref{def:barbs:types}).%
\full{The proof can be found in Appendix \ref{app:proofs}.}{}%
Note that in Theorem~\ref{the:prog}, $\tyTi$ and $\tyT[0]$ might be different. 
This is because a selection type (i.e. the internal choice) 
can reduce non-deterministically but the corresponding conditional process
usually is deterministic.

\subsection{Safety and Liveness for Types}\label{sect:types-prop}
In this subsection, we define 
safety and liveness for types, which correspond to  
Definitions \ref{def:safe}, \ref{def:live} and \ref{def:hb-drace}, respectively.

\begin{definition}[Safety]~\label{def:type-safe}\rm
Type $\tyEqn$ 
is {\em safe} if for all $\tyT$ such that 
$\tyEqn\tra{}^*\tyRes{\vGeneric}{\tyT}$, 
(a) if $\tyT\barb{\ulckbarb{\nLock}}$
then $\tyT\barb{\lckmutbarb{\nLock}}$; and 
(b) if $\tyT\barb{\rulckbarb{\nLock}}$
then $\tyT\barb{\waitrmutbarb{\nLock}{i}}$.
\end{definition}

\begin{definition}[Liveness]~\label{def:type-live}\rm
Type $\tyEqn$ is \emph{live} if for all $\tyT$ such that
$\tyEqn \tra{}^* \tyRes{\vGeneric}{\tyT}$,
if ${\tyT}\barb{\lckbarb{\nLock}}$ or 
${\tyT}\barb{\rlckbarb{\nLock}}$ 
then 
$\tyT\wbarb{\labsync{\nLock}}$.
\end{definition}

\begin{definition}[Data Race]~\label{def:type-drace}\rm
$\tyEqn$ has a data race if and only if 
there exists $\tyT$ such that 
$\tyEqn\tra{}^{\ast}\tyRes{\vGeneric}{\tyT}$ with 
$\tyT\barb{\actname[1]}$, $\tyT\barb{\actname[2]}$,  
$\actname[1] = \stbarb{\nVar}{\occurgen}$, 
$\actname[2] \in \{\stbarb{\nVar}{\occurgeni},\ldbarb{\nVar}{\occurgeni}\}$ and 
$\occurgen\neq \occurgeni$.
\end{definition}

We say that $\tyEqn$ is data race free if it has no data race.

\subsection{Liveness and Safety for Typed \gol}\label{sect:safe-live}

In this section, we state several 
propositions and theorems adapted from \cite{LNTY2017} 
to our new process and types primitives and their LTSs. Our goal is to 
classify subsets of \gol programs for which liveness, data race freedom 
and safety coincide with liveness, data race freedom and safety of 
their types.

First, we prove that safety and data race freedom (which is a form of safety) 
have no restriction, and that 
proving that a type is safe always entails the associated program is safe.

\begin{theorem}[Process Safety and Data Race Freedom]\label{the:proc-safe-drf}
Suppose $\G \vdash \procProg \ts \tyEqn$ and $\tyEqn$ is safe (resp. data 
race free). 
Then $\procProg$ is safe (resp. data race free).
\end{theorem}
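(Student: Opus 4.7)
The plan is to proceed by contrapositive on both properties, leveraging the subject reduction theorem (Theorem~\ref{the:red}) together with a lemma establishing a correspondence between process barbs and type barbs for typed terms. Specifically, I aim to show: if $\G \vdash \procProg \ts \tyEqn$ and $\procProg$ is unsafe (respectively, has a data race), then $\tyEqn$ is unsafe (respectively, has a data race).

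First I would establish a barb correspondence lemma stating that whenever $\G \vdash_B \procP \ts \tyT$ (where the judgement types both user code and runtime stores), we have $\procP \barb{\actname}$ iff $\tyT \barb{\actname}$ for $\actname$ ranging over the shared-variable accesses $\stbarb{\nVar}{\occurgen}, \ldbarb{\nVar}{\occurgen}$ and over the lock primitives $\lckbarb{\nLock}, \ulckbarb{\nLock}, \rlckbarb{\nLock}, \rulckbarb{\nLock}$ as well as the store-generated barbs $\lckmutbarb{\nLock}, \waitrmutbarb{\nLock}{i}$. This follows by induction on the derivation of $\G \vdash_B \procP \ts \tyT$: the prefix typing rules \trulename{\ruleMutLockReq}--\trulename{\ruleMemLoadReq} map each process prefix to a type prefix with an identically-named barb (compare Definition~\ref{def:barb} and Definition~\ref{def:barbs:types}), while \trulename{\ruleTypingMut}, \trulename{\ruleTypingMutLocked}, \trulename{\ruleTypingRMut}, etc. ensure that each runtime lock store $\stLock{\nLock}, \stLockL{\nLock}, \stRWLock{\nLock}{i}, \ldots$ is typed by a corresponding type store whose LTS offers the same $\mutbarb{\cdot}$-family barbs (compare \rulename{\ruleMutUnlockAck} with \ltsrulename{\ruleMutUnlockAck} and similar pairs in \figurename~\ref{fig:redsem} and \figurename~\ref{fig:types-sem}). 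Parallel composition, restriction, and conditionals are handled compositionally using subject congruence (Proposition~\ref{prop:cong}).

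Next, for the safety direction, I assume $\procProg$ is unsafe. By Definition~\ref{def:safe}, there exists a reduction $\procProg \tra{}^* \pRes{\vGeneric}{\procP}$ with either $\procP \barb{\ulckbarb{\nLock}}$ and $\procP \not\barb{\lckmutbarb{\nLock}}$, or the analogous read-unlock failure. Iterated application of Theorem~\ref{the:red} gives $\tyEqn \tra{}^* \tyRes{\vGeneric}{\tyT}$ with $\G \vdash_B \procP \ts \tyT$. The barb correspondence lemma then transfers both the presence of $\ulckbarb{\nLock}$ and the absence of $\lckmutbarb{\nLock}$ from $\procP$ to $\tyT$, contradicting the assumed safety of $\tyEqn$ per Definition~\ref{def:type-safe}. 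The read-lock case is symmetric. For data race freedom, assuming $\procProg$ has a race (Definition~\ref{def:hb-drace} together with Theorem~\ref{the:drace}), we obtain $\procProg \tra{}^* \pRes{\vGeneric}{\procP}$ with $\procP \barb{\stbarb{\nVar}{\occurgen}}$ and $\procP \barb{\actname[2]}$ for $\actname[2] \in \{\stbarb{\nVar}{\occurgeni}, \ldbarb{\nVar}{\occurgeni}\}$ with $\occurgen \neq \occurgeni$; the same subject-reduction-then-barb-transfer argument yields the corresponding race witnesses on $\tyT$, contradicting Definition~\ref{def:type-drace}.

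The main obstacle I expect is making the barb correspondence faithful with respect to occurrences. The occurrence annotation $\occurgen$ on write barbs is generated by \rulename{\ruleParLeft}/\rulename{\ruleParRight} via the shift operators $\actleft{\cdot}, \actright{\cdot}$, and the same mechanism is in place at the type level (\ltsrulename{\ruleParLeft}/\ltsrulename{\ruleParRight}). I would prove by induction on the structure of the typing derivation for parallel composition (rule \trulename{\rulePar}) that occurrences are preserved verbatim when translating process write barbs into type write barbs, so that the distinctness condition $\occurgen \neq \occurgeni$ required by Definition~\ref{def:type-drace} carries over unchanged from the process side. Once this is in place, the remaining reasoning is a routine assembly of subject reduction and the barb correspondence lemma.
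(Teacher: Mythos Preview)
Your proposal is correct and follows essentially the same route as the paper: the paper also establishes the two-way barb correspondence you describe (split across an Inversion Lemma for the process-to-type direction and Lemma~\ref{lem:barbs} for the type-to-process direction), combines it with subject reduction, and concludes. The only cosmetic difference is that the paper argues directly rather than by contrapositive---it transfers the $\ulckbarb{\nLock}$ barb to the type side, invokes type safety to obtain $\lckmutbarb{\nLock}$ there, and transfers that back to the process---which uses exactly the same two barb-transfer directions your contrapositive uses; one small point to tighten is that after iterating Theorem~\ref{the:red} you land on a typing of the whole restricted term $\pRes{\vGeneric}{\procP}$, so you still need the inversion step (stripping the $\nu$'s and enlarging $\G$) before applying your barb lemma to $\procP$ itself.
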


We then prove that liveness of types is equivalent to liveness of 
programs for a subset of the \gol programs, in three steps: 
(1) programs that always have a terminating path, 
(2) finite branching programs, and (3) programs that simulate 
non-deterministic branching in infinitely recurring conditionals.

We first study the case of programs that always have a path to termination:

\begin{definition}[May Converging Program]\label{def:may-conv}\rm
Let $\G \vdash \procProg \ts \tyEqn$. We write $\procProg\in\convm$ 
if for all $\procProg\tra{}^{*}\procPi$, $\procPi\tra{}^{*}\zero$.
\end{definition}

\label{ex:may-conv}
An example of May Converging program is the following program, where 
process $\procP$ loops and alternates $\nVar$ to values 1 and 0 until 
the $end$ flag is set, and $\procQ$ loops reading $\nVar$ until it reads 
a value 0, in which case it sets the $end$ flag and returns:
\[
\procProg[\m{mc}] \coloneqq \pIn
  {
    \left\{\begin{array}{@{}l@{\ }r@{\ }l}
      \pEntryProc & = & \pNewMem{\nVar}{\m{int}}\pCont
        \pNewMem{end}{\m{bool}}\pCont
        \pNewRWLock{\nLock}\pCont\\
        & & \pPPar{\pCall{\procP}{\nVar,end,\nLock}{}}
              {\pCall{\procQ}{\nVar,end,\nLock}{}}
      \\
      \procP(\nVar,end,\nLock) & = & \pLock{\nLock}\pCont
        \pLoad{\nVar}{\nVary}\pCont
        \pStore{\nVar}{1-\nVary}\pCont
        \pLoad{end}{\nVarz}\pCont\\
        & & \pUnlock{\nLock}\pCont
        \pITE{\nVarz}{\zero\ }{\pCall{\procP}{\nVar,end,\nLock}{}}
      \\
      \procQ(\nVar,end,\nLock) & = & \pLock{\nLock}\pCont
        \pLoad{\nVar}{\nVary}\pCont
        \pUnlock{\nLock}\pCont\\
        & & \pITE{\nVary = 0}{\pLock{\nLock}\pCont
                           \pStore{end}{\true}\pCont
                           \pUnlock{\nLock}\pCont
                           \zero\\ & &}
                         {\pCall{\procQ}{\nVar,end,\nLock}{}}
    \end{array}\right\}
  }
  {\pEntryPt}
\]

The next proposition states that on these programs, proving liveness of 
their types is enough to ensure liveness of the associated program.

\begin{proposition}\label{prop:may-conv-live}
Assume $\G \vdash \procProg \ts \tyEqn$ and $\tyEqn$ is live. 
{\rm (1)} Suppose there exists $\procPi$ such that $\procProg \tra{}^{*} \procPi
\not \tra{}$. Then $\procPi\equiv\zero$; and {\rm (2)} If $\procProg\in\convm$, 
then $\procProg$ is live.
\end{proposition}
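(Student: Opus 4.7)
The plan is to prove (1) by contradiction, transferring the stuck state from $\procPi$ to its type $\tyTi$ via Subject Reduction (Theorem~\ref{the:red}) and the contrapositive of Progress (Theorem~\ref{the:prog}), and then deriving a contradiction from liveness of $\tyTi$. Part (2) will be proved directly, combining the may-converge hypothesis with the structural observation that a pending lock prefix must be consumed through synchronisation in order for the process to reach $\zero$.

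Concretely, for (1), assume for contradiction that $\procProg \tra{}^{*} \procPi \not\tra{}$ with $\procPi \not\equiv \zero$. Subject Reduction provides $\G \vdash \procPi \ts \tyTi$ for some $\tyTi$ with $\tyEqn \tra{}^{*} \tyTi$; since liveness is preserved along type reductions (directly from Definition~\ref{def:type-live}), $\tyTi$ is live. Applying the contrapositive of Progress for each $\tyact \in \{\tytau, \labsync{\nGeneric}\}$ yields $\tyTi \not\tra{}$. I would then put $\procPi$ in normal form $\pRes{\vGeneric}{\procPii}$ with $\procPii \not\equiv \zero$ (Definition~\ref{def:migo-normal-form}) and analyse the top-level structure of $\procPii$. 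Top-level conditionals, silent prefixes, the $\pNewMem$, $\pNewLock$ and $\pNewRWLock$ constructs, and process calls all offer unconditional $\tau$ transitions, contradicting $\procPi \not\tra{}$. Memory-access prefixes synchronise with their associated runtime heaps, which Subject Reduction preserves inside the restriction whenever their names occur free, again producing a reduction. Finally, a lock or read-lock prefix barb on $\procPi$ lifts to the same barb on $\tyTi$ by the process--type barb correspondence; type liveness then forces $\tyTi \wbarb{\labsync{\nLock}}$, and combined with $\tyTi \not\tra{}$ this yields $\tyTi \barb{\labsync{\nLock}}$, which is itself a reduction of $\tyTi$, contradicting $\tyTi \not\tra{}$.

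For (2), let $\procPi$ be any derivative of $\procProg$ with $\procPi \barb{\lckbarb{\nLock}}$ (the case $\procPi \barb{\rlckbarb{\nLock}}$ is symmetric). The hypothesis $\procProg \in \convm$ gives $\procPi \tra{}^{*} \zero$. The barb witnesses a $\pLock{\nLock}$ prefix at the top level of $\procPi$, and since $\zero$ contains no prefixes this prefix must be discharged somewhere along the reduction to $\zero$. The only LTS rule that consumes a $\pLock{\nLock}$ prefix is the synchronisation between a lock request and a free lock store, which emits $\labsync{\nLock}$. Hence the reduction sequence passes through an intermediate state $\procPii$ with $\procPi \tra{}^{*} \procPii$ and $\procPii \barb{\labsync{\nLock}}$, giving $\procPi \wbarb{\labsync{\nLock}}$ as required.

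The principal difficulty lies in the case analysis for (1): specifically, ruling out non-lock prefix barbs such as unlock and runlock, for which Definition~\ref{def:type-live} offers no direct leverage. The argument must rely on the invariants maintained by the typing system --- that each free heap or lock name in any derivative is paired with its runtime store by Subject Reduction --- and verify, via the correspondence between process and type barbs, that top-level prefixes and stores in the type exactly mirror those of the process, so that type-level liveness transfers cleanly to impossibility of process reduction.
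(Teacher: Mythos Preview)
Your overall strategy matches the paper's: for (1) you argue by contradiction, lifting the stuck process state to the type level and invoking type liveness; for (2) you use the path to $\zero$ guaranteed by $\convm$ and observe that the only way to eliminate a pending $\pLock{\nLock}$ prefix is through a $\labsync{\nLock}$ step. The paper's own proof is much terser---in particular it never names Progress explicitly---but the underlying reasoning is the same, and your more careful bookkeeping (Subject Reduction to obtain $\tyTi$, contrapositive of Progress to get $\tyTi\not\tra{}$) is a genuine improvement in presentation.

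There is, however, a real gap in part (1), and it is precisely the one you flag as the ``principal difficulty''. Your suggested resolution---appealing to invariants of the typing system---does not work: the typing rules $\trulename{\ruleMutUnlockReq}$ and $\trulename{\ruleRMutRUnlockReq}$ impose no ordering constraint, so nothing in $\G\vdash\procP\ts\tyT$ prevents an $\pUnlock{\nLock}$ from appearing opposite an unlocked store. Concretely, the program $\pNewLock{\nLock}\pCont\pUnlock{\nLock}\pCont\zero$ reduces to the stuck, non-$\zero$ state $\pRes{\nLock}{(\pUnlock{\nLock}\pCont\zero\parr\stLock{\nLock})}$, yet its type has no $\lckbarb{\nLock}$ or $\rlckbarb{\nLock}$ barb anywhere, so Definition~\ref{def:type-live} is vacuously satisfied. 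The paper's proof has exactly the same gap: it simply asserts that a stuck non-$\zero$ derivative must exhibit a barb in $\{\channame,\ov{\channame},\lckbarb{\nLock},\rlckbarb{\nLock}\}$ (or a select barb), without justifying why $\ulckbarb{\nLock}$ or $\rulckbarb{\nLock}$ cannot be the sole obstruction. Closing this requires an additional hypothesis (type safety in the sense of Definition~\ref{def:type-safe}) or a reformulation; your instinct that something is missing is correct, but the fix is not a typing invariant---it is an extra assumption.

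Your argument for (2) is sound and coincides with the paper's; note in passing that neither proof actually uses the liveness of $\tyEqn$ here, only $\procProg\in\convm$.
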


We now need to define a subset of May Converging programs, that is the set 
of always terminating programs. This is needed because our implementation, 
that we describe in \S~\ref{sec:implementation}, 
only allows to check and ensure liveness for 
terminating programs, ie. the result of our tool for liveness is assured to 
coincide with actual program liveness only on terminating programs. 

Note that the tool is able to model check non-terminating programs (under the 
assumption they don't spawn an unbounded amount of new threads), but may 
in rare instances lead to a false positive, due to the approximations the 
model checker has to make in this case.

\begin{definition}[Terminating Program]\label{def:term-prog}\rm
We write $\procProg\in\term$ if there exists some non-negative number $n$ 
such that, for all $\procP$ such that $\procProg\tra{}^{n}\procP$, 
$\procP\equiv\zero$.
\end{definition}

The following proposition states that this subset of programs is included in 
the set of May Converging programs. We note that this inclusion is strict: 
a program that may loop forever on a select construct, with a timeout branch 
that terminates the program, is May Converging but not terminating in the 
sense of the above definition, as we may always find a reduction path that 
continues longer than any finite bound.

\begin{proposition}\label{prop:term-is-may-conv}
$\procProg\in\term$ implies $\procProg\in\convm$.
\end{proposition}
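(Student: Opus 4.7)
The plan is to show that every process reachable from $\procProg$ can itself reduce to $\zero$, by using the termination bound to extract a finite maximal reduction sequence, and then identifying its endpoint with $\zero$ up to structural congruence. First I would fix a witness $n$ of $\procProg \in \term$ and suppose $\procProg \tra{}^{k} \procPi$ for some $k \le n$. Any reduction sequence starting from $\procPi$ extends a reduction sequence from $\procProg$, so all such sequences have length at most $n - k$. Since each rule of the LTS in Figure~\ref{fig:redsem} matches only finitely many syntactic patterns in a given process, the reduction tree rooted at $\procPi$ is finitely branching and of depth bounded by $n - k$, hence a finite tree. Taking a maximal path, I obtain some $\procPii$ with $\procPi \tra{}^{*} \procPii$ and $\procPii \not\tra{}$.

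The remaining, and main, obstacle is to establish $\procPii \equiv \zero$, from which $\procPi \tra{}^{*} \zero$ follows since $\tra{}$ is defined modulo $\equiv$. I would argue this by a structural analysis of irreducible well-typed processes: any visible prefix $\pLoad{\nVar}{\nVary}$, $\pStore{\nVar}{\nExpr}$, $\pLock{\nLock}$, $\pUnlock{\nLock}$, etc., in $\procPii$ enables a $\tau$-synchronisation with its matching runtime store, whose existence is guaranteed for accessed names by subject reduction (Theorem~\ref{the:red}) applied along the path from $\procProg$ to $\procPii$; a $\tySilent$ prefix, a conditional, or a process call similarly admits a reduction by \rulename{\ruleSilentAction}, \rulename{\ruleITETrue}/\rulename{\ruleITEFalse}, or \rulename{\ruleDefinition}. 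Thus $\procPii$ carries no active process behaviour and consists only of inactive processes $\zero$ in parallel with residual runtime stores under their binding restrictions; these collapse to $\zero$ through the structural congruence rules $\pRes{\nVar}{\stMem{\nVar}{\sigma}{\nVal}} \equiv \zero$ and their lock and rwlock analogues in Figure~\ref{fig:syntax}.

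The subtle case I expect to require the most care is ruling out deadlocked configurations, such as $\pPar{\pLock{\nLock}\pCont\procP}{\stLockL{\nLock}}$, where the lock action cannot fire because the lock is already claimed and the residual store offers only $\lckmutbarb{\nLock}$. Such a configuration is irreducible yet not $\equiv \zero$, so the inclusion would fail on it; the argument must exploit the uniformity of the bound $n$ across all paths reaching that state to argue that a deadlock reached at step $k < n$ contradicts the existence of an $n$-step continuation required by $\term$, allowing us to strengthen the extracted $\procPii$ to one lying on a genuinely maximal $n$-long path on which such stuck prefixes cannot survive.
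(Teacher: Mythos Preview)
Your final paragraph contains the genuine gap. You write that a deadlock reached at step $k < n$ ``contradicts the existence of an $n$-step continuation required by $\term$'', but Definition~\ref{def:term-prog} imposes no such requirement: it says only that every process reachable in exactly $n$ steps is stuck, a condition vacuously satisfied along maximal paths that happen to be shorter than $n$. Concretely, take $\pNewLock{\nLock}\pCont\pLock{\nLock}\pCont\pLock{\nLock}\pCont\zero$. Its unique maximal reduction has length~$2$ and ends in $\pRes{\nLock}{\pPPar{\pLock{\nLock}\pCont\zero}{\stLockL{\nLock}}}$, which is stuck but not $\equiv \zero$ (the restriction cannot be pushed inward past $\pLock{\nLock}\pCont\zero$ since $\nLock$ is free there, and no congruence rule eliminates the pending prefix). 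This program lies in $\term$ with $n = 2$, so your argument that the extracted $\procPii$ must be $\equiv \zero$ cannot succeed as written; the uniformity of the bound $n$ across paths gives you no leverage against such a state.

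For comparison, the paper's own proof is a single sentence: ``By definition of the May Converging set of programs, all programs that always converge are May Converging.'' It simply identifies $\term$ with ``always converging'' and asserts the inclusion, without engaging the deadlock issue you correctly isolated. Your attempt is more honest about where the difficulty lies, but a resolution would have to come from an additional hypothesis---type-level liveness, via Proposition~\ref{prop:may-conv-live}(1), is what the surrounding development actually relies on when this proposition is used---rather than from the bound $n$ alone.
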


\begin{proof}
By definition of the May Converging set of programs, all programs that 
always converge are May Converging.
\end{proof}

\begin{example}\label{ex:terminating}
Note that the running examples we defined in \figurename~\ref{fig:running-ex-rwmut-race} 
and~\ref{fig:running-ex-rwmut-safe} are both terminating, and so are their modelling 
processes given in Example~\ref{ex:syntax}.
\end{example}

The next set of programs we highlight is finite branching programs. 
We first define a series of items, including deterministic 
marking of conditionals and the set of infinitely branching programs, in order 
to grab everything not infinitely branching ({\em ie.} outside of the 
defined set).

\textbf{Marked Programs.}\label{def:marking}
\ Given a program $\procProg$ we define its {\em marking}, written $\m{mark}(\procProg)$, 
as the program obtained by deterministically labelling every occurrence of a conditional 
of the form $\pITE{\nExpr}{\procP}{\procQ}$ in $\procProg$, as 
$\pMPITE{\nExpr}{\procP}{\procQ}{n}$, such that $n$ is distinct natural number 
for all conditionals in $\procProg$.

\textbf{Marked Reduction Semantics.}\label{def:marked-red}
We modify the marked reduction semantics, written 
$\procP\tra{l}\procPi$, stating that 
program $\procP$ reduces to $\procPi$ in a single step, performing action $l$. The 
grammar of action labels is defined as: 
$l\coloneqq \actgen \mid \iflab{n}{\llab} \mid \iflab{n}{\rlab}$ 
where $\actgen$ denotes a non-conditional action, taking into account all 
existing actions and all rules expect \rulename{\ruleITETrue} and \rulename{\ruleITEFalse}, 
$\iflab{n}{\llab}$ denotes a conditional branch 
marked with the natural number $n$ in which the $\m{then}$ branch is chosen, and 
$\iflab{n}{\rlab}$ denotes a conditional branch in which the $\m{else}$ 
branch is chosen. Because of the changes in notations, conditional branches 
are not considered a standard reduction step in $\tra{}$ any more.
The marked reduction semantics replace rules \rulename{\ruleITETrue} and \rulename{\ruleITEFalse}.

\textbf{Trace.}\label{def:trace}
We define an execution trace of a program $\procP$ as 
the potentially infinite sequence of action labels $\vec{l}$ such that 
$\procP\tra{l_1}\procP[1]\tra{l_2}\ldots$, with $\vec{l}=\{l_1,l_2\ldots\}$. We write 
$\traceset_{\procP}$ for the set of all possible traces of a process $\procP$.

\textbf{Reduction Contexts }\label{def:red-context}
are given by:
$\tycontxt_{r} \ \coloneqq \ [] \, \mid \, (\procP \parr \tycontxt_{r})
\, \mid \, (\tycontxt_{r} \parr \procP) \, \mid \, \pRes{\nGeneric}{\tycontxt_{r}}$.

\textbf{Infinite Conditional.}\label{def:inf-cond}
We say that $\procProg$ has infinite conditionals, written as $\procProg\in\ic$, iff 
$\m{mark}(\procProg)\tra{}^{*}\tycontxt_{r} 
[\pMPITE{\nExpr}{\procP}{\procQ}{n}]=\procR$, for some $n$, and $\procR$ 
has an infinite trace where $\iflab{n}{\llab}$ or $\iflab{n}{\rlab}$ appears 
infinitely often. 
We say that such an $n$ is an {\em infinite conditional mark} and write 
$\infcond(\procProg)$ for the set of all such marks.

We state in the next proposition that finite branching programs can be 
ensured live by checking for liveness of their types.

\begin{proposition}[Liveness for Finite Branching]\label{prop:finite-branch-live}
Suppose $\G \vdash \procProg \ts \tyEqn$ and $\tyEqn$ is live and 
$\procProg \not\in \ic$. Then $\procProg$ is live.
\end{proposition}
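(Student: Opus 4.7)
The plan is to transfer the liveness witness from the type down to the process via the subject reduction and progress machinery of Section~\ref{sec:types}. First I would unfold the definition of process liveness: assume $\procProg\tra{}^*\pRes{\vGeneric}{\procP}$ and, without loss of generality, $\procP\barb{\lckbarb{\nLock}}$ (the $\rlckbarb{\nLock}$ case being symmetric). The goal is then to exhibit a reduction $\procP\tra{}^*\procPi$ with $\procPi\barb{\labsync{\nLock}}$.

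Second, I would move to the type side. Iterating Subject Congruence (Proposition~\ref{prop:cong}) and Subject Reduction (Theorem~\ref{the:red}) yields a type $\tyT$ with $\tyEqn\tra{}^*\tyRes{\vGeneric}{\tyT}$ and $\G\vdash_B\procP\ts\tyT$. A barb correspondence argument, analogous to the one underpinning the proof of Theorem~\ref{the:prog}, transports the process lock-barb $\procP\barb{\lckbarb{\nLock}}$ into the type lock-barb $\tyT\barb{\lckbarb{\nLock}}$. Applying the liveness hypothesis on $\tyEqn$ then supplies a finite reduction $\tyT\tra{}^k\tyTi$ with $\tyTi\barb{\labsync{\nLock}}$.

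Third, I would lift this type path back to the process by induction on $k$, reusing Progress (Theorem~\ref{the:prog}) at each step. Synchronisation transitions $\labsync{\nGeneric}$ and the $\tau$-transitions arising from name creation or explicit silent actions are directly mirrorable, since the typing derivation preserves the corresponding structural shape of the process. The critical case will be the $\tau$-transition arising from rule~\ltsrulename{\ruleITEConstruct}: the type selects a branch non-deterministically, whereas the process is pinned by expression evaluation to exactly one branch of the corresponding conditional.

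This mismatch will be the main obstacle, and it is precisely where the hypothesis $\procProg\not\in\ic$ enters. I would formalise an auxiliary lemma stating that, whenever no conditional mark fires infinitely often along any trace, any finite type reduction witnessing a lock synchronisation can be rerouted through process-compatible branches. Concretely, at each disagreement one reapplies the type-liveness hypothesis at the post-branch type that the process actually reaches---still reachable from $\tyEqn$ via the process-faithful projection, hence still live by assumption---to obtain a fresh synchronisation path. Because $\procProg\not\in\ic$ bounds, on any reduction trace, the number of fires of each conditional mark, this rerouting procedure terminates after finitely many iterations and produces the required reduction $\procP\wbarb{\labsync{\nLock}}$.
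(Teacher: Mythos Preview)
Your argument is essentially correct, but it takes a markedly different route from the paper's.

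The paper does \emph{not} lift the type-level synchronisation witness step by step.  Instead it factors through an auxiliary lemma: for \emph{conditional-free} programs, type moves and process moves are in exact (strong) correspondence, so type liveness immediately yields process liveness.  The hypothesis $\procProg\notin\ic$ is then used only once, to argue that from any reachable $\pRes{\vGeneric}{\procP}$ one can reduce further to some $\procPi$ that contains no conditionals at all; the conditional-free lemma applied at $\procPi$ finishes the proof.  In other words, the paper eliminates the conditional/choice mismatch \emph{globally} before ever looking at the liveness witness, whereas you eliminate it \emph{locally and incrementally} by rerouting each time the mismatch manifests.

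Both strategies are sound.  The paper's decomposition is shorter and avoids the bookkeeping of your inductive lifting; it also makes the role of $\procProg\notin\ic$ very crisp (it guarantees a conditional-free reduct).  Your approach, on the other hand, does not rely on the somewhat implicit claim that a conditional-free reduct always exists---a claim the paper states but does not argue in detail---and would adapt more readily to settings where one cannot discharge all conditionals up front.  One point you should make explicit if you keep your route: your lifting of non-conditional $\tau$ and $\labsync{\nGeneric}$ steps needs the process to be able to follow the \emph{same} type successor, not merely \emph{some} successor with the same label; this is stronger than what Theorem~\ref{the:prog} states, though it does hold because the typing is structural.
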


An example of finite branching program is the Dining Philosophers problem:
\[
\procProg[\m{dinephil}] \coloneqq \pIn
  {
    \left\{\begin{array}{@{}l@{\ }r@{\ }l@{}}
      \pEntryProc & = & \pNewMem{f_1}{\m{int}}\pCont
        \pNewMem{f_2}{\m{int}}\pCont
        \pNewMem{f_3}{\m{int}}\pCont\\
        & & \pNewLock{l_1}\pCont
        \pNewLock{l_2}\pCont
        \pNewLock{l_3}\pCont\\
        & &\left(\begin{array}{@{}l@{\ }l@{}}
               \pPar{&\pPar{\pCall{\procP}{f_1,f_2,l_1,l_2}{1}}
                          {\pCall{\procP}{f_2,f_3,l_2,l_3}{2}}\\}
                    {&\pCall{\procP}{f_1,f_3,l_1,l_3}{3}}\end{array}\right)
      \\
      \procP(f_l,f_r,l_l,l_r,id) & = & \pLock{l_l}\pCont
        \pLoad{f_l}{\nVary}\pCont
        \pSilent\pCont
        \pStore{f_l}{id}\pCont\\
        & & \pLock{l_r}\pCont
        \pLoad{f_r}{\nVarz}\pCont
        \pSilent\pCont
        \pStore{f_r}{id+2}\pCont\\
        & & \pUnlock{l_r}\pCont
        \pUnlock{l_l}\pCont
        \pCall{\procP}{f_l,f_r,l_l,l_r}{id}
    \end{array}\right\}
  }
  {\pEntryPt}
\]

Here, $\procP$ defines the behaviour of a philosopher, 
trying to get a hold of both forks assigned to him, and them release them. 
Other implementations of this problem's algorithm (including ones using channel
communications) can be found in the%
\full{Appendix.}{full version~\cite{JY20:full}.}

Next we define in the infinite branching programs a subset containing 
only programs that simulate non-deterministic branching.

\textbf{Conditional Mapping.}\label{def:ast-mapping}
The mapping $\MAPAST{(\procProg)}$ replaces all occurrences of 
marked conditionals 
$\pMPITE{\nExpr}{\procP}{\procQ}{n}$, such that $n\in\infcond(\procProg)$, 
with $\pITE{\ast}{\procP}{\procQ}$. Its reduction semantics 
follow the nondeterministic semantics of selection in types, reducing with 
a $\acttau$ label. This mapping is applicable to processes $\procP$.

\textbf{Alternating Conditionals.}\label{def:alt-cond}
We say that $\procProg$ has {\em alternating conditional branches}, written 
$\procProg\in\ac$, iff $\procProg\in\ic$ and if 
$\procProg\tra{}^{\ast}\pRes{\vGeneric}{\procP}$ then 
$\MAPAST{\procP}\wbarb{\actname}$ implies $\procP\wbarb{\actname}$.

The concurrent version of the Prime Sieve~\cite{LNTY2017,NY2016} is an example of program that has 
alternating conditionals. Our implementation of it in Go can be found in the%
\full{Appendix,}{full version~\cite{JY20:full},}%
and is not detailed here as it uses channels, which we will introduce in an 
extension to this work in \S~\ref{sect:chan-extension}. 
An other simple example of alternating 
conditionals is as follows:
\[
\procProg[\m{ac}] \coloneqq \pIn
  {
    \left\{\begin{array}{@{}l@{\,}l@{\,}l@{}}
      \pEntryProc & = & \pNewMem{\nVar}{\m{bool}}\pCont
        \pNewMem{\nVary}{\m{int}}\pCont
        \pCall{\procP}{\nVar,\nVary}{}
      \\
      \procP(b,i) & = & \pLoad{b}{\nVarz}\pCont
        \pITE{\nVarz}
          {\pLoad{i}{\nTmp}\pCont \\
           & & 
           \pStore{i}{\nTmp+1}\pCont
           \pStore{b}{\opNot{\nVarz}}\pCont
           \pCall{\procP}{b,i}{}}
          {\pStore{b}{\opNot{\nVarz}}\pCont
           \pCall{\procP}{b,i}{}}
    \end{array}\right\}
  }
  {\pEntryPt}
\]

We finally state that programs in the alternating conditionals set 
can be ensured live by ensuring that their types are live.

\begin{theorem}[Liveness]\label{the:alt-cond-live}
Suppose $\G\vdash\procProg\ts\tyEqn$ and $\tyEqn$ is live and 
$\procProg\in\ac$. Then $\procProg$ is live.
\end{theorem}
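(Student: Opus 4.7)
The plan is to proceed by contrapositive: assume $\procProg$ is not live and derive a contradiction with liveness of $\tyEqn$ using the $\ac$ hypothesis to bridge the gap between process-level deterministic conditionals and type-level nondeterministic selection. Concretely, suppose there exist a reduction $\procProg \tra{}^{\ast} \pRes{\vGeneric}{\procP}$ and a lock name $\nLock$ such that $\procP \barb{\lckbarb{\nLock}}$ (the $\rlckbarb{\nLock}$ case is symmetric) but $\procP \not\wbarb{\labsync{\nLock}}$. I want to contradict the assumption $\tyEqn$ live.

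First, I would invoke Subject Reduction (Theorem~\ref{the:red}) iteratively along the reduction $\procProg \tra{}^{\ast} \pRes{\vGeneric}{\procP}$ to obtain a type $\tyT$ with $\tyEqn \tra{}^{\ast} \pRes{\vGeneric}{\tyT}$ and $\G \vdash_{B} \pRes{\vGeneric}{\procP} \ts \pRes{\vGeneric}{\tyT}$ for some $B$. Next, I would establish (or appeal to) the standard barb-correspondence lemma between typed processes and their types: a typed process exhibits a prefix/synchronisation barb $\actname$ iff its type exhibits the same barb (this follows from inspection of the typing rules for prefixes in Figure~\ref{fig:goldy-typing} together with the prefix-level barbs in Definitions~\ref{def:barb} and~\ref{def:barbs:types}). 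Hence $\tyT \barb{\lckbarb{\nLock}}$, so by liveness of $\tyEqn$ we obtain $\tyT \wbarb{\labsync{\nLock}}$, i.e.\ a type-level reduction sequence $\tyT \tra{}^{\ast} \tyTi$ with $\tyTi \barb{\labsync{\nLock}}$.

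The crucial step is to transport this type-level witness back to the process. The obstacle is that the type LTS resolves conditionals via the nondeterministic rule \ltsrulename{\ruleITEConstruct}, whereas the process LTS resolves them deterministically by evaluating the guard via \rulename{\ruleITETrue}/\rulename{\ruleITEFalse}; so a naive progress-style lift of the type trace to a process trace may be blocked whenever the type took a branch that the process guard never selects. I would handle this using the $\MAPAST{(\cdot)}$ mapping: by construction, $\MAPAST{\procP}$ replaces precisely the conditionals in $\infcond(\procProg)$ with $\pITE{\ast}{\cdot}{\cdot}$, whose reduction is nondeterministic and $\tytau$-labelled. I would show, by induction on the length of $\tyT \tra{}^{\ast} \tyTi$, that Progress (Theorem~\ref{the:prog}) together with this choice-simulation property yields $\MAPAST{\procP} \tra{}^{\ast} \procPi$ with $\procPi$ typed by $\tyTi$; using the barb correspondence again in the reverse direction gives $\MAPAST{\procP} \wbarb{\labsync{\nLock}}$. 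Finally, the defining property of $\ac$, namely that $\MAPAST{\procP} \wbarb{\actname}$ implies $\procP \wbarb{\actname}$, yields $\procP \wbarb{\labsync{\nLock}}$, contradicting our assumption.

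The main technical obstacle is exactly the inductive simulation argument in the previous paragraph: one must argue that every type-level reduction step is matched by a step of $\MAPAST{\procP}$, and in particular that selection steps of the type map onto nondeterministic $\pITE{\ast}{\cdot}{\cdot}$ reductions, while non-branching type steps (communications, locks, memory actions) are driven by Progress on the underlying typed process. Care is needed with name restriction (using the structural rules and normal forms from Definition~\ref{def:migo-normal-form}) so that fresh locks, rwlocks and variables created on the type side via \ltsrulename{\ruleNewMut}, \ltsrulename{\ruleNewRMut}, \ltsrulename{\ruleNewMem} are matched by the corresponding process-level creations, and with recursive process calls so that \ltsrulename{\ruleDefinition} unfoldings on types mirror \rulename{\ruleDefinition} unfoldings on processes. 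Once this simulation is set up, the rest of the argument is essentially bookkeeping on barbs and weak reductions.
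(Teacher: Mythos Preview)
Your proposal is correct and follows essentially the same route as the paper: subject reduction plus inversion to obtain a type $\tyT$ for $\procP$, then transfer the type-level weak barb $\tyT\wbarb{\labsync{\nLock}}$ to $\MAPAST{\procP}$ via a simulation argument, and finally invoke the defining property of $\ac$ to conclude $\procP\wbarb{\labsync{\nLock}}$. The only cosmetic difference is that the paper factors your simulation step into a separate lemma (stating $\tyT\wbarb{\actname}$ iff $\MAPAST{\procP}\wbarb{\actname}$ for $\actname\in\{\tau,\labsync{\nGeneric}\}$) and then argues the theorem directly rather than by contraposition, but the content is the same.
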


To summarise this section, we identified three classes of \gol programs for which 
we can prove liveness by proving type liveness: (1) programs that always have 
access to a terminating path (Definition~\ref{def:may-conv} and 
Proposition~\ref{prop:may-conv-live}), including the strict subset of 
programs that always terminate within a finite number of reduction steps, 
similar to our running examples; (2) programs that do not exhibit an 
infinite branch containing an infinitely occurring conditional 
(Proposition~\ref{prop:finite-branch-live}), such as the Dining Philosophers 
problem (used in our benchmarks, 
see \S~\ref{sec:implementation} for more details); and (3) programs with 
infinite branches that contain infinitely 
occurring conditionals, with the condition that these infinitely occurring 
conditionals simulate a non-deterministic choice 
(Theorem~\ref{the:alt-cond-live}), like our Prime Sieve implementation~\cite{LNTY2017,NY2016} and the example presented above.

\section{Verifying Program Properties: the Modal $\mu$-Calculus}
\label{sect:form-hb-modmu}
In this section, we introduce the modal $\mu$-calculus and express
various properties 
over the types.
We then explain how the type-level properties are transposed to
process-level properties, as proved in \S~\ref{sect:safe-live}. 

\subsection{The Modal $\mu$-Calculus}\label{sect:modal-mu}

We first define a \emph{pointed LTS} for the types, 
to denote the capabilities available at this point in the simulation.

\begin{definition}[Pointed LTS of types]\rm
  We define the pointed LTS of a program's types as:
\begin{description}
\item[A set of states] $\ltsStates$, labelled by the (restriction-less) types
accessible by reducing from the entrypoint $\tyEntryProc$ with $\tra{\acttau}$ and $\tra{\labsync{\nGeneric}}$; 
this entrypoint is defined as the type of 
the entrypoint $\pEntryProc$ of the program: 
$\ltsStates \coloneqq \{\tyT: \tyEntryProc \tra{}^{\ast}
  \tyRes{\vGeneric}{\tyT}
  \mbox{ and } \tyT \not\equiv \tyRes{\vGenerici}{\tyTi}\}
$.
\item[A set of labelled transitions] $\ltsTransitions$, 
in $\ltsStates\times\ltsStates\times\{\acttau,\labsync{\nGeneric}\}$: 
$\ltsTransitions \coloneqq \{(\tyT,\tyTi,\actname): 
\tyT,\tyTi\in\ltsStates \mbox{ and } \tyT \tra{\actname} \tyTi \}$.
\item[A set of barbs] attached to each state, describing the actions its 
labelled type can take according to the set of barbs of this type. These take the form 
of the barbs as they were defined above:
$\forall\tyT\in\ltsStates, \ltsFlags{\tyT} \coloneqq
\{\barb{\actname}\ :\ \tyT\barb{\actname}\}$. 
\end{description}
\end{definition}

The modal $\mu$-calculus is a calculus that allows to express temporal properties 
on such pointed LTS, like the fact that there exists an accessible state where some 
property is true, or the fact that some property is true in {\em all reachable states}. 
The syntax of these formulae is given below, where $\muAct$
is a set of barbs over the types available to the LTS of types, or transition actions $\acttau$ 
or $\labsync{\nGeneric}$ available as transitions to the LTS of types, as defined above:
\begin{center}
\framebox{$\begin{array}{lcl}
\muPred & \coloneqq & \muTrue
	\mid \muFalse
	\mid \muNeg{\muPred}
	\mid \muPred \muAnd \muPred
	\mid \muPred \muOr \muPred
	\mid \muPred \muImplies \muPred
	 \mid  \muAll{\muAct}{\muPred}
	\mid \muSome{\muAct}{\muPred}
	\mid \muGFP{\muVar}{\muPred}
	\mid \muLFP{\muVar}{\muPred}
	\mid \muVar\\[1mm]
\muAct  & \coloneqq & \muCol{\muAct + \muAct}
	\mid\ \muCol{\barb{\actname}} 
        \mid\ \muCol{\barb{\tilde{\actname}}}
	\mid \muCol{\acttau} \mid \muCol{\labsync{\nGeneric}}
        \mid \muSync\quad\quad  
\muSync  \coloneqq  \{\muCol{\labsync{\nGeneric}} : \nGeneric\in\fn{\tyT}\}\cup\{\muCol{\acttau}\}
\end{array}
$}
\end{center}
The formulae contain the true and false constants, negation, implication, conjunction and 
disjunction (both of which can be generalised over a set of actions, where this set can be 
restrained by some condition). 

The {\em diamond modality}, 
$\muSome{\muAct}{\muPred}$, is true when at least one of the actions in $\muAct$ is available 
from the current state and, if it is a barb then $\muPred$ must be true in the current state, 
and if it is a transition action then $\muPred$ must be true in the resulting state. 
If no action in $\muAct$ is available, then this formula is false.
For example, $\muSome{\barb{\stbarb{\tyvarz}{\occurnone}}}{\muTrue}$ holds on every state where 
a store action on $\tyvarz$ is available as the main action, but not when the only store 
action available is labelled otherwise, e.g. $\occurleft{\occurnone}$. 

The {\em box modality}, $\muAll{\muAct}{\muPred}$, is valid when, for every state reachable by 
following an action in $\muAct$ from the current state, $\muPred$ is true. This set of states can be 
empty, in case no action in $\muAct$ is available, in which case this formula is vacuously true.
For example, $\muAll{\acttau}{\muFalse}$ is true only when no $\acttau$ transition is available to 
the current state of the pointed LTS of the type.

The {\em lowest fixed point} $\muLFP{\muVar}{\muPred}$ and {\em greatest fixed point} 
$\muGFP{\muVar}{\muPred}$ are the standard recursive constructs, 
where the least fixed point is the intersection of prefixed points, 
and the greatest fixed point is the union of 
postfixed points. That implies the following properties, given for understanding:

\begin{enumerate}
\item\label{prop:lfp-empty} $\muLFP{\muVar}{\muVar} = \muFalse$: the lowest fixed point 
defaults to false; 
\item\label{prop:gfp-full} $\muGFP{\muVar}{\muVar} = \muTrue$: the greatest fixed point 
defaults to true; 
\item\label{prop:lfp-prefix} if $\muPred[] [\muVar \coloneqq \muCol{\psi}] \muImplies \muCol{\psi}$ 
then $\muLFP{\muVar}{\muPred} \muImplies \muCol{\psi}$: the lowest fixed point can be expanded on the left of a logical implication; 
\item\label{prop:gfp-postfix} if $\muCol{\psi} \muImplies \muPred[] [\muVar \coloneqq \muCol{\psi}]$ 
then $\muCol{\psi} \muImplies \muGFP{\muVar}{\muPred}$: the greatest fixed point can be expanded on the right of a logical implication.
\end{enumerate}

To express that some modal $\mu$-calculus formula $\muPred$ is true on a state labelled with 
type $\tyT$ in the LTS $\ltsT$, we say that $\tyT$ satisfies $\muPred$ in the LTS $\ltsT$, 
written $\sat[\ltsT]{\tyT}{\muPred}$.

Two key properties that can be expressed are:
$\muPred$ is {\em always true}, which means that every state $\tyT$ in $\ltsT$ satisfies 
that formula; and $\muPred$ is {\em eventually true} which means that there exists a reachable 
state that satisfies this formula. These are expressed with the fixed-point modalities 
explained above:
\begin{center}
\framebox{
Always $\muPred$\label{item:mu-always}:\quad
$\muCol{\Psi(\muPred)=}\muGFP{\muVar}{\muPred\muAnd\muAllNeg{}{\muVar}}$\qquad
Eventually $\muPred$\label{item:mu-eventually}:\quad
$\muCol{\Phi(\muPred)=}\muLFP{\muVar}{\muPred\muOr\muSomeNeg{}{\muVar}}$
}
\end{center}

\subsection{Properties of the Behavioural Types}

\figurename~\ref{table:mu-prop} defines the {\em local} properties we check on the states of the 
behavioural types LTS, which means they are defined for 
{\em one state only}. The global 
properties can be checked on the entrypoint of the LTS by checking for 
$\muCol{\Psi(\muPred)}$, ie. ``always $\muPred$''.

\begin{wrapfigure}{l}{0.55\linewidth}
\vspace{-2mm}
\scalebox{0.84}{
\framebox{
\begin{minipage}{1.1\linewidth}
\begin{enumerate}
\item\label{item:mu-mut-safe1}
Mutex safety (a):\\
$\muCol{\psi_{s_a}=}\muConj{\nLock}{\muSome{\barb{\ulckbarb{\nLock}}}{\muTrue} \muImplies \muSome{\barb{\lckmutbarb{\nLock}}}{\muTrue}}$
\\[-0.7em]
\item\label{item:mu-mut-safe2}
Mutex safety (b):\\
$\muCol{\psi_{s_b}=}\muConj{\nLock}{\muSome{\barb{\rulckbarb{\nLock}}}{\muTrue} \muImplies \muSome{\barb{\waitrmutbarb{\nLock}{i}}}{\muTrue}}$
\\[-0.7em]
\item\label{item:mu-mut-live}
Mutex liveness:\\
$\muCol{\psi_{l}=}\muConj{\nLock}{\muSome{\barb{\lckbarb{\nLock}}+\barb{\rlckbarb{\nLock}}}{\muTrue} \muImplies \muCol{\Phi\left(\muSome{\labsync{\nLock}}{\muTrue}\right)}}$
\\[-0.7em]
\item\label{item:mu-dr-free}
Data race freedom:\\
$\muCol{\psi_{d}=}\muConj{\nVar,\occurgen}{\muSome{\barb{\stbarb{\nVar}{\occurgen}}}{\muTrue} \muImplies \muAll{\sum_{\occurgeni\neq\occurgen}\barb{\stbarb{\nVar}{\occurgeni}}+\barb{\ldbarb{\nVar}{\occurgeni}}}{\muFalse}}$
\end{enumerate}
\end{minipage}
}}
\caption{Modal $\mu$-calculus properties of types}\label{table:mu-prop}
\vspace*{-5mm}
\end{wrapfigure}

Property $\muCol{\psi_{s_a}}$ checks for the first half of lock safety, that is a lock can 
only be unlocked if it is currently in locked state, and property $\muCol{\psi_{s_b}}$ checks the 
second half of lock safety, that is a read/write-lock can only be read-unlocked one level if 
it is in a read-locked state currently. 

Property $\muCol{\psi_l}$ states lock liveness, that is if a lock or read-lock action is staged, 
the same lock will eventually synchronise (and as such, when applied on a global level 
$\muCol{\Psi(\psi_l)}$, the lock or read-lock in question will eventually fire, since it becomes 
false if at any point there is a lock or read-lock staged but no future synchronisation 
on the lock). 
Remember that in our model, 
liveness of the types only entails liveness of the program 
if the program is in one of the subsets defined previously, in particular 
if the program terminates or only has alternating conditionals.

Finally, property $\muCol{\psi_d}$ checks local data race freedom, that is if a write action 
is available on some variable $\nVar$, then no other read or write action is available on 
the same variable in the current state. $\muCol{\Psi(\psi_d)}$ checks for data race freedom 
on the whole of accessible states, so checking that on the entrypoint $\typevara_{0}$ of 
a type LTS $\ltsT$ ensures the type of the associated program is data race free, and thus that 
said program is data race free.

\begin{example}\label{ex:checking}
We can check that the type $\tyTi$ from
Example~\ref{ex:typered} does not verify $\muCol{\psi_d}$:
{\footnotesize
\[
\muCol{\psi_d=} \muCol{\biggl(\muSome{\barb{\stbarb{\tyvarz}{\occurleft{\occurleft{\occurleft{\occurnone}}}}}}{\!\muTrue}
\muImplies \muAll{\barb{\stbarb{\tyvarz}{\occurleft{\occurleft{\occurright{\occurnone}}}}}+\barb{\ldbarb{\tyvarz}{\occurgeni}}}{\!\muFalse}\biggr)}
\muCol{\muAnd\biggl(\muSome{\barb{\stbarb{\tyvarz}{\occurleft{\occurleft{\occurright{\occurnone}}}}}}{\!\muTrue}
\muImplies \muAll{\barb{\stbarb{\tyvarz}{\occurleft{\occurleft{\occurleft{\occurnone}}}}}+\barb{\ldbarb{\tyvarz}{\occurgeni}}}{\!\muFalse}\biggr)}
\]}
which is false for $\tyTi$, hence
$\nsat[\ltsT_{\m{race}}]{\tyTi}{\muCol{\psi_d}}$ : locally, $\tyTi$ has a datarace.
Then $\nsat[\ltsT_{\m{race}}]{\tyEntryProc}{\muCol{\Psi(\psi_d)}}$,
meaning $\EqTypes_{\m{race}}$ has a data race, since its associated entrypoint in its 
LTS $\ltsT_{\m{race}}$ does not satisfy data race freedom property $\muCol{\Psi(\psi_d)}$.

On the other hand, the type $\EqTypes[\m{safe}]$ from
Example~\ref{ex:typered}, modelling the safe version of our running example, 
verifies the data race freedom property, as well as safety and liveness: 
\[\sat[\ltsT_{\m{safe}}]{\EqTypes[\m{safe}]}{\muCol{\Psi(\psi_d)\wedge
\Psi(\psi_l)\wedge
\Psi(\psi_{s_a}\wedge\psi_{s_b})}}\]

The types corresponding to the other examples in \S~\ref{sect:safe-live} 
($\procProg[\m{mc}],\procProg[\m{dinephil}]$ and $\procProg[\m{ac}]$) are also 
safe, live and data race free.
\end{example}

The following theorem states that type-level model-checking can justify 
process properties under the conditions given in \S~\ref{sect:safe-live}. 
We define the pointed LTS of processes $\ltsT_{\procProg}$ and the satisfaction 
property $\sat[\ltsT_{\procProg}]{\procProg}{\muPred}$ in the same way as they 
are defined for types in this section.

\begin{theorem}[Model Checking of \gol processes]
\label{the:mod-mu-proc-check}
Suppose $\G \vdash \procProg \ts \tyEqn$.
\begin{enumerate}
\item If $\sat[\ltsT_{\tyEqn}]{\tyEqn}{\muCol{\Psi(\muPred)}}$ for 
  $\muPred\in\{\muCol{\psi_{s_a}},\muCol{\psi_{s_b}},\muCol{\psi_{d}}\}$, then 
  $\sat[\ltsT_{\procProg}]{\procProg}{\muCol{\Psi(\muPred)}}$.
\item If $\sat[\ltsT_{\tyEqn}]{\tyEqn}{\muCol{\Psi(\psi_{l})}}$ and
either (a)
  $\procProg\in\convm$ or (b) $\procProg\not\in\ic$ or 
(c) $\procProg\in\ac$, then 
$\sat[\ltsT_{\procProg}]{\procProg}{\muCol{\Psi(\psi_{l})}}$. 
\end{enumerate}
\end{theorem}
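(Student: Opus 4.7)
The plan is to reduce the modal $\mu$-calculus satisfaction statements to the corresponding relational definitions of safety, data race freedom and liveness given earlier in the paper, and then invoke the transfer theorems already established in Section~\ref{sect:safe-live}. Concretely, I would first prove an auxiliary correspondence lemma stating that, for any type $\tyEqn$ and its pointed LTS $\ltsT_{\tyEqn}$,
$\sat[\ltsT_{\tyEqn}]{\tyEqn}{\muCol{\Psi(\psi_{s_a})\wedge\Psi(\psi_{s_b})}}$ is logically equivalent to $\tyEqn$ being safe in the sense of Definition~\ref{def:type-safe}, $\sat[\ltsT_{\tyEqn}]{\tyEqn}{\muCol{\Psi(\psi_d)}}$ is equivalent to $\tyEqn$ being data race free (Definition~\ref{def:type-drace}), and $\sat[\ltsT_{\tyEqn}]{\tyEqn}{\muCol{\Psi(\psi_l)}}$ is equivalent to $\tyEqn$ being live (Definition~\ref{def:type-live}). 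The same correspondence at the process level would match $\Psi(\muPred)$ to Definitions~\ref{def:safe}, \ref{def:hb-drace} and \ref{def:live}.

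The correspondence lemma itself unfolds by definition: $\muCol{\Psi(\muPred)} = \muGFP{\muVar}{\muPred \muAnd \muAllNeg{}{\muVar}}$ is, by the standard semantics of the greatest fixed point on the reachability modality, exactly ``$\muPred$ holds in every reachable state''. Each local formula $\muCol{\psi_{s_a}},\muCol{\psi_{s_b}},\muCol{\psi_d},\muCol{\psi_l}$ is then a one-to-one transcription of the corresponding local clause in the relational definitions, using the fact that the barbs $\barb{\ulckbarb{\nLock}}$, $\barb{\rulckbarb{\nLock}}$, $\barb{\stbarb{\nVar}{\occurgen}}$, $\barb{\ldbarb{\nVar}{\occurgeni}}$ and so on occur in exactly the same roles in both formulations, and that the weak diamond $\muCol{\Phi(\muSome{\labsync{\nLock}}{\muTrue})}$ encodes $\wbarb{\labsync{\nLock}}$.

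With this correspondence in hand, part~(1) is obtained as follows. From $\sat[\ltsT_{\tyEqn}]{\tyEqn}{\muCol{\Psi(\muPred)}}$ with $\muPred\in\{\muCol{\psi_{s_a}},\muCol{\psi_{s_b}},\muCol{\psi_d}\}$ the lemma yields type safety (for $\muCol{\psi_{s_a}},\muCol{\psi_{s_b}}$) respectively type data race freedom (for $\muCol{\psi_d}$). Applying Theorem~\ref{the:proc-safe-drf} one gets that $\procProg$ is safe respectively data race free as in Definitions~\ref{def:safe} and~\ref{def:hb-drace}; invoking the process-level direction of the correspondence lemma gives $\sat[\ltsT_{\procProg}]{\procProg}{\muCol{\Psi(\muPred)}}$. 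For part~(2), the same lemma reduces $\sat[\ltsT_{\tyEqn}]{\tyEqn}{\muCol{\Psi(\psi_l)}}$ to type liveness of $\tyEqn$; then in cases~(a), (b) and~(c) one applies respectively Proposition~\ref{prop:may-conv-live}(2), Proposition~\ref{prop:finite-branch-live} and Theorem~\ref{the:alt-cond-live} to obtain process liveness, and translates back through the lemma.

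The main obstacle will be making the correspondence lemma fully rigorous, in two places. First, the $\mu$-calculus is interpreted over the pointed LTS built from \emph{restriction-less} states $\tyT \not\equiv \tyRes{\vGenerici}{\tyTi}$, whereas the relational definitions quantify over arbitrary reducts $\tyEqn \tra{}^{\ast} \tyRes{\vGeneric}{\tyT}$; one must check that pushing restrictions to the outside (by structural congruence) does not change the barbs used in the formulas, so the two views coincide state by state. Second, for liveness the formula uses the weak diamond $\muCol{\Phi}$ over all transitions, while Definition~\ref{def:live} speaks of $\wbarb{\labsync{\nLock}}$; here one needs that $\tra{}$ in \gol, defined as $\equiv\tra{\acttau}\equiv\cup\equiv\tra{\labsync{\name}}\equiv$, is precisely the set of transitions labelling edges in the pointed LTS, so that reachability in the $\mu$-calculus sense coincides with the weak barb $\wbarb{\labsync{\nLock}}$. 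Once these identifications are in place, the remainder of the argument is a direct assembly of already-proved results.
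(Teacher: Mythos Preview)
Your proposal is correct and follows essentially the same approach as the paper: the paper's own proof is the single line ``By Theorems~\ref{the:proc-safe-drf} and~\ref{the:alt-cond-live}, and Propositions~\ref{prop:may-conv-live} and~\ref{prop:finite-branch-live}'', which is exactly the transfer step you describe. The correspondence between the $\mu$-calculus formulas $\muCol{\Psi(\psi_{s_a})}, \muCol{\Psi(\psi_{s_b})}, \muCol{\Psi(\psi_d)}, \muCol{\Psi(\psi_l)}$ and the relational Definitions~\ref{def:type-safe}--\ref{def:type-drace} (resp.\ \ref{def:safe}--\ref{def:hb-drace}) is left implicit in the paper---the formulas in \figurename~\ref{table:mu-prop} are designed as direct transcriptions of those definitions---so your explicit correspondence lemma and the care you take over restriction-less states and the weak diamond are a welcome tightening rather than a departure.
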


\begin{proof}
By Theorems~\ref{the:proc-safe-drf} and \ref{the:alt-cond-live}, and 
Propositions~\ref{prop:may-conv-live}
and~\ref{prop:finite-branch-live}.  
\end{proof}

\section{Extending the framework for Go with channels}
\label{sect:chan-extension}
\begin{wrapfigure}[12]{l}{0.57\linewidth}
\vspace{-8mm}
{
\lstinputlisting[language=Go,style=golang,firstline=9]{./code/example-mix.go}
}
\vspace{-4mm}
\caption{Go programs: safe (size $1$) $\Rightarrow$ race (size $2$)}
\label{fig:chan:running-ex}
\end{wrapfigure}

One of the core features of the Go language is the use of channels for 
communication in concurrent programming. 
In Go programs, a number of concurrency bugs can be caused by a 
\emph{mixture} of data races and communication problems. In this
section, we develop a theory 
which can uniformly
analyse concurrency errors caused by a mix of shared memory
accesses and asynchronous message-passing communications, 
integrating coherently our framework in 
\cite{LNTY2017,LNTY:Icse18}. 
We include channel communications as a synchronisation 
primitive in our model for data race checking, following the official Go specification. 

\indent\figurename~\ref{fig:chan:running-ex} illustrates a Go program,
which makes use of a channel \lstgo{ch} to synchronise the \lstgo{main} and \lstgo{f}
functions updating the content of the shared variable \lstgo{x}. 
On line~\ref{line:example-makechan}, the statement 
\lstgo{ch := make(chan int, $num$)} creates
a new shared channel \lstgo{ch} with a buffer size of $num$ for passing
\lstgo{int} values. Channels can be sent to or received from using the
\lstgo{<-} operator, where \lstgo{ch <- $value$} and \lstgo{<-ch} depict
sending $value$ to the channel and receiving from the channel respectively.
At runtime, sending to a full channel (i.e.~number of items in channel $\geq num$),
or receiving from an empty channel (i.e.~number of items in channel $= 0$)
blocks.
The \lstgo{go} keyword in front of a function call on
line~\ref{line:example-spawn} spawns a lightweight thread
(called a \textit{goroutine}) to execute the body of function \lstgo{f}.
The two parameters of function \lstgo{f} -- a channel \lstgo{ch}, and an
\lstgo{int} pointer \lstgo{ptr} -- are shared between the caller and callee
goroutines, \lstgo{main} and \lstgo{f}. Since concurrent access to the shared
pointer \lstgo{ptr} may introduce a data race, a pair of channel send and
receive are used to ensure serialised, mutually exclusive access to \lstgo{ptr} in
\lstgo{f} and \lstgo{x} in \lstgo{main}. If 
the buffer 
size of the shared channel is set to \lstgo{2} by mistake (as denoted
by $\Rightarrow$ in line~\ref{line:example-makechan}), allowing
simultaneous write requests to \lstgo{x} on lines~\ref{line:example-main-lock} and
\ref{line:example-f-lock}, the program could output 
``\texttt{x is 20}'' with a bad scheduling, dropping the increase of 10
in the same thread as the print statement. 
We use this program as our running example in this section.

\subsection{Channels in Processes}
\label{subsec:chan:syntax}
We add to the processes the following constructs to account for
\emph{channel actions} (defined as $\migoChanPref \coloneqq \psend{\channame}{\nExpr}
\mid \precv{\channame}{\nVar} \mid \pSilent$) 
and runtime buffer:
\begin{center}\framebox{$\begin{array}{l}\!\!\!\begin{array}{l@{\ }c@{\ }l}
\procP & \coloneqq & \ldots \mid \migoChanPref\pCont\procP \mid \close{\channame}\pCont\procP 
\mid \migoSel{\migoChanPref[i]}{\procP[i]}{i\in I} \mid \migoNewch{\channame}{\sigma}{n}\pCont\procP
\mid \buff{\vVal}{\channame}{\sigma,n}
\mid \closedbuff{\vVal}{\channame}{\sigma,n}\\
\end{array}
\end{array}$}\end{center}
Channels are ranged over by $\channamea,\channameb,\channame$, which are from 
now also included under the generic names $\nGeneric$, and sets of channels 
are ranged over by $\chanVec$.
The new syntax contains the ability to send and receive messages through channels, 
in capabilities under prefix $\migoChanPref$, and the ability to close a channel.
There is also a $\m{select}$ construct that allows selection between several 
processes guarded by channel send or receive actions, or a silent action. 
Lastly, we can create a new channel, and there are two runtime constructs denoting 
respectively open and closed channel $\channame$ with payload type $\sigma$, 
allowed buffer size $n$ and current buffered messages $\vVal$.

We add the structural congruence rules 
for queues, 
$\pRes{\channame}{\buff{\vVal}{\channame}{\sigma,n}} \equiv \zero$ 
and 
$\pRes{\channame}{\closedbuff{\vVal}{\channame}{\sigma,n}} \equiv
\zero$, 
and to the LTS the new corresponding reduction rules, along with their labels, 
shown in \figurename~\ref{fig:goldy-sem-chan}.
The rules include creating a new channel with \rulename{\ruleNewChan};
sending to and receiving from a buffered channel with 
\rulename{\ruleChanAsyncSend} and \rulename{\ruleChanAsyncReceive}; closing 
a channel with \rulename{\ruleChanClose}; synchronous communications for 
channels with buffer size 0 using rule \rulename{\ruleChanSyncCom}; and 
reducing a select construct with \rulename{\ruleSelectConstruct}.

\begin{figure}
\vspace{-5mm}
\framebox{
\begin{tabular}{l}
\hspace{-5mm}
\begin{tabular}{l|l}
$
\!\!\!\begin{array}{l}
\framebox{Channel actions}
\\
\!\!\!\begin{array}{ll}
  \rulename{\ruleChanSendReq}& 
   \psend{\channame}{\nExpr}\pCont\procP \tra{\dataAct{\ov{\channame}}{\nExpr}} \procP
  \\
  \rulename{\ruleChanReceiveReq}& 
   \precv{\channame}{\nVary}\pCont\procQ \tra{\dataAct{\channame}{\nVal}} \procQ\subs{\nVal}{\nVary}
  \\
  \rulename{\ruleChanCloseReq}& 
   \close{\channame}\pCont\procP \tra{\labclose{\channame}} \procP
  \\[0.2em]
\hline
  \rulename{\ruleChanCloseAck}& 
   \buff{\vVal}{\channame}{\sigma,n} \tra{\labclosedual{\channame}}
    \closedbuff{\vVal}{\channame}{\sigma,n}
  \\
  \rulename{\ruleChanSendAck} & 
   \inferrule
    {\buflen{\vVal} < n}
    {\buff{\vVal}{\channame}{\sigma,n} \tra{\dataAct{\astyBufSend{\channame}}{\nVal}} 
      \buff{\nVal \cdot \vVal}{\channame}{\sigma,n}}
  \\
  \rulename{\ruleChanReceiveAck} & 
   \buff{\vVal \cdot \nVal}{\channame}{\sigma,n} \tra{\dataAct{\astyBufRcv{\channame}}{\nVal}} 
    \buff{\vVal}{\channame}{\sigma, n}
  \\
  \rulename{\ruleChanClosedReceiveAck}& 
   \closedbuff{\vVal \cdot \nVal}{\channame}{\sigma,n} \tra{\dataAct{\labclsnd{\channame}}{\nVal}} 
    \closedbuff{\vVal}{\channame}{\sigma,n}
  \\
  \rulename{\ruleChanClosedDefaultValue}& 
   \closedbuff{\emptyset}{\channame}{\sigma,n} \tra{\dataAct{\labclsnd{\channame}}{\bot^\sigma}} 
    \closedbuff{\emptyset}{\channame}{\sigma,n}
\end{array}
\end{array}$
&
$\begin{array}{l}
\framebox{Synchronisation rules}
\\[0.5em]
\begin{array}{c}\rulename{\ruleChanClose}
   \inferrule
    {\procP\tra{\labclose{\channame}}\procPi \quad \procQ\tra{\labclosedual{\channame}}\procQi}
    {\pPar{\procP}{\procQ} \tra{\acttau} \pPar{\procPi}{\procQi}}
  \\
  \rulename{\ruleChanSyncCom}
   \inferrule
    {\procP \tra{\dataAct{\ov{\channame}}{\nExpr}} \procPi \quad 
     \procQ \tra{\dataAct{\channame}{\nVal}} \procQi \quad 
     \nExpr \conv \nVal}
    {\pPar{\pPPar{\procP}{\procQ}}{\buff{\emptyset}{\channame}{\sigma,0}} \tra{\labsync{\channame}} 
     \pPar{\pPPar{\procPi}{\procQi}}{\buff{\emptyset}{\channame}{\sigma,0}}}
  \\
  \rulename{\ruleChanAsyncSend}
   \inferrule
    {\procP \tra{\dataAct{\ov{\channame}}{\nExpr}} \procPi \quad 
     \procQ \tra{\dataAct{\astyBufSend{\channame}}{\nVal}} \procQi \quad
     \nExpr \conv \nVal}
    {\pPar{\procP}{\procQ}\tra{\labsync{\channame}}\pPar{\procPi}{\procQi}}
  \\
  \rulename{\ruleChanAsyncReceive} 
   \inferrule
    {\procP \tra{\dataAct{\channame}{\nVal}} \procPi \quad 
     \procQ \tra{\dataAct{\astyBufRcv{\channame}}{\nVal}} \procQi \mbox{ or } 
     \procQ \tra{\dataAct{\labclsnd{\channame}}{\nVal}} \procQi}
    {\pPar{\procP}{\procQ} \tra{\labsync{\channame}} \pPar{\procPi}{\procQi}}
\\
\rulename{\ruleSelectConstruct}
  \inferrule
  {\pPar{\migoChanPref[j]\pCont\procP[j]}{\procP} \tra{\actgen} \procPi \quad 
   \actgen\in\{\acttau,\labsync{\channame}\}}
  {\pPar{\migoSel{\migoChanPref[i]}{\procP[i]}{i\in I}}{\procP} \tra{\actgen} \procPi}
\\[1.5em]
\end{array}\hspace*{-5mm}
\end{array}\hspace*{-5mm}$
\end{tabular}\hspace*{-5mm}
\\
\hline
\\[-0.7em]
$\hspace*{-5mm}\begin{array}{l}
\begin{array}{l}
\framebox{Runtime creation}
\hspace*{4em}\rulename{\ruleNewChan}\
  {\migoNewch{\nVary}{\sigma}{n}\pCont\procP \tra{\acttau}
   \migoRes{\channame}{\pPPar{\procP\subs{\channame}{\varnamey}}{\buff{\emptyset}{\channame}{\sigma,n}}} 
   \quad ({\channame \notin \fn{\procP}})}
\end{array}\hspace*{-5mm}
\end{array}$
\end{tabular}
}
\caption{Remaining LTS Semantics of Processes.}\label{fig:goldy-sem-chan}
\vspace{-5mm}
\end{figure}

\begin{example}[Processes from \figurename~\ref{fig:chan:running-ex}] 
\label{ex:syntax-chan}
The following process represents the unsafe version of the code in 
\figurename~\ref{fig:chan:running-ex}. 
As in Example~\ref{ex:syntax}, we 
separate the \lstgo{main} function in two parts, 
the part that instantiates the variable and channel, and spawns the side process in parallel to the 
continuation; and two called processes $\procP$ and $\procQ$. 
\[
\procProg[\m{c-race}] \coloneqq
    \pIn
	{\left\{\begin{array}{@{}l@{\ }l@{\ }l}
	  \pEntryProc & = & \pNewMem{\nVar}{\m{int}}\pCont
	   \migoNewch{\channame}{\m{int}}{{\color{blue}2}}\pCont
	   \pPPar{\pCall{\procP}{\nVar,\channame}{}}
	         {\pCall{\procQ}{\nVar,\channame}{}}\\
	  \procP(\nVary,\nVarz) & = & \psend{\nVarz}{\m{Lock}}\pCont
	   \pLoad{\nVary}{\nTmp[1]}\pCont
	   \pStore{\nVary}{\nTmp[1]+10}\pCont
	   \precv{\nVarz}{\nGeneric[1]}\pCont\\
	  & & \psend{\nVarz}{\m{Lock}}\pCont
	   \pLoad{\nVary}{\nTmp[2]}\pCont
	   \pSilent\pCont
	   \precv{\nVarz}{\nGeneric[2]}\pCont
	   \zero\\
          \procQ(\nVary,\nVarz) & = & \psend{\nVarz}{\m{Lock}}\pCont
	   \pLoad{\nVary}{\nTmp[0]}\pCont
	   \pStore{\nVary}{\nTmp[0]+20}\pCont
	   \precv{\nVarz}{\nGeneric[0]}\pCont
	\zero
	\end{array}\right\}}
     {\pEntryPt}
\]

The safe version $\procProg[\m{c-safe}]$ is the same, replacing the {\color{blue}2} 
for a {\color{blue}1} in the channel instanciation.

This example reduces like the one with a rwlock, allowing to see the possible data race:
\[
\begin{array}{r@{\ }c@{\ }l}
\procProg[\m{c-race}] 
  & \tra{}^{6} & \pRes{\nVar\nLock}{
      \left(\begin{array}{@{}l@{\ }l}
        \pPar{\pPar{\pPar{
	  & 
	   \pStore{\nVar}{10}\pCont
	   \precv{\channame}{\nGeneric[1]}\pCont\\
	  & \ \psend{\channame}{\m{Lock}}\pCont
	   \pLoad{\nVar}{\nTmp[2]}\pCont
	   \pSilent\pCont
	   \precv{\channame}{\nGeneric[2]}\pCont
	   \zero\\}{
          & 
	   \pStore{\nVar}{20}\pCont
	   \precv{\channame}{\nGeneric[0]}\pCont
	\zero}}
	  {\stMem{\nVar}{\m{int}}{0}}}
	 {\buff{\m{Lock}\cdot\m{Lock}}{\channame}{\m{int},2}}
	\end{array}
      \right)} = \pRes{\nVar\nLock}{\procPi}
\end{array}
\]
\end{example}

\subsection{Liveness and Safety for Channels}
\label{subsect:chan-live-safe}

To define the liveness and safety properties for channels, we 
first extend the barbs as follows:
\begin{definition}[Process barbs]\label{def:barb-chan}\rm
The barbs are expanded as follows:
\begin{description}
 \item[prefix actions:] 
$\precv{\channame}{\varname}\barb{\channame};
\ 
\psend{\channame}{\nExpr}\barb{\ov{\channame}}$.
 \item[select:] we add the rule:
$\inferrule
  {\forall i\in\{1,...,n\}:\migoChanPref[i]\migoCont\migoP[i]\tra{\actname[i]}\migoP[i] \wedge \actname[i]\neq\acttau}
  {\migoSel{\migoChanPref[i]}{\migoP[i]}{i\in\{1,...,n\}}\barb{\{\actname[1],...,\actname[n]\}}}$
\end{description}
\end{definition}
The rest is unchanged, but takes into account end actions, as well as buffer actions.

Next is extending the safety and liveness properties to channels, by adding 
the following definitions: (1) \textbf{\emph{Channel Safety}}: 
A channel can be 
closed only once, and when closed should not be used to send a message. A 
closed channel can be used to receive an unbounded number of times though, 
and will yield a default value of the channel's type when the queue is
empty; and 
(2) 
\textbf{\emph{Channel Liveness}}: 
no channel action blocks indefinitely, ie. 
all channel actions lead to synchronisation on the channel eventually (or on a 
channel of the list of guarding actions for a select construct that has no 
silent action guard).

\begin{definition}[Channel Safety]~\label{def:chansafe}\rm
Program $\procProg$ 
is {\em channel safe} if for all $\procP$ such that 
$\procProg\tra{}^*\migoRes{\vGeneric}{\procP}$, if 
$\procP\barb{\cclbarb{\channame}}$ 
then 
$\neg(\procP\wbarb{\clbarb{\channame}})$ and 
$\neg (\procP\wbarb{\ov{\channame}})$.
\end{definition}

\begin{definition}[Channel Liveness]~\label{def:chanlive}\rm
Program $\procProg$ satisfies \emph{channel liveness} if for all $\procP$ such that
$\procProg \tra{}^* \migoRes{\vGeneric}{\procP}$, 
(a) if ${\procP}\barb{\channame}$ or 
${\procP}\barb{\ov{\channame}}$ 
then 
$\procP\wbarb{\labsync{\channame}}$; and 
(b) if ${\procP}\barb{\actnameVec}$ 
then $\procP\wbarb{\labsync{\channame[i]}}$ for some $\channame[i]\in
\fn{\actnameVec}$.
\end{definition}

\begin{figure}[h]
\vspace{-5mm}
\begin{center}
\framebox{$
\hspace*{0.4em}
\begin{array}{r@{\ }l@{\qquad}r@{\ }l}
\hbrulename{\ruleSendReceiveAsync}&
\inferrule
{\procP\barb{\channame}
  \quad \procP\barb{\ov{\channame}}
  \quad \buflen{\vVal} = n}
{\hb[\pPar{\procP}{\buff{\vVal}{\channame}{\sigma,n}}]{\channame}{\ov{\channame}}}
&
\hbrulename{\ruleSendSelectReceiveAsync}&
\inferrule
{\procP\barb{\ov{\channame}}
  \quad \exists j\in I:\migoChanPref[j]\barb{\channame}
  \quad \buflen{\vVal} = n}
{\hb[\pPar{\pPPar{\procP}{\migoSel{\migoChanPref[i]}{\procQ[i]}{i \in I}}}{\buff{\vVal}{\channame}{\sigma,n}}]{\channame}{\ov{\channame}}}
\\[1.2em]
\hbrulename{\ruleReceiveFromClosed}&
\inferrule
{\procP\barb{\channame}}
{\hb[\pPar{\procP}{\closedbuff{\vVal}{\channame}{\sigma,n}}]{\labclsnd{\channame}}{\channame}}
&
\hbrulename{\ruleSelectSendReceiveAsync}&
\inferrule
{\procP\barb{\channame}
  \quad \exists j\in I:\migoChanPref[j]\barb{\ov{\channame}}
  \quad \buflen{\vVal} = n}
{\hb[\pPar{\pPPar{\procP}{\migoSel{\migoChanPref[i]}{\procQ[i]}{i \in I}}}{\buff{\vVal}{\channame}{\sigma,n}}]{\channame}{\ov{\channame}}}
\\[1.2em]
\hbrulename{\ruleSendReceiveSync}&
\inferrule
{\procP\barb{\channame}
  \quad \procP\barb{\ov{\channame}}}
{\hb[\pPar{\procP}{\buff{\emptyset}{\channame}{\sigma,n}}]{\ov{\channame}}{\channame}}
&
\hbrulename{\ruleSelectSendReceiveSync}&
\inferrule
{\procP\barb{\channame}
  \quad \exists j\in I:\migoChanPref[j]\barb{\ov{\channame}}}
{\hb[\pPar{\pPPar{\procP}{\migoSel{\migoChanPref[i]}{\procQ[i]}{i \in I}}}{\buff{\emptyset}{\channame}{\sigma,n}}]{\ov{\channame}}{\channame}}
\\[1.2em]
\hbrulename{\ruleCloseBeforeDefault}&
\inferrule
{\procP\barb{\labclose{\channame}}}
{\hb[\procP]{\labclose{\channame}}{\labclsnd{\channame}}}
&
\hbrulename{\ruleSendSelectReceiveSync}&
\inferrule
{\procP\barb{\ov{\channame}}
  \quad \exists j\in I:\migoChanPref[j]\barb{\channame}}
{\hb[\pPar{\pPPar{\procP}{\migoSel{\migoChanPref[i]}{\procQ[i]}{i \in I}}}{\buff{\emptyset}{\channame}{\sigma,n}}]{\ov{\channame}}{\channame}}
\end{array}
$}
\\[1mm]
We omit the symmetric rules for most rules ending in a parallel process
$\procP\parr \procQ$.
\end{center}
\vspace{-3mm}
\caption{Rest of Go's Happens-Before Relation}\label{fig:hbrel-chan}
\vspace{-4mm}
\end{figure}

The channel synchronisations for the happens-before relation are listed in 
\figurename~\ref{fig:hbrel-chan}. They consist of channel communication 
according to the official Go memory model: a send happens-before the 
corresponding receive, and if the channel buffer size is $n$, then the $k$-th 
receive happens-before the $k+n$-th send. We add on top of that that closing 
a channel happens-before any default value is received from it, and when a 
channel is closed, default values are emmited by the closed buffer before the 
corresponding receive reads it.

We extend our behavioural types with the following constructs, mirroring process constructs, 
and using the syntax and semantics from \cite{LNTY2017,LNTY:Icse18}:

\begin{center}\framebox{$\begin{array}{l}\!\!\!\begin{array}{l@{\ }c@{\ }l}
\tyS,\tyT & \coloneqq & \ldots \mid \kappa\tyCont\tyT \mid \labclose{\channame}\tyCont\tyT 
   \mid \tbr{}{\kappa_i \tyCont \tyT[i]}{i\in I} \mid
   \astynew{\channame}{n}{\tyT}
\mid \astyOpenbuf{\channame}{k}{n}
   \mid \tyclosedbuffer{\channame} \hfill
   \quad\quad \kappa \coloneqq \tysend{\channame} \mid \tyrecv{\channame} \mid \tySilent
\end{array}
\end{array}$}\end{center}
We show the typing rules for added channel constructs, 
which contain the new type primitives, in \figurename~\ref{fig:typing-chan}.
\begin{figure}[ht]
\vspace{-6mm}
\begin{center}\framebox{
\begin{tabular}{l}
$
\!\!\!\!\begin{array}{l}
\framebox{$\G \vdash \procP \ts \tyT$}
\hfill
\trulename{\ruleNewChan}
\inferrule
{\G\ctxComp\typing{\nVary}{\m{ch}(\sigma,n)} \vdash \procP \ts \tyT \quad \channame \not\in \domain{\G} \cup \fn{\tyT}}
{\G \vdash {\migoNewch{\nVary}{\sigma}{n}}\pCont\procP \ts \astynew{\channame}{n}\tyT\subs{\channame}{\nVary}}
\\[1.5em]
\trulename{\ruleChanSendReq}
\inferrule
{\G \vdash \typing{\channame}{\m{ch}(\sigma,n)} \quad \G \vdash \typing{\nExpr}{\sigma} \quad \G \vdash \procP \ts \tyT}
{\G \vdash \psend{\channame}{\nExpr}\pCont\procP \ts \tsend{\channame}{\sigma}\tyCont\tyT}
\hfill
\trulename{\ruleChanReceiveReq}
\inferrule
{\G \vdash \typing{\channame}{\m{ch}(\sigma,n)} \quad \G\ctxComp\typing{\nVar}{\sigma} \vdash \procP \ts \tyT}
{\G \vdash \precv{\channame}{\nVar}\pCont\procP \ts \trecv{\channame}{\sigma}\tyCont\tyT}
\\[1.5em]
\trulename{\ruleSelectConstruct}
\inferrule
{\G \vdash \migoChanPref[i]\pCont\procP[i] \ts \typrefix[i]\tyCont\tyT[i]}
{\G \vdash \migoSel{\migoChanPref[i]}{\procP[i]}{i\in I} \ts \tbr{}{\typrefix[i]\tyCont\tyT[i]}{i\in I}}
\hfill
\trulename{\ruleChanCloseReq}
\inferrule
{\G \vdash \procP \ts \tyT}
{\G \vdash \close{\channame}\pCont\procP \ts \End[\channame]\tyCont\tyT}
\\[1.1em]
\hline
\\[-0.7em]
\framebox{$\G \vdash_B \procP \ts \tyT$}
\hspace*{3.4em}
\trulename{\ruleTypingChan}
\inferrule
{\G \vdash \typing{\channame}{\m{ch}(\sigma,n)} \quad \lvert \vVal \rvert = k}
{\G \vdash_{\{\channame\}}\buff{\vVal}{\channame}{\sigma,n} \ts \astyOpenbuf{\channame}{k}{n} }
\hspace*{3.4em}
\trulename{\ruleTypingChanClosed}
\inferrule
{\G \vdash \typing{\channame}{\m{ch}(\sigma,n)}}
{\G \vdash_{\{\channame\}}\closedbuff{\vVal}{\channame}{\sigma} \ts \tyclosedbuffer{\channame}}
\end{array}\!\!\!\!
$
\end{tabular}
}
\caption{Typing Rules for Channels.~\label{fig:typing-chan}}
\vspace{-2mm}
\end{center}
\end{figure}

We also add the structure rules 
$\tyRes{\channame}{\astyOpenbuf{\channame}{k}{n}} \equiv \zero$
and 
$\tyRes{\channame}{\tyclosedbuffer{\channame}} \equiv \zero$; and  
the LTS semantics for the communication primitives (\figurename~\ref{fig:type-sem-chan}). They correspond to the ones found for the processes.
\begin{figure}[ht]
\vspace{-4mm}
\begin{center}
\framebox{
\begin{tabular}{l}
\hspace*{-5mm}
\begin{tabular}{l|l}
$
\!\!\!\begin{array}{l}
\framebox{Channel actions}
\\
\!\!\!\begin{array}{r@{\ }l}
    \ltsrulename{\ruleChanSendReq}& 
    {\tsend{\channame}{\sigma}\tyCont\tyT}\tra{\labtysnd{\channame}}{\tyT}
\\
    \ltsrulename{\ruleChanReceiveReq}& 
    {\trecv{\channame}{\sigma}\tyCont\tyT}\tra{\labtyrcv{\channame}}{\tyT}
\\
    \ltsrulename{\ruleChanCloseReq}&
     {}
     {\End[\channame]\tyCont\tyT}\tra{\labclose{\channame}}{\tyT}
\\[0.2em]
\hline
     \ltsrulename{\ruleChanClosedDefaultValue}&
     {}
     {\tyclosedbuffer{\channame}}
         \tra{\labclsnd{\channame}}{\tyclosedbuffer{\channame}}
\\
     \ltsrulename{\ruleChanCloseAck}&
     {}
     {\astyOpenbuf{\channame}{k}{n}}
         \tra{\labclosedual{\channame}}{\tyclosedbuffer{\channame}}
\\
     \ltsrulename{\ruleChanReceiveAck}&
     \inferrule
     {k \geq 1}
     {{\astyOpenbuf{\channame}{k}{n}}
         \tra{\astyBufRcv{\channame}}{\astyOpenbuf{\channame}{k-1}{n}}}
\\
     \ltsrulename{\ruleChanSendAck}&
     \inferrule
     {k < n}
     {{\astyOpenbuf{\channame}{k}{n}}
         \tra{\astyBufSend{\channame}}{\astyOpenbuf{\channame}{k+1}{n}}}
\hspace*{4.2em}~
\\[1.5em]
\end{array}\hspace*{-3em}
\\
\hline
\\[-0.7em]
\framebox{Runtime creation}
\\
    \ltsrulename{\ruleNewChan}\ 
    {\astynew{\channame}{n}\tyT \tra{\acttau}
    \tyRes{\channame}{\tyPPar{\tyT}{\astyOpenbuf{\channame}{0}{n}}}}
\end{array}\hspace*{-.5em}$
&
$\begin{array}{l}
\framebox{Synchronisation rules}
\\[0.5em]
\begin{array}{c}
     \ltsrulename{\ruleChanClose}
     \inferrule
     {\tyT \semty{\labclose{\channame}} \tyTi 
       \quad
       \tyS \semty{\labclosedual{\channame}} \tySi
     }
     {\tyPar{\tyT}{\tyS}\tra{\acttau}
     \tyPar{\tyTi}{\tySi}}
\\[1.1em]
    \ltsrulename{\ruleChanSyncCom}
    \inferrule
    {\tyT \semty{\labtysnd{\channame}} \tyTi 
      \quad
      \tyS \semty{\labtyrcv{\channame}} \tySi
    }
    {\tyPar{\tyPPar{\tyT}{\tyS}}{\astyOpenbuf{\channame}{0}{0}}\tra{\labsync{\channame}}
    \tyPar{\tyPPar{\tyTi}{\tySi}}{\astyOpenbuf{\channame}{0}{0}}}
\\[1.1em]
     \ltsrulename{\ruleChanAsyncSend}
     \inferrule
     {\tyT \semty{\labtysnd{\channame}} \tyTi 
      \quad
      \tyS \semty{\astyBufSend{\channame}} \tySi
     }
     {\tyPar{\tyT}{\tyS}\tra{\labsync{\channame}}
      \tyPar{\tyTi}{\tySi}}
\\[1.1em]
     \ltsrulename{\ruleChanAsyncReceive}
     \inferrule
     {\tyT \semty{\labtyrcv{\channame}} \tyTi 
      \quad
      \tyS \semty{\actname} \tySi
      \quad
      \actname\in\{\astyBufRcv{\channame},\labclsnd{\channame}\}
     }
     {\tyPar{\tyT}{\tyS}\tra{\labsync{\channame}}
      \tyPar{\tyTi}{\tySi}}
\\[1.1em]
    \ltsrulename{\ruleSelectConstruct}
    \inferrule
    {\tyPar{\typrefix_j\tyCont\tyT[j]}{\tyT} \semty{\actname}
      \tyTi \quad \actname\in\{\acttau,\labsync{\channame}\}}
    {\tyPar{\tbr{}{\typrefix_i \tyCont \tyT[i]}{i\in I}}{\tyT} 
      \tra{\actname} \tyTi}
\\[1.5em]
\end{array}\hspace*{-2em}
\end{array}$
\end{tabular}
\end{tabular}
}
\caption{Remaining LTS Semantics of Types.}\label{fig:type-sem-chan}
\end{center}
\vspace{-9mm}
\end{figure}

All results in \S~\ref{sec:types} hold as-is with the new definitions. We only 
add the new barbs, like for processes (identical definition), and the
following 
type properties:

\begin{definition}[Channel Safety]~\label{def:type-safe-chan}\rm
Type $\tyEqn$ 
is {\em channel safe} if for all $\tyT$ such that 
$\tyEqn\tra{}^*\tyRes{\vGeneric}{\tyT}$, 
if 
$\tyT\barb{\cclbarb{\channame}}$ 
then 
$\neg(\tyT\wbarb{\clbarb{\channame}})$ and 
$\neg (\tyT\wbarb{\ov{\channame}})$.
\end{definition}

\begin{definition}[Channel Liveness]~\label{def:type-live-chan}\rm
Type $\tyEqn$ is \emph{channel live} if for all $\tyT$ such that
$\tyEqn \tra{}^* \tyRes{\vGeneric}{\tyT}$,
(a) if ${\tyT}\barb{\channame}$ or 
${\tyT}\barb{\ov{\channame}}$ 
then 
$\tyT\wbarb{\labsync{\channame}}$; and 
(b) if ${\tyT}\barb{\actnameVec}$ 
then $\tyT\wbarb{\labsync{\channame[i]}}$ for some $\channame[i]\in
\fn{\actnameVec}$.
\end{definition}
\noindent They correspond to
the ones added for processes, and are integrated in 
other theorems of \S~\ref{sec:types}.

\subsection{Modal $\mu$-Calculus Properties for Channels}
\label{subsect:mod-mu-chan}

\begin{wrapfigure}{l}{0.55\linewidth}
\vspace{-3mm}
\scalebox{0.84}{
\framebox{
\begin{minipage}{\linewidth}
\begin{enumerate}
\item\label{item:mu-chan-safe}
Channel safety:\\
$\muCol{\psi_{s}=}\muConj{\channame}{\muSome{\barb{\cclbarb{\channame}}}{\muTrue} \muImplies \muCol{\Psi\left(\muAll{\barb{\ov{\channame}}+\barb{\clbarb{\channame}}}{\muFalse}\right)}}$
\\[-0.7em]
\item\label{item:mu-chan-live1}
Channel liveness (a):\\
$\muCol{\psi_{l_a}=}\muConj{\channame}{\muSome{\barb{\channame}+\barb{\ov{\channame}}}{\muTrue} \muImplies \muCol{\Phi\left(\muSome{\labsync{\channame}}{\muTrue}\right)}}$
\\[-0.7em]
\item\label{item:mu-chan-live2}
Channel liveness (b):\\
$\muCol{\psi_{l_b}=}\muConj{\tilde{\channame}}{\muSome{\barb{\tilde{\channame}}}{\muTrue} \muImplies \muCol{\Phi\left(\muSome{\sum_{\channame[i]\in\tilde{\channame}}\labsync{\channame[i]}}{\muTrue}\right)}}$
\end{enumerate}
\end{minipage}
}}
\caption{Modal $\mu$-calculus properties for channels}\label{table:mu-prop-chan}
\vspace*{-4mm}
\end{wrapfigure}

With extending to the channel primitives, 
all definitions in \S~\ref{sect:form-hb-modmu} 
still hold with added properties in the modal $\mu$-calculus for channel liveness and safety.
These are defined in \figurename~\ref{table:mu-prop-chan}.

The model-checking result is also extended as the following
theorem to capture the situation where shared memory and
message passing co-exist. 

\begin{theorem}[Model Checking of \gol processes]
\label{the:mod-mu-proc-check-full}
Suppose $\G \vdash \procProg \ts \tyEqn$.
\begin{enumerate}
\item If $\sat[\ltsT_{\tyEqn}]{\tyEqn}{\muCol{\Psi(\muPred)}}$ for 
  $\muPred\in\{\muCol{\psi_{s_a}},\muCol{\psi_{s_b}},\muCol{\psi_{s}},\muCol{\psi_{d}}\}$, then 
  $\sat[\ltsT_{\procProg}]{\procProg}{\muCol{\Psi(\muPred)}}$.
\item If $\sat[\ltsT_{\tyEqn}]{\tyEqn}{\muCol{\Psi(\muPred)}}$ for 
  $\muPred\in\{\muCol{\psi_{l}},\muCol{\psi_{l_a}},\muCol{\psi_{l_b}}\}$ and
either (a)
  $\procProg\in\convm$ or (b) $\procProg\not\in\ic$ or 
(c) $\procProg\in\ac$, then 
$\sat[\ltsT_{\procProg}]{\procProg}{\muCol{\Psi(\muPred)}}$. 
\end{enumerate}
\end{theorem}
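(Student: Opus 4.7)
The plan is to replay the strategy of Theorem~\ref{the:mod-mu-proc-check}, but over the extended calculus of Section~\ref{sect:chan-extension}. Concretely, the extension of the framework to channels adds three new type-level formulae (channel safety $\muCol{\psi_s}$ and the two channel liveness formulae $\muCol{\psi_{l_a}}, \muCol{\psi_{l_b}}$) to the set of properties already handled, and the proof obligation is to show that the type-to-process transfer results from Section~\ref{sec:types} still apply once channels are in the picture.

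First, I would record explicitly that the Subject Congruence and Subject Reduction results (Proposition~\ref{prop:cong} and Theorem~\ref{the:red}), together with Progress (Theorem~\ref{the:prog}) and the barb correspondence used in its proof, continue to hold for the extended calculus. The only new cases to check are the channel prefixes $\kappa$, the select construct $\tbr{}{\kappa_i \tyCont \tyT[i]}{i \in I}$, the close action, channel creation $\astynew{\channame}{n}{\tyT}$, and the two buffer shapes $\astyOpenbuf{\channame}{k}{n}$ and $\tyclosedbuffer{\channame}$. Each mirrors an existing rule of \figurename~\ref{fig:goldy-typing} and~\ref{fig:types-sem}, so the inductive steps are straightforward: the typing rules of \figurename~\ref{fig:typing-chan} assign the obvious labels, and the LTS rules of \figurename~\ref{fig:type-sem-chan} are in bijection with those of \figurename~\ref{fig:goldy-sem-chan} under the abstraction that erases values and payload types.

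Second, I would extend the three process-class results that drive liveness transfer. For Proposition~\ref{prop:may-conv-live}, the additional cases are those in which a process barb $\ov{\channame}$, $\channame$, $\labclose{\channame}$, or a bag of barbs witnessed by a select construct, is present: in each case the corresponding type barb is available by Subject Reduction, so type-level liveness of $\muCol{\psi_{l_a}}$ or $\muCol{\psi_{l_b}}$ yields a $\labsync{\channame}$ reduction on the type side, which is mirrored on the process side by Progress. For Proposition~\ref{prop:finite-branch-live} the inductive argument is unchanged, since select constructs do not introduce new infinite-conditional marks. For Theorem~\ref{the:alt-cond-live}, select contributes another source of internal non-determinism, but since \rulename{\ruleSelectConstruct} is already the process counterpart of \ltsrulename{\ruleSelectConstruct}, the $\MAPAST{(-)}$ construction keeps its meaning and the alternating-conditionals condition still closes the proof. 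Similarly Theorem~\ref{the:proc-safe-drf} extends to channel safety $\muCol{\psi_s}$ by the same barb-based argument used for $\muCol{\psi_{s_a}}$ and $\muCol{\psi_{s_b}}$.

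With these extended lemmas in hand, the theorem itself follows by case analysis on $\muPred$. For part (1), pick $\muPred \in \{\muCol{\psi_{s_a}}, \muCol{\psi_{s_b}}, \muCol{\psi_s}, \muCol{\psi_d}\}$; by the extended Theorem~\ref{the:proc-safe-drf} and Definitions~\ref{def:type-safe}, \ref{def:hb-drace} and \ref{def:type-safe-chan}, satisfaction of $\muCol{\Psi(\muPred)}$ on $\tyEqn$ transfers to $\procProg$. For part (2), pick $\muPred \in \{\muCol{\psi_l}, \muCol{\psi_{l_a}}, \muCol{\psi_{l_b}}\}$; under any of the three program conditions invoke, respectively, the extended Proposition~\ref{prop:may-conv-live}, Proposition~\ref{prop:finite-branch-live}, or Theorem~\ref{the:alt-cond-live}. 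The main obstacle I anticipate is the select case in the alternating-conditionals argument: one has to verify that the non-deterministic type semantics of $\tbr{}{\cdot}{}$ does not produce type-level liveness witnesses that the process cannot realise, which requires carefully matching the $\MAPAST{(-)}$ mapping to select guards rather than to conditionals and arguing that the set of fired guards $\channame_i \in \fn{\actnameVec}$ in Definition~\ref{def:chanlive} is preserved under this mapping.
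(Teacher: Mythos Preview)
Your proposal is correct and follows the same approach as the paper: the paper gives no separate proof for Theorem~\ref{the:mod-mu-proc-check-full}, relying instead on the blanket claim in \S~\ref{subsect:chan-live-safe} that ``all results in \S~\ref{sec:types} hold as-is with the new definitions,'' so the proof is implicitly the same invocation of Theorem~\ref{the:proc-safe-drf}, Propositions~\ref{prop:may-conv-live} and~\ref{prop:finite-branch-live}, and Theorem~\ref{the:alt-cond-live} as in Theorem~\ref{the:mod-mu-proc-check}. Your write-up is simply a more explicit unpacking of what that blanket claim entails for the channel constructs.

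One minor remark on your anticipated obstacle: the select construct does not actually introduce the determinism gap you worry about. Unlike conditionals, where the process reduces deterministically but the type reduces non-deterministically via \ltsrulename{\ruleITEConstruct}, the process rule \rulename{\ruleSelectConstruct} and the type rule \ltsrulename{\ruleSelectConstruct} are structurally identical---both require a guard to synchronise before the branch is taken. So select is already faithfully abstracted and needs no special treatment under $\MAPAST{(-)}$; the mapping continues to target only infinite conditionals.
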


This extension to our framework allows us not only to 
integrate the previous framework by 
\cite{LNTY2017,LNTY:Icse18}, but also show to some extent the 
modularity of our memory-based approach. 
With channels, 
this extension of \gol is implementing a significant range of the 
concurrency features of Go, allowing for a range of programs to be model-checked 
for data races, liveness issues and other safety issues in the use of locks and channels.

\subsection{Types and process (program) liveness}
\label{subsec:gap}
There are several categories of processes for which
the equivalence between types and process (program) liveness is not ensured:
(3) programs that have an infinite conditional that is not an alternating 
    conditional, {\em if they do not always have a termination path available}. 
    They can be checked by the model checker if they are not in (3),
    however the result may not coincide with the process liveness; 
(2) programs that neither have an infinite conditional, nor always have a 
    potential path for termination (e.g. a program that recurses indefinitely 
    without ever having an ending branch available through a \texttt{select} construct, without 
    the need of a conditional in the recursing selection); and 
(3) programs that are not finite control -- 
    i.e. programs that spawn an unbounded amount of new processes -- 
    because the model-checker 
    will not be able to generate a linear representation of them (see \S~\ref{sec:implementation}).
    
Note that for (1) and (2), the tool
returns ``live'' if the types are live, 
though it may be the case that the programs are not live.

\section{Implementation and Evaluation}\label{sec:implementation}
\begin{wrapfigure}{l}{0.55\linewidth}
\vspace*{-6mm}
\begin{tikzpicture}[
    mtool/.style={draw,text width=5em,text centered},
    tool/.style={draw,text width=7em,text centered},
    arrowlabel/.style={inner sep=0,fill=pagecolor},
    comment/.style={text width=\linewidth-12em,anchor=north west,
    fill=gray!10 
    }
  ]
  \scriptsize\sffamily
  \node (godel) [tool] {\godeltwo};
  \node (mcrl) [above left=0.3 and -0.9 of godel,mtool] {mCRL2};
  \node (kittel) [right=0.1 of mcrl,mtool] {KiTTEL};
  \node (migo) [rectangle split,rectangle split parts=2,below=1.9 of godel,tool]
        {migoinfer+ \nodepart{two}\scriptsize{\sf go/ssa} package};
  \node (source) [below=0.7 of migo]
        {Go source code};
  \path [draw,->,>=latex] (source) -- node [arrowlabel] {Load \texttt{main()}} (migo);
  \path [draw,->,>=latex] (migo)  -- node [arrowlabel] {Behavioural types} (godel);
  \path [draw,->,>=latex] (godel) -- (mcrl);
  \path [draw,->,>=latex] (godel) -- (kittel);
  \node [rectangle split,rectangle split parts=2,right=0.3 of godel.north east,comment]
        {\textbf{\godeltwo}\hfill\textit{written in \textbf{Haskell}}\nodepart{two}
         This tool uses either KiTTEL to check for termination of the
         input behavioural types, or mCRL2 to check for properties
         like liveness, safety and data-race freedom of the types.};
  \node [rectangle split,rectangle split parts=2,right=0.3 of migo.north east,comment]
        {\textbf{migoinfer+} \hfill\textit{written in \textbf{Go}}\nodepart{two}
         This tool loads source code, type-checks and builds SSA IR using the
         \texttt{go/ssa} package, then extracts communication, mutexes and
         shared variables from the SSA IR as behavioural types.};
\end{tikzpicture}
\vspace{-6.5mm}
  \caption{Workflow of the verification toolchain.\label{fig:workflow}}
\vspace{-3mm}
\end{wrapfigure}
{\textbf{The tool chain.\ }
Our implementation tool 
(shown in \figurename~\ref{fig:workflow}) 
consists of a type inference tool 
and
a type verifier. 
The type inference tool (\textsf{migoinfer+})~\cite{web:migoinfer} 
extracts behavioural types, including 
eight new primitives related to shared memory:
creating a new lock (called mutex in the tool, in reference to the name of the 
mutual exclusion lock implementation in Go)
or shared address, exclusive write-locking or unlocking of a
lock or a read-write lock, read-locking/unlocking a read-write lock, and
reading or writing a shared variable. 
This new inference tool supports both
channel-based communication primitives from \cite{LNTY:Icse18} and shared memory
primitives. 

\textsf{migoinfer+} currently supports a subset of the Go language syntax,
extracting only variables and mutexes created explicitly inside the body of a function, and
does not support embedding or mutexes in $\mathtt{struct}$. These usage 
patterns of mutexes can be transformed to the flat representation we support, 
allowing us to analyse the examples in our benchmark~\cite{web:benchmark}. 
Note that it is advised to avoid the non-declared sharing of
variables, channels and mutexes to a 
nameless child goroutine, as it may not extract 
the parameter passing properly, and 
this is a good practice in Go to specify shared parameters. 
Programs that spawn an unbounded number of goroutines such as our 
prime-sieve example 
can be extracted by \textsf{migoinfer+} if 
they respect the above limitations.
Lastly, the use of some (non-default) packages, such as the \texttt{net} package, is known to break 
\textsf{migoinfer+} under certain conditions, 
making it not extract the types correctly.

The type verifier (\textsf{\godeltwo})~\cite{web:godel2} analyses the new extracted
primitives, implements the theory presented in this paper, and uses
the mCRL2~\cite{web:mcrl2-website,mcrl2-book} model checker as a backend
to check safety and data race properties. 
Regarding the liveness properties, 
as discussed 
after Theorem~\ref{the:prog} and  
in \cite{LNTY2017,LNTY:Icse18}, 
liveness of types does not imply liveness of processes,
due to conditionals behaving differently in the types and the processes. 
In Theorem \ref{the:mod-mu-proc-check}, we 
identified the three classes of Go programs where 
both liveness properties coincide. 
One such class  
is a set of terminating processes, as defined in 
Definition~\ref{def:term-prog}, which is a strict subset of 
may converging processes (Proposition~\ref{prop:term-is-may-conv}).   
To make sure liveness coincides on types and processes, 
we combine the termination checker 
KITTeL~\cite{web:kittel-github} to our tool (see also \cite[\S~5]{LNTY:Icse18}).
This tool can check processes that are not 
terminating under certain conditions, 
namely they should not spawn an unbounded number of threads. 
However, such programs 
may, in rare cases, lead to false positives or negatives regarding liveness 
(and possibly safety), because of the approximations the model checker has to 
make when running against models with cycles.

\noindent{\textbf{Evaluations.\ } We evaluate our tool for reference on an 8-core Intel i7-7700K machine
with 16 GB memory, in a 64-bit Linux environment running go 1.12.2.
\tablename~\ref{tbl:benchmark} shows the results for a range of programs
that mix shared memory with either channels or mutexes as locking mechanism. 
The sources for those examples can be found in the benchmark repository~\cite{web:benchmark}.
Programs \texttt{no-race} and \texttt{simple-race} are programs made to
test the behaviour of mutexes and check that liveness errors are properly reported.
The channel version of our running example, from \figurename~\ref{fig:chan:running-ex} is named
\texttt{channel-as-lock}, and \texttt{channel-as-lock-bad} is a variation of the
\texttt{-fixed} version but with channel sends and receive switched, hence
the program deadlocks on the first attempt to lock of each thread as there is nothing to
receive. 

\begin{wraptable}{l}{0.65\linewidth}
\vspace*{-3mm}
\caption{Go Programs Verified by the Toolchain.}\label{tbl:benchmark}
\noindent
\renewcommand{\arraystretch}{1}
{\hspace*{-4mm}
\footnotesize
  \renewcommand{\tabcolsep}{0.15cm}
\begin{tabular}{l|lr|cccr}
  \toprule
  Programs &
  LoC &
  Sum &
  Safe &
  Live &
  DRF &
  \hspace{-2mm}time (ms)\\
  \midrule
  \texttt{no-race}
    &15&9&\yes&\yes&\yes&691.45\\
  \texttt{no-race-mutex}
    &24&33&\yes&\yes&\yes&785.57\\
  \texttt{no-race-mut-bad}
    &23&20&\yes&\noo&\yes&721.77\\
  \texttt{simple-race}
    &13&8&\yes&\yes&\noo&701.93\\
  \texttt{simple-race-fix}
    &19&17&\yes&\yes&\yes&731.73\\
  \texttt{deposit-race}\textsuperscript{1}
    &18&14&\yes&\yes&\noo&697.90\\
  \texttt{deposit-fix}\textsuperscript{1}
    &24&27&\yes&\yes&\yes&727.43\\
  \texttt{ch-as-lock-race}\textsuperscript{2}
    &19&20&\yes&\yes&\noo&753.99\\
  \texttt{ch-as-lock-fix}\textsuperscript{2}
    &19&20&\yes&\yes&\yes&745.64\\
  \texttt{ch-as-lock-bad}
    &19&20&\yes&\noo&\yes&749.97\\
  \texttt{prod-cons-race}
    &38&156&\yes&\yes&\noo&1,903.52\\
  \texttt{prod-cons-fix}
    &40&188&\yes&\yes&\yes&1,971.26\\
  \texttt{dine5-unsafe}
    &35&106&\noo&\yes&\yes&6,996.27\\
  \texttt{dine5-deadlock}
    &35&106&\yes&\noo&\yes&12,278.33\\
  \texttt{dine5-fix}
    &35&106&\yes&\yes&\yes&8,998.04\\
  \texttt{dine5-chan-race}
    &59&2672&\yes&\yes&\noo&$\sim$ 185mn\\
  \texttt{dine5-chan-fix}
    &59&2688&\yes&\yes&\yes&$\sim$ 645mn\\
\bottomrule
\end{tabular}\\[1mm]
}
  \textsuperscript{1}{\cite{go-book}},
  \textsuperscript{2}{\figurename~\ref{fig:chan:running-ex}},
LoC: Lines of Code,
DRF: Data Race Free,\\
Sum: Summands,
  \yes: Formula is true,
  \noo: Formula is false
\vspace{-4mm}
\end{wraptable}

The \texttt{deposit} implementation is taken from~\cite{go-book}
(the example to present data races and locking mechanisms), and
\texttt{prod-cons} is a shared memory implementation of the
classic producer-consumer algorithm, where two producers race against each other
and one consumer takes whichever product is available first. In this
example, all three
threads share a single memory heap, supposed to be protected by a mutex.
Finally, \texttt{dine5} is an implementation of the Dining Philosophers problem as 
explained in \S~\ref{sect:safe-live}, and 
\texttt{dine5-chan} is a channel variant adapted slightly 
to allow for a potential shared-memory 
data race%
\full{(code in Appendix~\ref{app:example}).}{\!\!.}

We note that the Prime Sieve algorithm~\cite{LNTY2017,NY2016} is not analysed by our tool, as 
it continually spawns new threads, making the state space too big for the {mCRL2} model-checker.

Future work for applying this approach to 
real-world Go programs are:  working around the explosion seen with \texttt{select}+channels 
in \texttt{dine5-chan}, for which using a different model for \texttt{select} constructs and channel 
actions than the one in our implementation might be sufficient; 
working on the  
implementation for a wider range of extractions for channels, shared memory 
and mutexes embedded in $\mathtt{struct}$s, or to implement a parser that flattens those structs upstream 
of \textsf{migoinfer+}; and
working on analysis of programs that dynamically spawn new goroutines 
-- this would require non-trivial approximations to be leveraged. 
Note that it should represent only a small fraction of programs, as most daily-use 
protocols should be implementable without the need for such
unbounded growth in memory usage.

All examples in Table~\ref{tbl:benchmark} are analysed by our tool, and the time
given as an indication scales exponentially with the number of summands 
(and possibly action labels) and their ordering, in
the linear process specification that represents the types in the model checker.
Those directly depend from the source code of the analysed program.

\section{Conclusion and Related Work}\label{sec:related}

The Go language provides a unique programming environment where both explicit
communication and shared memory concurrency primitives co-exist.
This work introduces \gol 
as an abstraction layer for Go code, as well as behavioural types to propose a 
static verification framework for detecting concurrency bugs in Go. 
These include deadlocks and safety for both mutual 
exclusion locks and channel communication, as well as 
data race detection for shared memory primitives. 

Shared memory locks and channels cover by themselves a substantial 
amount of Go's concurrency features. The former is 
a low-level, standard library provision
and the latter is a high-level, built-in 
language feature. 
Go only features these two basic building blocks because one can use them to implement
most higher levels of concurrency abstraction, for example actors models.

The works~\cite{LNTY2017,LNTY:Icse18} built behavioural types 
for verification of 
concurrency bugs for channel-based message passing. 
We integrate with their asynchronous calculus 
(a.k.a.~\asyncfgo) for our channel-related extension in 
\S~\ref{sect:chan-extension}. 
These works, however, were lacking more 
shared memory concurrency with locks and shared pointers, and did not 
tackle data races for shared pointers, which we do. It does not 
study happens before relations either (for channels). 
It furthermore was lacking 
complete proofs on their equivalence theorems for liveness, which is also 
addressed in this paper. We also proved  
\gol satisfies  
the properties of the types characterised by the modal $\mu$-calculus
(Theorems \ref{the:mod-mu-proc-check},\ref{the:mod-mu-proc-check-full}).
The paper \cite{LNTY:Icse18} has informally 
described them, but these have never been formalised nor proved. 

The work~\cite{Stadmuller16} defines forkable behaviours (ie. regular 
expressions with a fork construct) to capture goroutine spawning in synchronous 
Go programs. They develop a tool based on this model to analyse directly Go
programs. Their approach is sound, but suffers from several
limitations, which were overcome by ~\cite{LNTY2017,LNTY:Icse18}; 
their tool does not treat shared memory concurrency primitives and locks. 

The work~\cite{pretendsync} observed that asynchronous distributed systems can
be verified by only modelling \textit{synchronisations} in the core protocol,
and introduces a language \textsc{\sf IceT} similar to \gol for specifying
synchronisation in message-passing programs. Their focus was to verify
functional correctness of the input protocol, and requires 
input programs to be synchronisable (i.e.~no deadlocks nor spurious sends
in the input programs). Their approach allows 
for checking correctness of an implementation, given a reasonable amount 
of annotations.  
It is orthogonal to our work in which we only need to check for 
runtime sanity. Both approaches independently benefit the user, 
and should be run individually on testing code in order to check both for 
concurrency behavioural bugs and for implementation bugs.

Recent works 
\cite{Tu2019,go-survey} provide empirical studies of Go programs,
which show that almost half of concurrency bugs in Go are non-blocking bugs,
mostly shared memory problems, and the remaining blocking bugs are mostly 
related to channel and lock misuse. 
That gives an incentive to make tools and implementations
built on the concurrent behavioural theory,  
for easy detection of such bugs. Our work is part of that effort.

A large body of race detection tools targeting other languages such 
as Java are available. ThreadSanitizer 
(TSan)~\cite{tsan,web:tsan,SPIV11:tsan} 
which is included in LLVM/Clang is one of the most widely 
deployed dynamic race detectors. The runtime 
race detector of Go~\cite{web:go-race-detector} 
uses TSan's runtime library.  

The work \cite{NanoGo:2018} proposes a subset of the Go language
akin to~\gol, along with a modular
approach to statically analyse processes. Their approach combines
lattice-valued regular expressions and a shuffle operator allowing
for separate analysis of single threads, and they prove their
theory to be sound. They have a prototype implementation
in OCaml 
to check deadlocks in synchronous message-passing programs. 
The work \cite{CHJNY2019} uses a protocol description 
language, Scribble \cite{scribble}, which is a practical incarnation of 
multiparty session types \cite{HYC08} to generate Go APIs,   
ensuring deadlock freedom and liveness of communications 
by construction. 
Neither \cite{NanoGo:2018} nor \cite{CHJNY2019}
treat either communication error
or data race detection, both handled in this paper, nor do they treat 
shared variables, which our approach extends upon.

The main difference in code writing between Go and \gol is the handling of 
continuations for select and if-then-else constructs, where Go allows for 
standard continuation while \gol restrains the user to use tail calls. 
This is handled by our extraction tool, as it extracts the Go 
code to \gol by building an SSA representation before extracting relevant 
primitives from it, see \figurename~\ref{fig:workflow} in \S~\ref{sec:implementation}.

The idea to use the LTS of behavioural types for 
programming analysis 
dates back to \cite{Nielson1994HCP} 
for Concurrent ML, and since then, it has been applied to many works 
\cite{ABBCCDGGGHJMMMNNPVY2017}. Some tackle mutual exclusion locks, but 
systematically lack support for read-write mutual exclusion locks, including 
works \cite{Igarashi:2001:GTS,AFF:Java-Race06,HBK19}.
The work~\cite{Kobayashi:2008} aims to guarantee
liveness with termination of a typed $\pi$-calculus.  
We study wider classes in the theory, aiming termination to use the existing 
tool ({KITTeL}) in order to integrate with our tool-chain
to scale -- thus the main 
aim and the target (real {Go} programs in our case) differ from \cite{Kobayashi:2008}. 

Type-level model-checking for message-passing 
programming was first addressed in \cite{DBLP:conf/popl/ChakiRR02}. 
Recent applications using mCRL2 include 
verifications of multiparty session typed $\pi$-calculus \cite{SY2019} and 
the Dotty programming language (the future Scala 3)  \cite{SYB2019}. 

Our future works include 
studying the soundness and completeness of the happens-before relation 
provided by the Go memory model, ie. studying if the definition of data race 
given by it covers all data races that can happen in Go, and whether it does 
not provide false positives; 
speeding-up the analysis 
using more mCRL2 options and the extension to 
an incremental analysis based on happens-before relations, as  
taken in other languages, e.g.~\cite{d4,zh16:echo}; as well as possibly 
counter-example extraction for code failing verification, to provide direct 
access to the detected bugs to developers. There is also the 
possibility to work on handling dynamic process creation, widening the 
analysis scope of our current tool and model.



\bibliography{main}

\begin{thebibliography}{10}

\bibitem{web:benchmark}
{Godel 2 Benchmarks}.
\newblock \url{https://github.com/JujuYuki/godel2-benchmark}.

\bibitem{web:godel2}
{Godel 2}.
\newblock \url{https://github.com/JujuYuki/godel2}.

\bibitem{web:migoinfer}
{migoinfer+}.
\newblock \url{https://github.com/JujuYuki/gospal}.

\bibitem{AFF:Java-Race06}
Martin Abadi, Cormac Flanagan, and Stephen~N. Freund.
\newblock Types for safe locking: Static race detection for java.
\newblock {\em ACM Trans. Program. Lang. Syst.}, 28(2):207–255, March 2006.
\newblock URL: \url{https://doi.org/10.1145/1119479.1119480}, \href
  {http://dx.doi.org/10.1145/1119479.1119480}
  {\path{doi:10.1145/1119479.1119480}}.

\bibitem{ABBCCDGGGHJMMMNNPVY2017}
Davide Ancona, Viviana Bono, Mario Bravetti, Joana Campos, Giuseppe Castagna,
  Pierre-Malo Deni\'elou, Simon~J. Gay, Nils Gesbert, Elena Giachino, Raymond
  Hu, Einar~Broch Johnsen, Francisco Martins, Viviana Mascardi, Fabrizio
  Montesi, Rumyana Neykova, Nicholas Ng, Luca Padovani, Vasco~T. Vasconcelos,
  and Nobuko Yoshida.
\newblock {Behavioral Types in Programming Languages}.
\newblock {\em Foundations and Trends in Programming Languages}, 3(2-3), 2017.
\newblock \href {http://dx.doi.org/10.1561/2500000031}
  {\path{doi:10.1561/2500000031}}.

\bibitem{CHJNY2019}
David Castro, Raymond Hu, Sung-Shik Jongmans, Nicholas Ng, and Nobuko Yoshida.
\newblock {Distributed Programming Using Role Parametric Session Types in Go}.
\newblock In {\em 46th ACM SIGPLAN Symposium on Principles of Programming
  Languages}, pages 1--30. ACM, 2019.
\newblock \href {http://dx.doi.org/10.1145/3290342}
  {\path{doi:10.1145/3290342}}.

\bibitem{DBLP:conf/popl/ChakiRR02}
Sagar Chaki, Sriram~K. Rajamani, and Jakob Rehof.
\newblock Types as models: model checking message-passing programs.
\newblock In {\em POPL'02}, pages 45--57, 2002.

\bibitem{Cranen2013}
Sjoerd Cranen, Jan~Friso Groote, Jeroen J.~A. Keiren, Frank P.~M. Stappers,
  Erik~P. de~Vink, Wieger Wesselink, and Tim A.~C. Willemse.
\newblock {\em An Overview of the mCRL2 Toolset and Its Recent Advances}, pages
  199--213.
\newblock Springer Berlin Heidelberg, Berlin, Heidelberg, 2013.
\newblock \href {http://dx.doi.org/10.1007/978-3-642-36742-7\_15}
  {\path{doi:10.1007/978-3-642-36742-7\_15}}.

\bibitem{go-survey}
Nicolas Dilley and Julien Lange.
\newblock {An empirical study of messaging passing concurrency in {Go}
  projects}.
\newblock In {\em SANER}. IEEE, 2019.

\bibitem{go-book}
Alan~A.A. Donovan and Brian~W. Kernighan.
\newblock {\em The Go Programming Language}.
\newblock Addison-Wesley Professional, 1st edition, 2015.

\bibitem{web:kittel-github}
Stephan Falke and Marc Brockschmidt.
\newblock {KITTeL/KoAT}.
\newblock \url{https://github.com/s-falke/kittel-koat}, 2018.

\bibitem{web:gosurvey18}
Steve Francia.
\newblock Nine years of {Go}.
\newblock \url{https://blog.golang.org/9years}, 2018.

\bibitem{web:octoverse}
GitHub.
\newblock {The fastest growing languages}, 2018.
\newblock \url{http://octoverse.github.com/}.

\bibitem{web:go-race-detector}
Go.
\newblock {Data Race Detector}.
\newblock \url{https://golang.org/doc/articles/race\_detector.html}, 2013.

\bibitem{web:go-mem-model}
Go.
\newblock {The Go Memory Model}.
\newblock \url{https://golang.org/ref/mem}, 2014.

\bibitem{web:golangmutex}
Golang.
\newblock {mutex.go}, 2019.
\newblock \url{https://golang.org/src/sync/mutex.go}.

\bibitem{web:golangrwmutex}
Golang.
\newblock {rwmutex.go}, 2019.
\newblock \url{https://golang.org/src/sync/rwmutex.go}.

\bibitem{web:golang}
Golang.
\newblock {The Go Programming Language}, 2019.
\newblock \url{https://golang.org}.

\bibitem{mcrl2-book}
Jan~Friso Groote and Mohammad~Reza Mousavi.
\newblock {\em Modeling and Analysis of Communicating Systems}.
\newblock The MIT Press, 2014.

\bibitem{HBK19}
Jonas~Kastberg Hinrichsen, Jesper Bengtson, and Robbert Krebbers.
\newblock Actris: Session-type based reasoning in separation logic.
\newblock {\em Proc. ACM Program. Lang.}, 4(POPL), December 2019.
\newblock URL: \url{https://doi.org/10.1145/3371074}, \href
  {http://dx.doi.org/10.1145/3371074} {\path{doi:10.1145/3371074}}.

\bibitem{book:csp}
C.A.R. Hoare.
\newblock {\em {Communicating Sequential Processes}}.
\newblock Prentice Hall, 1985.

\bibitem{HYC08}
Kohei Honda, Nobuko Yoshida, and Marco Carbone.
\newblock {Multiparty asynchronous session types}.
\newblock In {\em POPL'08}, pages 273--284. ACM, 2008.

\bibitem{Igarashi:2001:GTS}
Atsushi Igarashi and Naoki Kobayashi.
\newblock A generic type system for the pi-calculus.
\newblock In {\em Proceedings of the 28th ACM SIGPLAN-SIGACT Symposium on
  Principles of Programming Languages}, POPL '01, pages 128--141, New York, NY,
  USA, 2001. ACM.
\newblock URL: \url{http://doi.acm.org/10.1145/360204.360215}, \href
  {http://dx.doi.org/10.1145/360204.360215} {\path{doi:10.1145/360204.360215}}.

\bibitem{pretendsync}
{Klaus v. Gleissenthall and Rami G\"{o}khan K{\i}c{\i} and Alexander Bakst and
  Deian Stefan and Ranjit Jhala}.
\newblock Pretend synchrony: Synchronous verification of asynchronous
  distributed programs.
\newblock {\em Proc. ACM Program. Lang.}, 3(POPL):59:1--59:30, January 2019.
\newblock URL: \url{http://doi.acm.org/10.1145/3290372}, \href
  {http://dx.doi.org/10.1145/3290372} {\path{doi:10.1145/3290372}}.

\bibitem{Kobayashi:2008}
Naoki Kobayashi and Davide Sangiorgi.
\newblock A hybrid type system for lock-freedom of mobile processes.
\newblock {\em ACM Trans. Program. Lang. Syst.}, 32(5), May 2008.
\newblock URL: \url{https://doi.org/10.1145/1745312.1745313}, \href
  {http://dx.doi.org/10.1145/1745312.1745313}
  {\path{doi:10.1145/1745312.1745313}}.

\bibitem{LNTY2017}
Julien Lange, Nicholas Ng, Bernardo Toninho, and Nobuko Yoshida.
\newblock Fencing off {Go}: Liveness and safety for channel-based programming.
\newblock In {\em Proceedings of the 44th ACM SIGPLAN Symposium on Principles
  of Programming Languages}, POPL 2017, pages 748--761. ACM, 2017.
\newblock URL: \url{http://doi.acm.org/10.1145/3009837.3009847}, \href
  {http://dx.doi.org/10.1145/3009837.3009847}
  {\path{doi:10.1145/3009837.3009847}}.

\bibitem{LNTY:Icse18}
Julien Lange, Nicholas Ng, Bernardo Toninho, and Nobuko Yoshida.
\newblock A static verification framework for message passing in go using
  behavioural types.
\newblock In {\em Proceedings of the 40th International Conference on Software
  Engineering}, ICSE '18, pages 1137--1148, New York, NY, USA, 2018. ACM.
\newblock URL: \url{http://doi.acm.org/10.1145/3180155.3180157}, \href
  {http://dx.doi.org/10.1145/3180155.3180157}
  {\path{doi:10.1145/3180155.3180157}}.

\bibitem{d4}
Bozhen Liu and Jeff Huang.
\newblock D4: Fast concurrency debugging with parallel differential analysis.
\newblock In {\em Proceedings of the 39th ACM SIGPLAN Conference on Programming
  Language Design and Implementation}, PLDI 2018, pages 359--373, New York, NY,
  USA, 2018. ACM.
\newblock URL: \url{http://doi.acm.org/10.1145/3192366.3192390}, \href
  {http://dx.doi.org/10.1145/3192366.3192390}
  {\path{doi:10.1145/3192366.3192390}}.

\bibitem{NanoGo:2018}
Jan Midtgaard, Flemming Nielson, and Hanne~Riis Nielson.
\newblock Process-local static analysis of synchronous processes.
\newblock In Andreas Podelski, editor, {\em Static Analysis}, pages 284--305,
  Cham, 2018. Springer International Publishing.

\bibitem{book:CCS}
Robin Milner.
\newblock {\em {A Calculus of Communicating Systems}}, volume~92.
\newblock Springer-Verlag, 1980.

\bibitem{MilnerR:barbis}
Robin Milner and Davide Sangiorgi.
\newblock Barbed bisimulation.
\newblock In {\em Proc.~ICALP'92}, volume 623 of {\em LNCS}, 1992.

\bibitem{NY2016}
Nicholas Ng and Nobuko Yoshida.
\newblock {Static Deadlock Detection for Concurrent Go by Global Session Graph
  Synthesis}.
\newblock In {\em {CC 2016}}, pages 174--184. ACM, 2016.

\bibitem{Nielson1994HCP}
Hanne~Riis Nielson and Flemming Nielson.
\newblock Higher-order concurrent programs with finite communication topology
  (extended abstract).
\newblock In {\em {POPL}}, 1994.
\newblock \href {http://dx.doi.org/10.1145/174675.174538}
  {\path{doi:10.1145/174675.174538}}.

\bibitem{pike:go}
Rob Pike.
\newblock {Go at Google}.
\newblock In {\em SPLASH}, pages 5--6, New York, NY, USA, 2012. ACM.

\bibitem{SangiorgiD:picatomp}
Davide Sangiorgi and David Walker.
\newblock {\em The $\pi$-{C}alculus: a {T}heory of {M}obile {P}rocesses}.
\newblock Cambridge University Press, 2001.

\bibitem{SY2019}
Alceste Scalas and Nobuko Yoshida.
\newblock {Less Is More: Multiparty Session Types Revisited}.
\newblock In {\em 46th ACM SIGPLAN Symposium on Principles of Programming
  Languages}, pages 1--29. ACM, 2019.

\bibitem{SYB2019}
Alceste Scalas, Nobuko Yoshida, and Elias Benussi.
\newblock {Verifying message-passing programs with dependent behavioural
  types}.
\newblock In {\em Programming Language Design and Implementation}, 2019.

\bibitem{scribble}
Scribble.
\newblock {{S}cribble {P}roject}, 2008.
\newblock \url{www.scribble.org}.

\bibitem{tsan}
Konstantin Serebryany and Timur Iskhodzhanov.
\newblock {ThreadSanitizer}: Data race detection in practice.
\newblock In {\em Proceedings of the Workshop on Binary Instrumentation and
  Applications}, WBIA '09, pages 62--71, New York, NY, USA, 2009. ACM.
\newblock URL: \url{http://doi.acm.org/10.1145/1791194.1791203}, \href
  {http://dx.doi.org/10.1145/1791194.1791203}
  {\path{doi:10.1145/1791194.1791203}}.

\bibitem{SPIV11:tsan}
Konstantin Serebryany, Alexander Potapenko, Timur Iskhodzhanov, and Dmitriy
  Vyukov.
\newblock Dynamic race detection with {LLVM} compiler.
\newblock In {\em Proceedings of the Second International Conference on Runtime
  Verification}, RV'11, pages 110--114, Berlin, Heidelberg, 2012.
  Springer-Verlag.
\newblock URL: \url{http://dx.doi.org/10.1007/978-3-642-29860-8\_9}, \href
  {http://dx.doi.org/10.1007/978-3-642-29860-8\_9}
  {\path{doi:10.1007/978-3-642-29860-8\_9}}.

\bibitem{Stadmuller16}
Kai Stadm\"uller, Martin Sulzmann, and Peter Thiemann.
\newblock {Static Trace-Based Deadlock Analysis for Synchronous Mini-Go}.
\newblock In {\em APLAS}, volume 10017 of {\em LNCS}, 2016.

\bibitem{gh:syzkaller}
Syzkaller.
\newblock Randomized testing for {Go}.
\newblock \url{https://github.com/google/syzkaller}, 2015.

\bibitem{web:mcrl2-website}
{Technische Universiteit Eindhoven}.
\newblock {mCRL2}.
\newblock \\\url{https://www.mcrl2.org/web/user\_manual/index.html}, 2018.

\bibitem{web:tsan}
{The Clang Team}.
\newblock {ThreadSanitizer}.
\newblock \\\url{http://clang.llvm.org/docs/ThreadSanitizer.html}, 2015.

\bibitem{Tu2019}
Tengfei Tu, Xiaoyu Liu, Linhai Song, and Yiying Zhang.
\newblock Understanding real-world concurrency bugs in {Go}.
\newblock In {\em Proceedings of the Twenty-Fourth International Conference on
  Architectural Support for Programming Languages and Operating Systems},
  ASPLOS '19, pages 865--878, New York, NY, USA, 2019. ACM.
\newblock URL: \url{http://doi.acm.org/10.1145/3297858.3304069}, \href
  {http://dx.doi.org/10.1145/3297858.3304069}
  {\path{doi:10.1145/3297858.3304069}}.

\bibitem{gh:gofuzz}
Dmitry Vyukov.
\newblock Randomized testing for {Go}.
\newblock \url{https://github.com/dvyukov/go-fuzz}, 2015.

\bibitem{web:go-race-detector-blog}
Dmitry Vyukov and Andrew Gerrand.
\newblock Introducing the {Go} {R}ace {D}etector.
\newblock \url{https://blog.golang.org/race-detector}, 2013.

\bibitem{zh16:echo}
Sheng Zhan and Jeff Huang.
\newblock Echo: Instantaneous in situ race detection in the {IDE}.
\newblock In {\em Proceedings of the 2016 24th ACM SIGSOFT International
  Symposium on Foundations of Software Engineering}, FSE 2016, pages 775--786,
  New York, NY, USA, 2016. ACM.
\newblock URL: \url{http://doi.acm.org/10.1145/2950290.2950332}, \href
  {http://dx.doi.org/10.1145/2950290.2950332}
  {\path{doi:10.1145/2950290.2950332}}.

\end{thebibliography}

\full{
\appendix
\section{Additional Definitions}
\label{app:def}
This section lists the additional definitions. 

\subsection{Free Names}
\label{app:freename}
\figurename~\ref{fig:names} lists 
the set of free names. 

\begin{figure}
\framebox{
\begin{tabular}{l|l}
$
\!\!\begin{array}{r@{\ }c@{\ }l}
\fn{\zero} & = & \emptyset\\
\fn{\close{\nGeneric}\pCont\procP} & = & \fn{\procP}\cup\{\nGeneric\}\\
\fn{\sel{\migoChanPref[i]\pCont\procP[i]}{i\in I}} & = & \bigcup_{i\in I}{\fn{\migoChanPref[i]\pCont\procP[i]}}\\
\fn{\pITE{\nExpr}{\procP}{\procQ}} & = & \fn{\nExpr} \cup \fn{\procP} \cup \fn{\procQ}\\
\fn{\pPar{\procP}{\procQ}} & = & \fn{\procP} \cup \fn{\procQ}\\
\fn{\migoNewch{\channame}{\sigma}{n}\pCont\procP} & = & \fn{\procP} \backslash \{\channame\}\\
\fn{\pNewMem{\nVar}{\sigma}\pCont\procP} & = & \fn{\procP} \backslash \{\nVar\}\\
\fn{\pRes{\nGeneric}{\procP}} & = & \fn{\procP} \backslash \{\nGeneric\}\\
\fn{\pCall{\procX}{\vExpr}{\vGeneric}} & = & \fn{\vExpr}\cup\fn{\vGeneric}\\
\fn{\buff{\vVal}{\channame}{\sigma,n}} & = & \{\channame\}\cup\fn{\vVal}\\
\fn{\closedbuff{\vVal}{\channame}{\sigma,n}} & = & \{\channame\}\cup\fn{\vVal}\\
\fn{\stMem{\nVar}{\sigma}{\nVal}} & = & \fn{\nVal}\cup\{\nVar\}\\
\fn{\pNewLock{\nLock}\pCont\procP} & = & \fn{\procP} \backslash \{\nLock\}\\
\fn{\pNewRWLock{\nLock}\pCont\procP} & = & \fn{\procP} \backslash \{\nLock\}\\
\fn{\stLock{\nLock}} = \fn{\stLockL{\nLock}} & = & \{\nLock\}\\
\fn{\stRWLock{\nLock}{i}} = \fn{\stRWLockL{\nLock}{i}} & = & \{\nLock\}\\
\fn{\stRWLockW{\nLock}{i}} & = & \{\nLock\}
\end{array}\hspace*{0.4em}
$
&
$\hspace*{0.4em}
\begin{array}{r@{\ }c@{\ }l}
\fn{\pLock{\nLock}\pCont\procP} & = & \fn{\procP}\cup\{\nLock\}\\
\fn{\pUnlock{\nLock}\pCont\procP} & = & \fn{\procP}\cup\{\nLock\}\\
\fn{\pRLock{\nLock}\pCont\procP} & = & \fn{\procP}\cup\{\nLock\}\\
\fn{\pRUnlock{\nLock}\pCont\procP} & = & \fn{\procP}\cup\{\nLock\}\\
\fn{\pLoad{\nVar}{\nVary}\pCont\procP} & = & (\fn{\procP}\cup\{\nVar\})\backslash\{\nVary\}\\
\fn{\pStore{\nVar}{\nExpr}\pCont\procP} & = & \fn{\procP} \cup \fn{\nExpr} \cup \{\nVar\}\\
\fn{\psend{\channame}{\nExpr}\pCont\procP} & = & \fn{\procP} \cup \fn{\nExpr} \cup \{\channame\}\\
\fn{\precv{\channame}{\nVary}\pCont\procP} & = & (\fn{\procP} \cup \{\channame\}) \backslash \{\nVary\}\\
\fn{\pSilent\pCont\procP} & = & \fn{\procP}\\
\fn{\num} & = & \emptyset\\
\fn{\true} = \fn{\false}  & = & \emptyset\\
\fn{\mathsf{not}(\nExpr)} & = & \fn{\nExpr}\\
\fn{\mathsf{succ}(\nExpr)} & = & \fn{\nExpr}\\
\fn{\nVar} & = & \{\nVar\}\\
\fn{\channame} = \fn{\ov{\channame}} & = & \{\channame\}\\
\fn{\nLock} & = & \{\nLock\}\\
\fn{\{\actname[1],\ldots,\actname[n]\}} & = & \bigcup_{1\leq i\leq n}\fn{\actname[i]}
\end{array}\!\!
$
\end{tabular}
}
\caption{Definition of free names.}\label{fig:names}
\vspace{-0.5cm}
\end{figure}

\section{Proofs}
\label{app:proofs}

{\noindent\large\bf Theorem~\ref{the:drace}.} 

We first prove the if-direction.
\begin{proof}
Suppose $\procProg\tra{}^{\ast}\pRes{\vGeneric}{\procP}$,
$\procP\barb{\stbarb{\nVar}{\occurgen}}\wedge \procP\barb{\actname[2]}$ with $\actname[2]=\stbarb{\nVar}{\occurgeni},\ldbarb{\nVar}{\occurgeni}$, 
and prove that $\nothb[\procP]{\stbarb{\nVar}{\occurgen}}{\actname[2]}\wedge\nothb[\procP]{\actname[2]}{\stbarb{\nVar}{\occurgen}}$.
We suppose that $\hb[\procP]{\stbarb{\nVar}{\occurgen}}{\actname[2]}$ or $\hb[\procP]{\actname[2]}{\stbarb{\nVar}{\occurgen}}$, and prove 
there is a contradiction by induction on the happens-before relation.
Barring the structural congruence, there are four rules that allow for read and write
events to shared variables, so we only need to take these four into
account. We prove only the case for $\hb[\procP]{\stbarb{\nVar}{\occurgen}}{\actname[2]}$, as the other case is symmetric:
\begin{itemize}
\item Rule \hbrulename{con} is the base case, with $\procP=\prefMem\pCont\procPi$, $\prefMem\barb{\stbarb{\nVar}{\occurgen}}$
and $\procPi\barb{\actname[2]}$. The contradiction here is that then, $\neg \procP\barb{\actname[2]}$.
\item Rules \hbrulename{res} and \hbrulename{par} suppose $\procP=\pRes{\nGeneric}{\procPi}$ and $\procP=\procPi\parr \procQ$
respectively, with $\hb[\procPi]{\stbarb{\nVar}{\occurgen}}{\actname[2]}$, so the induction is on $\procPi$ to prove
that $\neg \procPi\barb{\actname[2]}$.
\item Rule \hbrulename{tra} supposes there is an intermediate action $\actname$, with
$\hb[\procP]{\stbarb{\nVar}{\occurgen}}{\actname}$ and $\hb[\procP]{\actname}{\actname[2]}$. We only need to prove $\neg \procP\barb{\actname[2]}$,
which we get by induction on the second relation $\hb[\procP]{\actname}{\actname[2]}$.
\end{itemize}
Which ends the proof by contradiction and induction.
\end{proof}

Next we prove the only-if-direction.
\begin{proof}
Suppose $\procProg\tra{}^{\ast}\pRes{\vGeneric}{\procP}$,
$\procP\wbarb{\stbarb{\nVar}{\occurgen}}\wedge \procP\wbarb{\actname[2]}$ with $\actname[2]=\stbarb{\nVar}{\occurgeni},\ldbarb{\nVar}{\occurgeni}$
and $\nothb[\procP]{\stbarb{\nVar}{\occurgen}}{\actname[2]}\wedge\nothb[\procP]{\actname[2]}{\stbarb{\nVar}{\occurgen}}$,
and prove that there exists $\procP\tra{}^{\ast}\procPi$ such that
$\procPi\barb{\stbarb{\nVar}{\occurgen}}\wedge \procPi\barb{\actname[2]}$.
We suppose that for all $\procP\tra{}^{\ast}\procPi$ such that
$\procPi\wbarb{\stbarb{\nVar}{\occurgen}}\wedge \procPi\wbarb{\actname[2]}$, it holds that
$\neg \procPi\barb{\stbarb{\nVar}{\occurgen}}\vee \neg \procPi\barb{\actname[2]}$ and prove there is a contradiction
by induction on the structure of $\procPi$ and the happens-before relation.
We can safely suppose that $\procPi\barb{\stbarb{\nVar}{\occurgen}}$ and $\neg \procPi\barb{\actname[2]}$
(or the symmetric case), as if none holds, because
$\procPi\wbarb{\stbarb{\nVar}{\occurgen}}\wedge \procPi\wbarb{\actname[2]}$ , we can reduce until one of them holds.
\begin{itemize}
\item if $\procPi=\pStore{\nVar}{\nExpr}\pCont\procPii$, with necessarily $\procPii\wbarb{\actname[2]}$, then
by construction of the happens-before relation we have $\hb[\procPi]{\stbarb{\nVar}{\occurgen}}{\actname[2]}$,
thus $\hb[\procP]{\stbarb{\nVar}{\occurgen}}{\actname[2]}$, which contradicts the hypothesis.
\item if $\procPi=\pPar{\procP[1]}{\procP[2]}$, then if both actions come from the same $\procP[i]$, the induction
on this $\procP[i]$ proves the contradiction, if not we can suppose $\procP[1]\barb{\stbarb{\nVar}{\occurgen[1]}}$
and $\pPar{\procP[1]}{\procP[2]}\wbarb{\actname[2]}$ with $\occurgen=\occurleft{\occurgen[1]}$ and $\occurgeni=\occurright{\occurgeni[2]}$.\\
Then for all $\procP[2]\tra{}^{\ast}\procPi[2]$ 
such that $\pPar{\procP[1]}{\procPi[2]}\wbarb{\actname[2]}$,
$\neg \procPi[2]\barb{\actnamei[2]}$ with $\actnamei[2]=\stbarb{\nVar}{\occurgeni[2]},\ldbarb{\nVar}{\occurgeni[2]}$
(otherwise the contradiction is on both actions happening at the same time).\\
Suppose $\procP[2]\tra{}^{\ast} \procPi[2] \not \rightarrow$ 
and 
$\pPar{\procP[1]}{\procPi[2]}\wbarb{\actname[2]}$.
Then $\procPi[2]\barb{\actname}$ with 
$\actname\in\{\channame,\ov{\channame},\lckbarb{\nLock},\rlckbarb{\nLock}\}$ one of the blocking
actions. We can suppose it is the last blocking action before $\actname[2]$ on this branch,
as we can just induce on the chain of blocking actions if not.
Because it is the last blocking action, it cannot be unlocked by $\procP[1]$ unless it reduces
the suspended $\stbarb{\nVar}{\occurgen[1]}$ action (otherwise we can reach a state where
both $\stbarb{\nVar}{\occurgen}$ and $\actname[2]$ can execute at the same time).\\
By this, we have $\hb[{\procP[1]}]{\stbarb{\nVar}{\occurgen[1]}}{\actnamei}$ and $\hb[{\procP[2]}]{\actname}{\actnamei[2]}$, with $\actnamei$ the action that unblocks $\actname$ in $\procP[2]$ ($\actnamei=\ov{\channame}$ if $\actname=\channame$, $\actnamei=\channame$ if $\actname=\ov{\channame}$, $\actnamei=\lckbarb{\nLock}$ if
$\actname=\rlckbarb{\nLock}$, and $\actnamei\in\{\ulckbarb{\nLock},\rulckbarb{\nLock}\}$ if $\actname=\lckbarb{\nLock}$).
Since $\actnamei$ is unblocking $\actname$, we also have the corresponding rule from happens-before that
holds at some point in the reduction, so $\hb[\pPar{\procP[1]}{\procPi[2]}]{\actnamei}{\actname}$, which with two
applications of the transitivity rule gives $\hb[\pPar{\procP[1]}{\procPi[2]}]{\stbarb{\nVar}{\occurgen}}{\actname[2]}$,
thus $\hb[\procPi]{\stbarb{\nVar}{\occurgen}}{\actname[2]}$ and finally $\hb[\procP]{\stbarb{\nVar}{\occurgen}}{\actname[2]}$, contradicting
the hypothesis.
\end{itemize}
\end{proof}

{\noindent\large\bf Proposition~\ref{prop:cong}.}
\begin{proof}
Suppose $\G \vdash \procP \ts \tyT$ and $\procP\equiv \procPi$.
Then from the typing judgments, we have either:
\begin{itemize}
\item $\procP=\pPar{\procPi}{\zero} \equiv \procPi$, then we can just remove
the last \trulename{\rulePar} rule along with its child
\trulename{\ruleTypingZero} rule: $\tyT=\tyPar{\tyTi}{\zero} \equiv \tyTi$.
\item $\procP=\pPar{\procP[1]}{\procP[2]} \equiv \pPar{\procP[2]}{\procP[1]}=\migoPi$.
Then $\tyT=\tyPar{\tyT[1]}{\tyT[2]}$ and we can use $\tyTi=\tyPar{\tyT[2]}{\tyT[1]}\equiv\tyT$,
the last rule of the typing judgment being \trulename{\rulePar} with
the premises switched.
\item $\procP=\pPar{\procP[1]}{\pPPar{\procP[2]}{\procP[3]}}\equiv\pPar{\pPPar{\procP[1]}{\procP[2]}}{\procP[3]}=\procPi$.
Then $\tyT=\tyPar{\tyT[1]}{\tyPPar{\tyT[2]}{\tyT[3]}}$ and we can use $\tyTi=\tyPar{\tyPPar{\tyT[1]}{\tyT[2]}}{\tyT[3]}
\equiv\tyT$, the bottom of the tree looking like the following:
\[
\trulename{\rulePar}
\inferrule
{\G \vdash \procP[1] \ts \tyT[1] \quad
	\trulename{\rulePar}
	\inferrule
	{\G \vdash \procP[2] \ts \tyT[2] \quad
		\G \vdash \procP[3] \ts \tyT[3]}
	{\G \vdash \pPar{\procP[2]}{\procP[3]} \ts \tyPPar{\tyT[2]}{\tyT[3]}}}
{\G \vdash \pPar{\procP[1]}{\pPPar{\migoP[2]}{\migoP[3]}} \ts \tyPar{\tyT[1]}{\tyPPar{\tyT[2]}{\tyT[3]}}}
\]
which turns into:
\[
\trulename{\rulePar}
\inferrule
{\trulename{\rulePar}
	\inferrule
	{\G \vdash \procP[1] \ts \tyT[1] \quad
	\G \vdash \procP[2] \ts \tyT[2]}
	{\G \vdash \pPar{\procP[1]}{\procP[2]} \ts \tyPPar{\tyT[1]}{\tyT[2]}}\quad
	\G \vdash \migoP[3] \ts \tyT[3]}
{\G \vdash \pPar{\pPPar{\procP[1]}{\procP[2]}}{\procP[3]} \ts \tyPar{\tyPPar{\tyT[1]}{\tyT[2]}}{\tyT[3]}}
\]
\end{itemize}
If we suppose $\G \vdash_{B} \procP \ts \tyT$ and $\procP\equiv \procPi$, the
proof is of the same form, with congruence and typing rules
about name restriction. The congruence rules from processes
are matched exactly with congruence rules from the typing system,
and the only typing rules involved other than the ones above are
\trulename{\ruleTypingPar}, \trulename{\ruleRestrictChan}, 
\trulename{\ruleRestrictMem} and \trulename{\ruleRestrictMut}.
\end{proof}

{\noindent\large\bf Theorem~\ref{the:red}.}
\begin{proof}
Suppose $\G \vdash \procP \ts \tyT$ and $\procP\tra{} \procPi$. We prove by induction on
the reductions semantics from the processes that there exists $\tyTi$ such
that $\G \vdash \procPi \ts \tyTi$ and $\tyT \tra{} \tyTi$.
We first treat the base cases:

\begin{itemize}
\item Rule \rulename{\ruleChanSyncCom} corresponds to the send case of 
\ltsrulename{\ruleChanSyncCom}.
\item Rule \rulename{\ruleChanAsyncSend} corresponds to 
\ltsrulename{\ruleChanAsyncSend}.
\item Rule \rulename{\ruleChanAsyncReceive} corresponds to 
\ltsrulename{\ruleChanAsyncReceive} in the pop case,
and \ltsrulename{\ruleChanSyncCom}'s closed channel case in the closed case.
\item Rules \rulename{\ruleMemLoad} and \rulename{\ruleMemStore} both fit with 
rule \ltsrulename{\ruleMemAction}, as there is no need for matching data in the 
types word.
\item Rules \rulename{\ruleMutLock}, \rulename{\ruleMutUnlock}, 
\rulename{\ruleRMutRLock} and \rulename{\ruleRMutRUnlock} respectively 
correspond to \ltsrulename{\ruleMutLock}, \ltsrulename{\ruleMutUnlock}, 
\ltsrulename{\ruleRMutRLock} and \ltsrulename{\ruleRMutRUnlock}, 
and \rulename{\ruleRMutLockStage} corresponds to 
\ltsrulename{\ruleRMutLockStage}.
\item Finally rules \rulename{\ruleITETrue} and \rulename{\ruleITEFalse} both 
translate to rule \ltsrulename{\ruleITEConstruct}, reducing with $\acttau$.
\item Rule \rulename{\ruleSelectConstruct} translates to rule 
\ltsrulename{\ruleSelectConstruct}.
\end{itemize}
All other base cases correspond 
one-on-one to their similarly-named rule in the types semantics.
See now the induction cases:
\begin{itemize}
\item Rule \rulename{\ruleDefinition} uses \ltsrulename{\ruleDefinition}, 
with \ltsrulename{\ruleParLeft} ot \ltsrulename{\ruleParRight}
in the case of a $\acttau$, $\labsync{\nLock}$ or memory action, by induction hypothesis;
or with one of the synchronisation rules from above in other cases,
by induction hypothesis as well.
\item Rules \rulename{\ruleParLeft} and \rulename{\ruleParRight} correspond to 
\ltsrulename{\ruleParLeft} and \ltsrulename{\ruleParRight} by direct induction.
\item Rules \rulename{\ruleRestrictFree} and \rulename{\ruleRestrictBind} 
correspond to \ltsrulename{\ruleRestrictFree} and 
\ltsrulename{\ruleRestrictBind}, both cases calling induction hypothesis.
\item Rule \rulename{\ruleAlphaEquiv} uses \ltsrulename{\ruleAlphaEquiv} 
directly with induction hypothesis as well.
\end{itemize}
\end{proof}

{\noindent\large\bf Theorem~\ref{the:prog}.}

To prove this main theorem, we first prove the following lemma. 

\begin{lemma}[Correspondence of Barbs on Types and
  Processes]\label{lem:barbs}
Suppose $\G \vdash \procP \ts \tyT$. Then 
if $\tyT\barb{\actname}$ with
$\actname \neq \heapbarb{\nVar}$, then $\procP\barb{\actname}$.
If $\tyT\barb{\heapbarb{\nVar}}$
then $\procP\barb{\labloaddual{\nVar}}$ or $\procP\barb{\labloaddual{\nVar}}$.
\end{lemma}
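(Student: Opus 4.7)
The plan is to proceed by induction on the typing derivation $\G \vdash \procP \ts \tyT$, with an inner case analysis on the derivation of the type barb $\tyT \barb{\actname}$ (using Definition~\ref{def:barbs:types}). I read the second clause as saying $\procP\barb{\labloaddual{\nVar}}$ or $\procP\barb{\labstoredual{\nVar}}$, since a single heap barb on the type abstracts both dual actions of the corresponding store in the process.

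First I would dispatch the base cases. For every prefix-typing rule (\trulename{\ruleChanSendReq}, \trulename{\ruleChanReceiveReq}, \trulename{\ruleMemLoadReq}, \trulename{\ruleMemStoreReq}, \trulename{\ruleMutLockReq}, \trulename{\ruleMutUnlockReq}, \trulename{\ruleRMutRLockReq}, \trulename{\ruleRMutRUnlockReq}, \trulename{\ruleChanCloseReq}), the process and the type are matching prefixes, and the prefix-barb clauses of Definitions~\ref{def:barb} and~\ref{def:barbs:types} yield the same label verbatim. For the runtime-store rules (\trulename{\ruleTypingMut}, \trulename{\ruleTypingMutLocked}, \trulename{\ruleTypingRMut}, \trulename{\ruleTypingRMutLocked}, \trulename{\ruleTypingRMutStaged}, \trulename{\ruleTypingChan}, \trulename{\ruleTypingChanClosed}, \trulename{\ruleTypingMem}), I compare the LTS rules in \figurename~\ref{fig:redsem} and \figurename~\ref{fig:types-sem} side by side: each lock/queue store offers exactly the same label on both sides, so the corresponding barb transfers directly. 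The only mismatch is \ltsrulename{\ruleMemActionAck} versus \rulename{\ruleMemLoadAck}/\rulename{\ruleMemStoreAck}: the type $\tyMem{\nVar}$ emits the abstract $\heapbarb{\nVar}$, whereas $\stMem{\nVar}{\sigma}{\nVal}$ emits either $\labloaddual{\nVar}$ or $\labstoredual{\nVar}$, which is precisely the disjunction in the second clause of the lemma.

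For the inductive cases I would handle \trulename{\rulePar} by observing that a barb on $\tyPar{\tyT[1]}{\tyT[2]}$ arises either from one component (directly by the induction hypothesis, with the appropriate left/right shift on occurrences for store actions), or from a synchronisation rule (\ltsrulename{\ruleMutLock}, \ltsrulename{\ruleRMutRLock}, \ltsrulename{\ruleMemAction}, \ltsrulename{\ruleChanSyncCom}, \ltsrulename{\ruleChanAsyncSend}, \ltsrulename{\ruleChanAsyncReceive}, etc.) producing $\labsync{\nGeneric}$; in the synchronisation case, applying the induction hypothesis to each premise gives the matching complementary process barbs, so the corresponding process synchronisation rule from \figurename~\ref{fig:redsem} or \figurename~\ref{fig:goldy-sem-chan} fires. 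The \trulename{\ruleRestrict} case is immediate when $\nGeneric \notin \fn{\actname}$; for $\actname = \labsync{\nGeneric}$ one uses the synchronisation case above and then applies \rulename{\ruleRestrictBind}. \trulename{\ruleTypingCall} is handled by unfolding the definition and applying the induction hypothesis to the body. For \trulename{\ruleITEConstruct} and \trulename{\ruleSelectConstruct} the type-level rule \ltsrulename{\ruleITEConstruct} (resp.\ \ltsrulename{\ruleSelectConstruct}) produces a $\tau$ barb, which is matched on the process side by \rulename{\ruleITETrue}/\rulename{\ruleITEFalse} (resp.\ \rulename{\ruleSelectConstruct}); this yields barbs on both branches of the selection, as required.

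The main obstacle is the parallel/synchronisation step combined with the memory abstraction: when $\tyT \barb{\labsync{\nVar}}$ arises via \ltsrulename{\ruleMemAction}, one premise emits $\labheap{\nVar}$ and the other a load or store prefix, and I must show that the corresponding process offers complementary barbs that trigger \rulename{\ruleMemLoad} or \rulename{\ruleMemStore}. This is where the disjunction in the heap clause of the lemma is essential: the induction hypothesis only guarantees one of $\labloaddual{\nVar}$ or $\labstoredual{\nVar}$ on the process store side, so I have to pair it with the direction given by the load/store prefix on the other side (obtained from the induction hypothesis with $\actname \neq \heapbarb{\nVar}$) to conclude that the matching process synchronisation barb is available. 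A small auxiliary observation, proved simultaneously, will be that occurrence shifts $\actleft{\cdot}$ and $\actright{\cdot}$ commute with the barb predicate, which keeps the store cases aligned through nested parallel compositions and justifies Theorem~\ref{the:prog}.
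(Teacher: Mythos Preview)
Your approach---induction on the typing derivation with an inner case split on the barb---is essentially the same as the paper's, which instead cases first on the action label $\actname$ and then does a structural induction on $\tyT$; since the typing rules mirror the type syntax one-to-one, the two organisations are interchangeable. Two small remarks: (i) your treatment of \trulename{\ruleITEConstruct} is off because $\tau$ is not a barb (it is excluded in Definitions~\ref{def:barb} and~\ref{def:barbs:types}), so the internal-choice case is simply vacuous rather than matched; (ii) your worry about the heap disjunction in the $\labsync{\nVar}$ case is not a real obstacle: as the paper's base case makes explicit, $\stMem{\nVar}{\sigma}{\nVal}$ always offers \emph{both} $\labloaddual{\nVar}$ and $\labstoredual{\nVar}$, so the induction hypothesis actually yields the conjunction and you can always pick the one matching the load/store prefix on the other side.
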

\begin{proof}
Suppose $\G \vdash \procP \ts \tyT$, and $\tyT\barb{\actname}$.
Let us case on $\actname$, and then for each case prove the conclusion by induction
on the structure of $\tyT$.
\begin{itemize}
\item if $\actname=\heapbarb{\nVar}$, then $\tyT$ can be:
	\begin{itemize}
	\item $\tyT=\tyheap{\nVar}$, then $\procP=\stMem{\nVar}{\sigma}{\nVal}$ and
	    it follows $\procP\barb{\labstoredual{\nVar}}$ and 
	    $\procP\barb{\labloaddual{\nVar}}$.
	\item $\tyT=\tyPar{\tyT[1]}{\tyT[2]}$ and $\tyT[1]\barb{\actname}$,
		by the typing rules there exist $\procP[1]$ and $\procP[2]$ such that
		$\G\vdash \procP[i]\ts \tyT[i]$ for $i=1,2$ and $\procP=\pPar{\procP[1]}{\procP[2]}$;
		then by induction on $\tyT[1],\procP[1]$ we get
		$\procP[1]\barb{\labloaddual{\nVar}}$ or $\procP[1]\barb{\labstoredual{\nVar}}$,
		and finally $\procP\barb{\labloaddual{\nVar}}$ or $\procP\barb{\labstoredual{\nVar}}$.
	\item $\tyT=\tyRes{\nGeneric}{\tyT[0]}$ and $\nGeneric\notin\fn{\actname}$, then there exists $\procP[0]$
		such that $\G\vdash \procP[0]\ts \tyT[0]$ and $\procP=\pRes{\nGeneric}{\procP[0]}$;
		by induction on $\tyT[0],\procP[0]$ we get
		$\procP[0]\barb{\labloaddual{\nVar}}$ or $\procP[0]\barb{\labstoredual{\nVar}}$,
		and finally $\procP\barb{\labloaddual{\nVar}}$ or $\procP\barb{\labstoredual{\nVar}}$.
	\item $\tyT=\tdefcall{\tyCallee}{\vGeneric}$, then
		$\tyT[0]\subs{\vGeneric}{\vVar}
		\barb{\actname}$ and
		$\tyCallee(\vVar) = \tyT[0]$; we have
		$\procP=\pCall{\procX}{\vExpr,\vGeneric}{}$, $\procX(\vVar)=\procQ$
		and so $\G\vdash \procQ\subs{\vExpr,\vGeneric}{\vVar}
		\ts \tyT[0]\subs{\vGeneric}{\vVar}$,
		so by induction 
		$\procQ\subs{\vExpr,\vGeneric}{\vVar}
		\barb{\labloaddual{\nVar}}$ or 
		$\procQ\subs{\vExpr,\vGeneric}{\vVar}
		\barb{\labstoredual{\nVar}}$, and finally
		$\procP\barb{\labloaddual{\nVar}}$ or $\procP\barb{\labstoredual{\nVar}}$.
	\item $\tyT\equiv_\alpha \tyT[0]$ and $\tyT[0]\barb{\actname}$,
		then as $\alpha$-conversion only renames bound variables we
		have $\G\vdash \procP\ts \tyT[0]$ and we can use the former points on
		$\tyT[0],\procP$, such that $\procP\barb{\labloaddual{\nVar}}$ or $\procP\barb{\labstoredual{\nVar}}$.
	\end{itemize}
\item if $\actname$ is anything else, the barb construction rules from the types and the
processes correspond exactly, with each types construct corresponding to the process
that can fire actions the same way.
\end{itemize}
\end{proof}
Now we prove the main theorem. 

\begin{proof}
Suppose $\G \vdash \procP \ts \tyT$ and $\tyT\tra{\actname}\tyTi$ with $\actname=\acttau,
\labsync{\nGeneric}$.
\begin{itemize}
\item if $\actname = \labsync{\nGeneric}$, then by structural congruence we can make sure
	the two actions that can sync are directly parallel to each other in
	a subprocess $\procP[0]$ of a process $\procPi\equiv \procP$,
	providing the ability for $\procP[0]$ to reduce, thus $\procPi$ can reduce,
	and finally $\procP$ can reduce.
\item if $\actname = \acttau$, then, $\tyT$ can be:
	\begin{itemize}
	\item $\tyT=\tySilent\tyCont\tyTi$, then
		$\procP=\pSilent\pCont\procPi\tra{} \procPi$.
	\item $\tyT=\tbr{}{ \typrefix[i]\tyCont\tyT[i]}{i\in I}$ and for a certain $j$,
		$\typrefix[j]=\tySilent$, then $\procP=\sel{\migoChanPref[i]
		\pCont\procP[i]}{i\in I}$
		and $\migoChanPref[j]=\pSilent$,
		then $\procP\tra{}\procP[j]$.
		Note that in this case, $\tyT$ and $\procP$ do not have a barb, because
		barbs for select constructs are only defined when no prefix is a 
		$\pSilent$ prefix.
	\item $\tyT=\tch{}{\tyT[1], \tyT[2]}{}\tra{\acttau}\tyT[j]$, then $\procP=\pITE{\nExpr}{\procP[1]}{\procP[2]}$ and,
depending on the value of $\nExpr$, $\procP\tra{}\procP[1]$ and $\tyTi=\tyT[1]$ or $\procP\tra{}\procP[2]$ and $\tyTi=\tyT[2]$.
	\item $\tyT=\tyPar{\tyT[1]}{\tyT[2]}$ and $\tyT[1]\tra{\acttau}\tyTi[1]$,
		by the typing rules there exists $\procP[1]$ and $\procP[2]$ such that
		$\G\vdash \procP[i]\ts \tyT[i]$ for $i=1,2$ and $\procP=\pPar{\procP[1]}{\procP[2]}$;
		then by induction on $\tyT[1],\procP[1]$ we get $\procPi[1]$ such that
		$\procP[1]\tra{}\procPi[1]$, and finally
		$\procP\tra{}\procPi=\pPar{\procPi[1]}{\procP[2]}$.
	\item Rules \ltsrulename{\ruleNewChan}, \ltsrulename{\ruleNewMem}, 
	    \ltsrulename{\ruleNewMut}
		and \ltsrulename{\ruleNewRMut} all correspond to the same constructs in
		the process world and reduce with the corresponding rules.
	\item Rule \ltsrulename{\ruleChanClose} corresponds to its similarly-named 
	    rule \rulename{close} as well.
	\item $\tyT=\tyRes{\nGeneric}{\tyT[0]}$ and $\tyT[0]\tra{\acttau}\tyTi[0]$, then there exists $\procP[0]$
		such that $\G\vdash \procP[0]\ts \tyT[0]$ and $\procP=\pRes{\nGeneric}{\procP[0]}$;
		by induction on $\tyT[0],\procP[0]$ we get
		$\procP[0]\tra{}\procPi[0]$, and finally
		$\procP\tra{}\procPi=\pRes{\nGeneric}{\procPi[0]}$.
	\item $\tyT=\tyRes{\nGeneric}{\tyT[0]}$ and $\tyT[0]\tra{\labsync{\nGeneric}}\tyTi[0]$, then there exists $\procP[0]$
		such that $\G\vdash \procP[0]\ts \tyT[0]$ and $\procP=\pRes{\nGeneric}{\procP[0]}$;
		by induction on $\tyT[0],\procP[0]$ using the sync case, we get
		$\procP[0]\tra{}\procPi[0]$, and finally
		$\procP\tra{}\procPi=\pRes{\nGeneric}{\procPi[0]}$.
	\item $\tyT=\tdefcall{\typevara}{\vGeneric}$, then
		$\typesemF{\tyT[0]}\subs{\vGeneric}{\vVar}
		\semty{\acttau}\typesemFdash{\tyTi}$ and
		$\typevara(\vVar) = \tyT[0]$; we have
		$\procP=\defcall{\procX}{\vExpr,\vGeneric}$, $\procX(\vVar)=\procQ$
		and so $\G\vdash \procQ\subs{\vExpr,\vGeneric}{\vVar}
		\ts \typesemF{\tyT[0]}\subs{\vGeneric}{\vVar}$,
		so by induction $\procQ\subs{\vExpr,\vGeneric}{\vVar}
		\tra{}\procQi$ for some $\procQi$, and finally
		$\procP\tra{}\procPi$ for some $\procPi$ using \rulename{\ruleDefinition}.
	\item $\tyT\equiv_\alpha \tyT[0]$ and $\tyT[0]\tra{\acttau}\tyTi$,
		then as $\alpha$-conversion only renames bound variables we
		have $\G\vdash \procP\ts \tyT[0]$ and we can use the former points on
		$\tyT[0],\procP$, such that $\procP\tra{}\procPi$ for some $\procPi$ such that
		$\G\vdash \procPi\ts \tyTi$.
	\end{itemize}
\end{itemize}
\end{proof}

{\noindent\large\bf Theorem~\ref{the:proc-safe-drf}.}

We need to formalise the Inversion Lemma in our model 
in order to prove this Theorem:

\begin{lemma}[Inversion]\label{lem:inv}
\begin{enumerate}
\item If $\G \vdash_{B} \procP \ts \tyT$ and 
  $\procP \equiv \pRes{\name}{\procPi}$ then 
  $\tyT \equiv \tyRes{\name}{\tyTi}$, with 
  $\Gi \vdash_{B'} \procPi \ts \tyTi$ for some $\Gi$ and $B'$, with 
  $\G\subseteq\Gi$ and $B\subseteq B'$.
\item If $\G \vdash_{B} \procP \ts \tyT$ and 
  $\procP \equiv \pPar{\procP[1]}{\procP[2]}$ then 
  $\tyT \equiv \tyPar{\tyT[1]}{\tyT[2]}$, with 
  $\G \vdash_{B_1} \procP[1] \ts \tyT[1]$ and 
  $\G \vdash_{B_2} \procP[2] \ts \tyT[2]$, with 
  $B = B_1 \cup B_2$.
\item If $\G\vdash_{B}\procP\ts\tyT$ and $P\wbarb{\actname}$ then:
  \begin{itemize}
  \item if $\actname\notin\{\labloaddual{\nVar},\labstoredual{\nVar}\}$ 
    then $\tyT\wbarb{\actname}$.
  \item if $\actname\in\{\labloaddual{\nVar},\labstoredual{\nVar}\}$ 
    then $\tyT\wbarb{\heapbarb{\nVar}}$.
  \end{itemize}
\item If $\G\vdash_{B}\procP\ts\tyT$ and $P\barb{\actname}$ then:
  \begin{itemize}
  \item if $\actname\notin\{\labloaddual{\nVar},\labstoredual{\nVar}\}$ 
    then $\tyT\barb{\actname}$.
  \item if $\actname\in\{\labloaddual{\nVar},\labstoredual{\nVar}\}$ 
    then $\tyT\barb{\heapbarb{\nVar}}$.
  \end{itemize}
\end{enumerate}
\end{lemma}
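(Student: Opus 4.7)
The plan is to handle the four parts in the natural order, since parts 1 and 2 set up the structural invariants needed for parts 3 and 4. For parts 1 and 2, I would proceed by induction on the derivation of $\G \vdash_{B} \procP \ts \tyT$, relying crucially on the fact that the typing rules in \figurename~\ref{fig:goldy-typing} (together with the channel rules of \figurename~\ref{fig:typing-chan}) are syntax-directed: each process constructor has exactly one typing rule whose conclusion matches it. Structural congruence $\procP \equiv \procPi$ is handled by invoking Subject Congruence (Proposition~\ref{prop:cong}), which gives a matching $\tyTi \equiv \tyT$ with $\G \vdash_{B} \procPi \ts \tyTi$; thereafter it suffices to analyse the top typing rule of the transformed derivation. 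In Part 1, \trulename{\ruleRestrict} is the only rule introducing $\pRes{\name}{}$, so the type is directly $\tyRes{\name}{\tyTi}$ and the premise gives $\G, \name{:}\nTypeGen \vdash_{B\cup\{\name\}} \procPi \ts \tyTi$, yielding the required $\Gi$ and $B'$ with the stated inclusions. Part 2 is similar, using \trulename{\ruleTypingPar} to split the context and $B$ into two disjoint pieces, whose union is $B$.

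For Part 4 (strong barbs), I would case-split on the barbed action $\actname$ and then induct on the structure of $\procP$, using Parts 1 and 2 at the $\pRes{}$ and parallel composition cases to carve out the subterm responsible for the barb. The prefix cases (e.g.\ $\pStore{\nVar}{\nExpr}\pCont\procQ$, $\pLock{\nLock}\pCont\procQ$, $\precv{\channame}{\nVary}\pCont\procQ$, etc.) are immediate: the syntax-directed typing rule places the matching prefix in front of the type, and that prefix has the same barb by Definition~\ref{def:barbs:types}. The runtime-store cases require the careful split announced in the statement: a $\stMem{\nVar}{\sigma}{\nVal}$ emits $\labloaddual{\nVar}$ or $\labstoredual{\nVar}$, and by \trulename{\ruleTypingMem} it is typed $\tyMem{\nVar}$, which carries exactly the $\heapbarb{\nVar}$ barb via rule \ltsrulename{\ruleMemActionAck}. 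The lock and channel stores match their barbs directly (e.g.\ $\stLock{\nLock} \barb{\mutbarb{\nLock}}$ corresponds to $\tyLockU{\nLock} \barb{\mutbarb{\nLock}}$), and select/conditional cases reduce to their guards.

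Part 3 then follows by lifting Part 4 along the reduction sequence: if $\procP \tra{}^{*} \procPi$ with $\procPi \barb{\actname}$, Subject Reduction (Theorem~\ref{the:red}) gives a matching sequence $\tyT \tra{}^{*} \tyTi$ with $\G \vdash_{B'} \procPi \ts \tyTi$, and Part 4 applied to $\procPi,\tyTi$ yields $\tyTi \barb{\actname}$ (or $\tyTi \barb{\heapbarb{\nVar}}$ in the memory-dual case), hence $\tyT \wbarb{\actname}$ (respectively $\tyT \wbarb{\heapbarb{\nVar}}$).

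The main obstacle I foresee is bookkeeping around the split between prefix-barbs of user-written processes and runtime-store barbs. In particular, the asymmetry in the statement — where process actions $\labloaddual{\nVar}$ and $\labstoredual{\nVar}$ collapse to a single type-level barb $\heapbarb{\nVar}$ — must be matched carefully in every inductive case that introduces a $\stMem{\nVar}{\sigma}{\nVal}$ subterm; similarly, one must be cautious that the \rulename{\ruleSelectConstruct} barb-rule in Definition~\ref{def:barb-chan} only fires when no guard is $\pSilent$, matching the analogous restriction on the type side, so the select case in Part 4 is not entirely routine and will need an explicit sub-case analysis on the guard list.
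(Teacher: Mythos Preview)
Your proposal is correct and, in spirit, matches the paper's approach: the paper's own proof is a single sentence (``This is straight from the typing rules, much like Subject Congruence in Proposition~\ref{prop:cong}''), so any argument that unwinds the syntax-directed typing rules and mirrors the case analysis of Proposition~\ref{prop:cong} is exactly what is intended. Your plan supplies considerably more detail than the paper does---in particular, your explicit use of Subject Reduction (Theorem~\ref{the:red}) to lift Part~4 to Part~3, and your careful treatment of the $\labloaddual{\nVar}/\labstoredual{\nVar} \mapsto \heapbarb{\nVar}$ collapse, are real content that the paper leaves implicit---but none of this deviates from the intended route.
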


\begin{proof}
This is straight from the typing rules, much like Subject Congruence 
in Proposition~\ref{prop:cong}.
\end{proof}

We now prove the Safety Theorem:

\begin{proof}
We decompose safety in its three parts. 
Suppose $\pCall{\procX[0]}{}{} \tra{}^{\ast} \pRes{\vGeneric}\procQ$ and 
\begin{enumerate}
\item $\procQ\barb{\cclbarb{\channame}}$. Then, by Lemma~\ref{lem:inv}, 
  there exists $\Gi,\tyT$ such that $\Gi\vdash\procQ\ts\tyT$, and 
  we have $\tyT\barb{\cclbarb{\channame}}$. Safety of $\tyEqn$ entails safety 
  of $\tyT$ as a subterm of a reduced term from $\tyEqn$, thus 
  $\neg(\tyT\wbarb{\clbarb{\channame}})$ and $\neg(\tyT\wbarb{\ov{\channame}})$. 
  This implies, by applying the third point of Lemma~\ref{lem:inv} again, 
  $\neg(\procQ\wbarb{\clbarb{\channame}})$ and 
  $\neg(\procQ\wbarb{\ov{\channame}})$.
\item \begin{enumerate}
  \item $\procQ\barb{\ulckbarb{\nLock}}$. Then, by Lemma~\ref{lem:inv}, 
    there exists $\Gi,\tyT$ such that $\Gi\vdash\procQ\ts\tyT$, and 
    we have $\tyT\barb{\ulckbarb{\nLock}}$. Safety of $\tyEqn$ entails safety 
    of $\tyT$ as a subterm of a reduced term from $\tyEqn$, thus 
    $\tyT\barb{\lckmutbarb{\nLock}}$, and by Lemma~\ref{lem:barbs}, 
    $\procQ\barb{\lckmutbarb{\nLock}}$.
  \item $\procQ\barb{\rulckbarb{\nLock}}$. Then, by Lemma~\ref{lem:inv}, 
    there exists $\Gi,\tyT$ such that $\Gi\vdash\procQ\ts\tyT$, and 
    we have $\tyT\barb{\rulckbarb{\nLock}}$. 
    Safety of $\tyEqn$ entails safety 
    of $\tyT$ as a subterm of a reduced term from $\tyEqn$, thus 
    $\tyT\barb{\waitrmutbarb{\nLock}{i}}$, and by Lemma~\ref{lem:barbs}, 
    $\procQ\barb{\waitrmutbarb{\nLock}{i}}$.
  \end{enumerate}
\item $\procQ\barb{\stbarb{\nVar}{\occurgen}}$. Then, by Lemma~\ref{lem:inv}, 
    there exists $\Gi,\tyT$ such that $\Gi\vdash\procQ\ts\tyT$, and 
    we have $\tyT\barb{\stbarb{\nVar}{\occurgen}}$. 
    Data race freedom of $\tyEqn$ entails data race freedom 
    of $\tyT$ as a subterm of a reduced term from $\tyEqn$, thus 
    $\neg(\tyT\barb{\stbarb{\nVar}{\occurgeni}})$ and 
    $\neg(\tyT\barb{\ldbarb{\nVar}{\occurgeni}})$, for any 
    $\occurgeni\neq\occurgen$, and by Lemma~\ref{lem:inv}, 
    $\neg(\procQ\barb{\stbarb{\nVar}{\occurgeni}})$ and 
    $\neg(\procQ\barb{\ldbarb{\nVar}{\occurgeni}})$, for any 
    $\occurgeni\neq\occurgen$.
\end{enumerate}
This closes the proof of the Safety Theorem.
\end{proof}

{\noindent\large\bf Proposition~\ref{prop:may-conv-live}.}

\begin{proof}
Assume $\G\vdash\procProg\ts\tyEqn$ and $\tyEqn$ is live.
\begin{enumerate}
\item[(1)] Suppose by contradiction that 
  $\pCall{\procX[0]}{}{}\tra{}^{\ast}\procP\not\tra{}$ but 
  $\procP\not\equiv\zero$. Then there exists $\procQ$ such that 
  $\procP\equiv\pRes{\vGeneric}\procQ$ and either $\procQ\barb{\actname}$ 
  with $\actname\in\{\channame,\ov{\channame},
  \lckbarb{\nLock},\rlckbarb{\nLock}\}$, or $\procQ\barb{\actnameVec}$ for 
  some $\actnameVec$ (containing only blocking channel actions by definition). 
  Then, by Lemma~\ref{lem:inv}, this contradicts liveness for $\tyEqn$.
\item[(2)] As there is always a path to term $\zero$, which is live, then 
  all blocking actions available at any given point can be fired on any 
  available path to termination, hence $\procProg$ is live.
\end{enumerate}
\end{proof}

{\noindent\large\bf Proposition~\ref{prop:finite-branch-live}.}

We first prove a lemma for the conditional-free case:

\begin{lemma}\label{cond-free-live}
Suppose $\G\vdash\procProg\ts\tyEqn$, $\tyEqn$ is live and $\procProg$ 
is conditional-free, then $\procProg$ is live.
\end{lemma}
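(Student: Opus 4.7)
The plan is to show that for conditional-free processes the non-determinism mismatch between the process LTS and the type LTS vanishes, so that type-level liveness transfers to the process level verbatim. Concretely, assume $\procProg \tra{}^{\ast} \pRes{\vGeneric}{\procP}$ with $\procP \barb{\lckbarb{\nLock}}$ (the case $\procP \barb{\rlckbarb{\nLock}}$ is symmetric). By Subject Congruence (Proposition~\ref{prop:cong}) and Subject Reduction (Theorem~\ref{the:red}), iterated along the reduction, there exists a type $\tyT$ with $\G \vdash \procP \ts \tyT$ and $\tyEqn \tra{}^{\ast} \pRes{\vGeneric}{\tyT}$. The barb correspondence in Lemma~\ref{lem:inv} yields $\tyT \barb{\lckbarb{\nLock}}$, so by liveness of $\tyEqn$ we obtain a finite witnessing sequence $\tyT = T_0 \tra{} T_1 \tra{} \cdots \tra{} T_k$ with $T_k \barb{\labsync{\nLock}}$.

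I would then lift this sequence to the process level by induction on $k$, applying Progress (Theorem~\ref{the:prog}) at each step: from $T_i \tra{} T_{i+1}$ and $\G \vdash \procP[i] \ts T_i$, Progress supplies $\procP[i] \tra{} \procP[i+1]$ with $\G \vdash \procP[i+1] \ts T'_{i+1}$ for some $T'_{i+1}$ reached from $T_i$ by the same label. The main obstacle is that Progress only guarantees \emph{some} resulting type, which need not coincide with the particular $T_{i+1}$ picked by the liveness witness; the remark after Theorem~\ref{the:prog} attributes this mismatch exactly to the internal-choice types arising from process conditionals. The conditional-free hypothesis is used precisely here: since $\procProg$ contains no conditional and process reductions introduce none, no internal-choice type ever appears along the reduction, and the only remaining source of type-level non-determinism is the external branching coming from select constructs. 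External branching is resolved symmetrically in both LTSs by synchronisation with a matching prefix in the environment, so for each step of the type-side witness we can choose a matching process-side reduction, giving $T'_{i+1} \equiv T_{i+1}$ up to structural congruence.

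Iterating $k$ times yields $\procP \tra{}^{k} \procP[k]$ with $\G \vdash \procP[k] \ts T_k$. Since $T_k \barb{\labsync{\nLock}}$, the type-to-process barb-preservation direction of Lemma~\ref{lem:barbs} gives $\procP[k] \barb{\labsync{\nLock}}$, which means $\procP[k]$ fires a $\labsync{\nLock}$ step to some $\procP[k+1]$. Hence $\procP \wbarb{\labsync{\nLock}}$, as required. The bridging argument in the second paragraph will rely on a small invariant lemma stating that the type of a conditional-free process contains no internal-choice subterm and that this property is preserved under type reduction, so that Progress, when applied to types of this restricted shape, is deterministic on its target up to the external branching that processes can faithfully track.
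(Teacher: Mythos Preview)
Your proposal is correct and follows essentially the same idea as the paper's proof: the absence of conditionals removes the only source of mismatch between process and type reductions (the internal choice $\oplus$ coming from typing $\pITE{\nExpr}{\procP}{\procQ}$), so type-level liveness witnesses can be replayed on the process side. The paper compresses this into a single sentence asserting a ``strong match'' of moves and hence of weak barbs, whereas you spell out the mechanism explicitly via Subject Reduction, the Inversion Lemma, and an iterated use of Progress together with the observation that, on $\oplus$-free types, Progress actually yields the \emph{same} target type as the given step; this is a faithful unfolding of the paper's argument rather than a different route. One small remark: your mention of select constructs as a residual source of external non-determinism is only relevant once the channel extension of \S\ref{sect:chan-extension} is in play, since the core \gol{} calculus in which this lemma is first stated has no select.
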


\begin{proof}
Since there is no conditional, all moves are strongly matched between types 
and processes, ie. if 
$\pCall{\procX[0]}{}{}\tra{}^{\ast}\pRes{\vGeneric}{\procP}$, using the 
Inversion Lemma we get $\Gi\vdash\procP\ts\tyT$ and we have 
$\procP\wbarb{\actname}$ iff $\tyT\wbarb{\actname}$. Thus liveness of $\tyEqn$ 
induces liveness of $\procProg$. 
\end{proof}

We now prove the proposition:

\begin{proof}
Suppose $\procProg\notin\ic$ and 
$\pCall{\procX[0]}{}{}\tra{}^{\ast}\pRes{\vGeneric}{\procP}$. 
By Inversion Lemma there exists $\Gi,\tyT$ such that $\Gi\vdash\procP\ts\tyT$. 
Since $\procProg\notin\ic$, we can always reduce to a term that is 
conditional-free, and along with the inversion Lemma again, there is 
$\procPi,\tyTi,\Gii$ such that $\procP\tra{}^{\ast}\procPi$, 
$\tyT\tra{}^{\ast}\tyTi$, $\Gii\vdash\procPi\ts\tyTi$ and $\procPi$ 
is conditional-free. We can then use Lemma~\ref{cond-free-live} to conclude.
\end{proof}

{\noindent\large\bf Theorem~\ref{the:alt-cond-live}.}

We first need a simple lemma again:

\begin{lemma}\label{lem:mapast-live}
If $\G\vdash\procP\ts\tyT$, then for $\actname\in\{\acttau,\labsync{\nGeneric}\}$, 
  $\tyT\wbarb{\actname}$ iff $\MAPAST{\procP}\wbarb{\actname}$.
\end{lemma}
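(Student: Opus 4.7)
The plan is to prove the lemma by structural induction on the process $\procP$ (equivalently, on the typing derivation $\G \vdash \procP \ts \tyT$), with a sub-induction on the length of the reduction sequence witnessing the weak barb. The general approach relies on the fact that $\MAPAST$ is designed to align the reduction semantics of certain conditionals in the process with the non-deterministic selection semantics of types, while prefix-level and compositional constructs already exhibit matching semantics between processes and types via Subject Reduction (Theorem~\ref{the:red}) and the barb correspondence (Lemma~\ref{lem:barbs}).

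I would first handle the base cases. For $\procP = \zero$, neither side exhibits any weak barb, so the iff is vacuous. For a prefixed process $\prefMem \pCont \procPi$ (and analogously the channel/lock prefixes), the prefix structure is preserved by $\MAPAST$, and Lemma~\ref{lem:barbs} already ensures that the corresponding types preserve the same $\acttau$ and $\labsync{\nGeneric}$ barbs after a reduction; the inductive hypothesis on $\procPi$ then transports the equivalence through further reductions.

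Next, I would handle the compositional cases: parallel composition $\pPar{\procP_1}{\procP_2}$, restriction $\pRes{\nGeneric}{\procPi}$, process calls $\pCall{\procX}{\vExpr}{\vGeneric}$, and the ``$\mathsf{new}$'' constructs. In each case $\MAPAST$ distributes over the constructor, the typing rules produce the expected type decomposition, and the synchronisation rules in \figurename~\ref{fig:redsem} and \figurename~\ref{fig:types-sem} match one-for-one. The induction hypothesis, combined with Subject Reduction for the forward direction and Theorem~\ref{the:prog} for the backward direction, closes these cases.

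The critical case and main obstacle is the conditional $\pMPITE{\expr}{\procP_1}{\procP_2}{n}$. When $n \in \infcond(\procProg)$, $\MAPAST$ produces $\pITE{\ast}{\procP_1}{\procP_2}$ whose reduction with label $\tytau$ mirrors exactly the type-level rule $\ltsrulename{\ruleITEConstruct}$ for $\tyChoice{\tyT_i}{i\in I}$; both directions then follow immediately by applying the induction hypothesis to $\procP_1, \procP_2$. When $n \notin \infcond(\procProg)$, however, $\MAPAST{\procP}$ keeps the conditional deterministic (reducing only according to $\expr \conv \true/\false$) while the type still offers non-deterministic branching. To close this gap I would exploit the defining property of $\infcond$: each non-infinite conditional is traversed only finitely many times along any execution. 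Hence, for any weak barb on $\acttau$ or $\labsync{\nGeneric}$ witnessed by a type reduction sequence, the finitely many occurrences of non-infinite conditionals can be matched against the deterministic choices made in $\MAPAST{\procP}$ by an auxiliary induction on the number of such occurrences encountered along the barb-witnessing trace, using that the branch not taken contributes no \emph{new} infinitely recurring behaviour and therefore no barb unreachable otherwise. Making this last argument rigorous, and ensuring the induction terminates despite potentially infinite reduction sequences, is the delicate step on which the whole lemma hinges.
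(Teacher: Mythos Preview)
The paper's proof is a single sentence: it asserts that, as in the conditional-free case (Lemma~\ref{cond-free-live}), once $\MAPAST$ has ``removed the determinism inherent from $\pITE{\nExpr}{\procP}{\procQ}$ constructs'', the transitions of $\MAPAST{\procP}$ and $\tyT$ match step for step, so weak barbs coincide. There is no structural induction, no case split on whether a conditional is infinite, and no auxiliary trace argument; the lemma is treated as essentially immediate from the intended semantics of $\MAPAST$.

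Your proposal is far more elaborate, and in the process you put your finger on a real tension the paper's one-liner does not address: by the stated definition, $\MAPAST$ only replaces conditionals whose mark lies in $\infcond(\procProg)$, so a finitely-occurring conditional remains deterministic in $\MAPAST{\procP}$ while its type still branches non-deterministically via $\ltsrulename{\ruleITEConstruct}$. However, your proposed repair for this case does not work. Consider $\procP = \pITE{\true}{\zero}{\pLock{\nLock}\pCont\pUnlock{\nLock}\pCont\zero}$ in parallel with $\stLock{\nLock}$: this conditional is traversed once, hence not in $\infcond$, so $\MAPAST$ leaves it deterministic and $\MAPAST{\procP}$ reduces only to $\zero$; yet the type $\tyITE{\zero}{\tyLock{\nLock}\tyCont\tyUnlock{\nLock}\tyCont\zero}$ satisfies $\wbarb{\labsync{\nLock}}$ via the second branch. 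No induction on the number of finite-conditional traversals rescues this, because a single such traversal already breaks the biconditional.

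In short, the paper simply takes the view that after $\MAPAST$ the process calculus and the type calculus are in lockstep, and does not engage with the $n\notin\infcond$ case at all. Your detailed case analysis is more honest about where the difficulty lies, but the finiteness argument you sketch does not close the gap; if anything it shows that the lemma, read literally against the stated definition of $\MAPAST$, requires either a stronger hypothesis or a broader reading of $\MAPAST$ (replacing all conditionals) than the paper makes explicit.
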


\begin{proof}
As for the conditional free case, we have a strong match for actions between 
the types and the conditional mapping, by removing the determinism inherent 
from $\pITE{\nExpr}{\procP}{\procQ}$ constructs in \gol.
\end{proof}

We now prove the Theorem:

\begin{proof}
Suppose $\pCall{\procX[0]}{}{}\tra{}^{\ast}\pRes{\vGeneric}{\procP}$, 
Then by Inversion Lemma and subject reduction, we have $\tyT,\Gi$ 
such that $\Gi\vdash\procP\ts\tyT$ 
and $\tyCallee[0]\tra{}^{\ast}\tyRes{\vGeneric}{\tyT}$.
By Lemma~\ref{lem:mapast-live} we have that $\tyT\wbarb{\actname}$ iff 
$\MAPAST{\procP}\wbarb{\actname}$, and by $\procProg\in\ac$ we can conclude 
that $\tyT\wbarb{\actname}$ implies $\procP\wbarb{\actname}$. Thus, $\tyEqn$ 
being live entails $\procProg$ is live.
\end{proof}

\section{Go implementations of examples}
\label{app:example}
This section gives two implementations of the Dining Philosophers problem with shared memory, 
used in our benchmarks, and the implementation of the concurrent Prime Sieve algorithm 
we based the Example in \S~\ref{sect:safe-live} on.

\begin{wrapfigure}[14]{l}{0.9\linewidth}
{
\lstinputlisting[language=Go,style=golang,firstline=3]{./code/prime-sieve.go}
}
\vspace{-3mm}
\caption{Go implementation of a concurrent Prime Sieve algorithm~\cite{LNTY2017,NY2016}}
\label{fig:prime-sieve}
\end{wrapfigure}

\begin{wrapfigure}{l}{0.9\linewidth}
{
\lstinputlisting[language=Go,style=golang,firstline=10]{./code/dinephil5-race.go}
}
\vspace{-3mm}
\caption{Go implementation of the Dining Philosophers problem (unsafe)}
\label{fig:dinephil5-race}
\vspace{-5mm}
\end{wrapfigure}
\begin{wrapfigure}{l}{0.9\linewidth}
{
\lstinputlisting[language=Go,style=golang,firstline=10]{./code/dinephil5-fix.go}
}
\vspace{-3mm}
\caption{Go implementation of the Dining Philosophers problem (safe)}
\label{fig:dinephil5-fix}
\vspace{-5mm}
\end{wrapfigure}

}{}

\end{document}